\def\oper{{\mathchoice{\rm 1\mskip-4mu l}{\rm 1\mskip-4mu l}
{\rm 1\mskip-4.5mu l}{\rm 1\mskip-5mu l}}}
\def\<{\langle}
\def\>{\rangle}
\colorlet{darkred}{red!55!black}
\colorlet{darkgreen}{green!25!black}
\colorlet{darkblue}{blue!60!black}
\newtheorem{proposition}{Proposition}
\newtheorem{remark}{Remark}
\newtheorem{example}{Example}
\newtheorem{corollary}{Corollary}
\newtheorem{thm}{Theorem}
\renewcommand{\eprint}[2][]{\href{https://arxiv.org/abs/#2}{arXiv:~\nolinkurl{#2}}} 
\newcommand{\Md}{\mathcal{M}_d(\mathbb{C})}
\begin{document}   
	
	
	\title{
 A universal constraint for relaxation rates for quantum Markov generators: \\
 complete positivity and beyond\footnote{Dedicated to Professor Ryszard Horodecki on the occasion of his 80th birthday.}  }

 \author{Dariusz Chru\'sci\'nski}
	\email{darch@fizyka.umk.pl}
	\affiliation{Institute of Physics, Faculty of Physics, Astronomy and Informatics, Nicolaus Copernicus University, Grudziadzka 5/7, 87-100 Toru\'n, Poland}

\author{Frederik vom Ende}
\email{frederik.vomende@gmail.com}
\affiliation{Dahlem Center for Complex Quantum
Systems, Freie Universit\"at Berlin, Arnimallee 14, 14195 Berlin, Germany}

 \author{Gen Kimura}
	\email{gen.kimura.quant@gmail.com}
	\affiliation{College of Systems Engineering and Science, Shibaura Institute of Technology, Saitama 330-8570, Japan}

	\author{Paolo Muratore-Ginanneschi}
	\email{paolo.muratore-ginanneschi@helsinki.fi}
	\affiliation{Department of Mathematics and Statistics, University of Helsinki PL 68, FI-00014, Finland}

	\date{\today}
	
	\begin{abstract}
Relaxation rates are key characteristics of quantum processes, as they determine how quickly a quantum system thermalizes, equilibrates, decoheres, and dissipates. While they play a crucial role in theoretical analyses, relaxation rates are also often directly accessible through experimental measurements. Recently, it was shown that for quantum processes governed by Markovian semigroups, the relaxation rates satisfy a universal constraint: the maximal rate is upper-bounded by the sum of all rates divided by the dimension of the Hilbert space. This bound, initially conjectured a few years ago, was only recently proven using classical Lyapunov theory. In this work, we present a new, purely algebraic proof of this constraint. Remarkably, our approach is not only more direct but also allows for a natural generalization beyond completely positive semigroups. We show that complete positivity can be relaxed to 2-positivity without affecting the validity of the constraint. This reveals that the bound is more subtle than previously understood: 2-positivity is necessary, but even when further relaxed to Schwarz maps, a slightly weaker---yet still non-trivial---universal constraint still holds. Finally, we explore the connection between these bounds and the number of steady states in quantum processes, uncovering a deeper structure underlying their behavior.
	\end{abstract}
	
	\pacs{}

	\maketitle
	
\section{Introduction}

The dynamics of an open quantum system is represented by a quantum-dynamical map, i.e., a family of completely positive trace-preserving (CPTP) maps $\{\Lambda_{t,0}\}_{t\geq 0}$ acting on states \cite{Davies1976,Alicki-Lendi,BrPe2002,GardinerZoller2004,LidD2019,ChrD2022,RiHu2012,BeFlo2005}. Given an initial state of the system represented by a density  operator $\bm{\rho}$ one finds the corresponding state at time $t$ by applying the map $\Lambda_{t,0}$ (quantum channel) to the initial state, i.e., $\bm{\rho}_t = \Lambda_{t,0}(\bm{\rho})$. If the system is isolated from the environment, the map $\Lambda_{t,0}$ realizes the standard unitary Schr\"odinger evolution $\Lambda_{t,0}(\bm{\rho}) = U_{t,0} \bm{\rho} U_{t,0}^\dagger$, where $U_{t,0}$ is a time-dependent unitary operator. In general, however, a quantum-dynamical map is not unitary and describes processes such as decoherence, dissipation, and relaxation that go beyond the unitary scenario \cite{BrPe2002,GardinerZoller2004,LidD2019}. In the Markovian regime, {which typically assumes weak coupling and a separation of time scales between the system and environment,} the evolution of the system's state is governed by the celebrated Markovian evolution \cite{GKLS,LinG1976}
$$
\dot{\bm{\rho}}_t = \mathfrak{L}(\bm{\rho}_t)\,,
$$
where the so-called Gorini-Kossakowski-Lindblad-Sudarshan (GKLS) generator takes the following form:
\begin{align}
	\mathfrak{L}(\bm{{\rho}}) =-\imath\,\left[\operatorname{H},\bm{\rho}\right]+
\sum_{\ell=1}^{d^{2}-1}\gamma_{\ell} \left( \operatorname{L}_{\ell}\bm{\rho} \operatorname{L}_{\ell}^\dagger - \frac 12 \{ \operatorname{L}_{\ell}^\dagger \operatorname{L}_{\ell},\bm{\rho}\} \right) \,,
		\label{LGKS}
	\end{align}
with a Hermitian effective Hamiltonian $\operatorname{H}$, noise (or jump) operators $\operatorname{L}_\ell$, and positive transition rates $\gamma_\ell \geq 0$ {for all $\ell=1,2,\ldots,d^2-1$}. Actually, positivity of $\gamma_\ell$ is necessary and sufficient to generate a semigroup of CPTP maps $\Lambda_{t,0} = e^{t \mathfrak{L}}$ \cite{GKLS,LinG1976,Alicki-Lendi}. Today, quantum Markovian semigroups describe a plethora of important processes, including quantum optical systems \cite{BrPe2002}, control of quantum information processing devices \cite{WiMi2009}, and heat transfer in solid-state quantum integrated circuits \cite{PeKa2021}. It should be stressed that the transition rates $\gamma_\ell$ are not directly observable quantities. What is often measured in the laboratory are relaxation rates $\Gamma_\ell$ that control the speed of various processes like relaxation, decoherence, equilibration, or thermalization.
Contrary to the $\gamma_\ell$'s which in principle can be arbitrarily chosen, the relaxation rates $\Gamma_\ell$ satisfy some additional constraints implied by the very requirement of complete positivity. A key example of such constraints is the celebrated relation between longitudinal $\Gamma_{\rm L}$ and transversal rates $\Gamma_{\rm T}$ for qubit systems \cite{GKLS}:
\begin{equation} \label{LT}
 2 \Gamma_{\rm T} \geq \Gamma_{\rm L} \,.
\end{equation}
More generally, it was shown \cite{KimG2002} that (\ref{LT}) provides a special example of a universal relation for qubit systems
\begin{equation} \label{Gen}
 \Gamma_{\rm max} \leq \frac 12 \sum_{\ell=1}^{3} \Gamma_\ell \,,
\end{equation}
where $\Gamma_{\rm max}$ denotes the maximal rate. Recently, it was also proven \cite{JPA2025} that for general $d$-level systems the following constraint holds true
\begin{equation} \label{CON}
 \Gamma_{\rm max} \leq \frac 1d \sum_{\ell=1}^{d^2-1} \Gamma_\ell \,,
\end{equation}
which reduces to (\ref{Gen}) for $d=2$. This was already conjectured in \cite{ChKiKoSh2021}, where, together with its proof in several special cases, it was also shown to be tight.
It was stressed \cite{KimG2002,ChKiKoSh2021,JPA2025} that (\ref{CON}) provides a physical manifestation of the very concept of complete positivity. Interestingly, the proof obtained in \cite{JPA2025} is based on
 the theory of Lyapunov exponents,
a central tool in the analysis of classical dynamical systems \cite{OseV1968,BeGaGiSt1980, SkoC2008,Pikovsky2016}, {especially} in the study of classical \cite{Devaney1986,CoEc2006,CeCeVu2009}, and quantum chaos \cite{Gutzwiller1991,Stoeckmann1999,Haake2018}, and fully developed turbulence theory \cite{BoJePaVu2005}.
In the context of quantum physics, Lyapunov exponents have also been applied to the analysis of quantum channels \cite{Daniel-1}, strongly correlated many-body systems \cite{Gharibyan2019}, and out-of-time order correlators \cite{Bergamasco2023}. In particular, \cite{Maldacena2016} conjectures the existence of a universal bound on the maximal Lyapunov exponent controlling how fast out-of-time-order correlation functions can grow. In \cite{JPA2025} it was shown that the maximal relaxation rate of the quantum generator $\mathfrak{L}$ coincides with the characteristic Lyapunov exponent for the classical Pauli rate equation for the probability vector corresponding to the eigenvalues of time evolved density operator $\bm{\rho}_t$. Then basic tools from Lyapunov theory allow one to derive the bound (\ref{CON}) (cf.~\cite{JPA2025} for details of the proof).

In this paper, we first present a new purely algebraic proof of (\ref{CON}) that does not rely on Lyapunov theory at all. Furthermore, the proof does not use the explicit form of the generator (\ref{LGKS}) but only uses the fact that $\mathfrak{L}$ defines a so-called conditionally completely positive map \cite{Evans1,Evans2}. {This, in turn, leads to the natural question of whether complete positivity is even needed for (\ref{CON}) to hold.} It should be stressed that the previously mentioned proof based on Lyapunov theory
explicitly relies on the assumption of positive rates $\gamma_\ell \geq 0$, otherwise only providing a possibly non-tight upper bound in terms of their absolute values.

Surprisingly, it turns out that complete positivity is not essential for the bound (\ref{CON}). We show that if $\mathfrak{L}$ generates a semigroup of 2-positive trace-preserving maps, then the corresponding relaxation rates satisfy (\ref{CON}). Recall that for a qubit system 2-positivity and complete positivity coincide and hence indeed both (\ref{LT}) and (\ref{Gen}) require complete positivity. For $d$-level systems with $d>2$, however, only 2-positivity matters. Actually, an interesting debate has developed over the question of whether complete positivity is necessary to represent quantum-dynamics \cite{PhysRevLett.73.1060,Shaji1,Shaji2,Terno,DoLi2016,Laflamme2023}.

{Finally, we relax the condition of 2-positivity and consider semigroups of unital Schwarz maps (in the Heisenberg picture). Let us recall that a unital map $\Phi$ is called a {\em Schwarz map} if it satisfies the operator Schwarz inequality, i.e.
\begin{equation}   \label{OSI}
    \Phi(X^\dagger X) \geq \Phi(X)^\dagger \Phi(X) ,
\end{equation}
for all operators $X$. Clearly any Schwarz map is necessarily positive. However, there exist positive unital maps which do not satisfy (\ref{OSI}), the simplest example being a transposition map.  Interestingly, any positive unital map satisfies (\ref{OSI}) for all Hermitian operators $X$ (in this case one calls (\ref{OSI}) a Kadison inequality \cite{Kadison1952,Kadison83}) \cite[Thm.~2.3.2]{Bhatia07}. It is well known \cite{Paulsen} that any 2-positive unital map satisfies (\ref{OSI}) and hence Schwarz maps interpolate between unital positive and 2-positive maps. In particular any unital completely positive map satisfies (\ref{OSI}). It is, therefore, natural to ask whether relaxing 2-positivity to the Schwarz property also induces a universal constraint for the relaxation rates. It turns out that (\ref{CON}) is modified to the following universal constraint
\begin{equation} \label{CON-S}
 \Gamma_{\rm max} \leq \frac{2}{d+1} \sum_{\ell=1}^{d^2-1} \Gamma_\ell \,,
\end{equation}
where $\frac2{d+1}$ is the harmonic mean of $\frac1d$ (prefactor in the $2$-positive case) and $1$ (prefactor in the positive case); this may be seen as a manifestation of the fact that Schwarz maps interpolate between positivity and $2$-positivity.
In the case of qubits the above constraint reduces to $\Gamma_{\rm max} \leq \frac{2}{3} \sum_{\ell=1}^{3} \Gamma_\ell $ which was derived recently in \cite{CKM24}. In summary, we establish the following tight constraints for relaxation rates }
\begin{tcolorbox}
\begin{equation}\label{mainres1}
 \Gamma_k \leq c_d \sum_{\ell=1}^{d^2-1}\Gamma_\ell \,,
\end{equation}
where the universal constant $c_d$ reads
\begin{equation}\label{mainres2}
 c_d =
 \begin{cases}
 \begin{array}{ll} 1\ , & \ \ \mbox{for positive trace-preserving dynamics}
 \\[0.4cm]
 \dfrac {2}{d+1}\ , & \ \ \mbox{for Schwarz dynamics}
 \\[0.4cm]
 \dfrac {1}{d} \ , & \ \ \mbox{for at least 2-positive trace-preserving dynamics} \end{array}
 \end{cases}
\end{equation}
\end{tcolorbox}

{Our results also admit an extension to  master equations governed by time-dependent generators
 $\dot{\bm{\rho}}_t = \mathfrak{L}_t(\bm{\rho}_t)$, where $\{\mathfrak{L}_t\}_{t \ge 0}$ is a family of sufficiently regular generators to satisfy global existence and uniqueness of solutions.
It gives rise to a dynamical map, i.e., a family of CPTP maps $\{\Lambda_{t,0}\}_{t\geq 0}$, defined by the following formula
\begin{equation}
    \Lambda_{t,0} = \mathcal{T} \exp\left( \int_0^t \mathfrak{L}_\tau d\tau \right) ,
\end{equation}
where $\mathcal{T}$ stands for a chronological time-ordering operator. Note that for any $s< t$ one finds
    $\Lambda_{t,0} = \Lambda_{t,s} \circ \Lambda_{s,0}$,
where we introduced a 2-parameter family of {\em propagators}
 $   \Lambda_{t,s}  := \Lambda_{t,0} \circ \Lambda_{s,0}^{-1} = \mathcal{T} \exp\left( \int_s^t \mathfrak{L}_\tau d\tau \right)$.
This family enjoys the following composition law (which immediately follows from the basic properties of the first-order linear differential equation)
\begin{equation}
    \Lambda_{t,s}  = \Lambda_{t,u} \circ \Lambda_{u,s} ,\ \ \ \ \   \forall \ t\ge u \ge s\ge 0 .
\end{equation}
However, it should be stressed that---contrary to $\Lambda_{t,0}$---the propagators $\Lambda_{t,s}$ need not be completely positive for $s > 0$.
In what follows we call the quantum evolution represented by $\{\Lambda_{t,0}\}_{t \geq 0}$ \textit{Markovian} if each propagator $\Lambda_{t,s}$ is CPTP for all $s \leq t$. The above property is often referred to as CP-divisibility \cite{RiHuPl2010,NM1} or time-dependent Markovianity \cite{Wolf08a,OSID_thermal_res}. In particular if $\mathfrak{L}_t\equiv\mathfrak{L}$ is time-independent, then $ \Lambda_{t,s} = e^{(t-s)\mathfrak{L}}$ is CPTP whenever $\mathfrak{L}$ is a legitimate GKLS generator. Hence, the concept of Markovianity defined via CP-divisibility of the corresponding dynamical map is a natural generalization of Markovian semigroups. It should be stressed, however, that in the literature there are other approaches to (non-)Markovianity \cite{NM1,NM2,NM3,ChrD2022,WoEiCuCi2008}. In particular the
review \cite{NM4} shows an intricate hierarchy of various notions and clearly indicates that quantum non-Markovianity is highly context-dependent.
An evolution is Markovian (CP-divisible) iff the corresponding generator $\mathfrak{L}_t$ has the standard GKLS form for all $t\geq 0$, i.e., it is represented by (\ref{LGKS}) with time-dependent Hamiltonian $\operatorname{H}_t$, noise operators $\operatorname{L}_{\ell,t}$, and transition rates $\gamma_{\ell,t} \geq 0$. Hence, whenever at least one of the rates  $\gamma_{\ell,t}$ is temporally negative the evolution is non-Markovian. Now, the relaxation rates $\Gamma_{\ell,t}$ are also time-dependent. In Section \ref{S-LT} it is shown that if all propagators $\Lambda_{t,s}$ are 2-positive for $t > s$, then the time-local relaxation rates $\Gamma_{\ell,t}$ satisfy the time-dependent version of (\ref{CON}). In particular this applies to any Markovian (CP-divisible) evolution.
}

The paper is organized as follows. In Section \ref{S-II} we provide an algebraic proof of (\ref{CON}) for semigroups of completely positive trace-preserving maps. The proof is split into four simple steps.  The key advantage of the presented proof is that it does not use any explicit form of the generator (\ref{LGKS}). In particular, the proof does not make direct use of the condition that all rates $\gamma_\ell$ are positive. This fact opens the possibility to generalize (\ref{CON}) beyond the scenario where the dynamics are completely positive. In Section \ref{S-III} it is shown that (\ref{CON}) holds for all generators giving rise to 2-positive trace-preserving semigroups and Section \ref{S-IV} discusses semigroups of unital Schwarz maps. {Section \ref{S-V} shows that the bounds \eqref{mainres1}–\eqref{mainres2} also imply the upper bound on the number of stationary states of such dynamics, as recently discussed in \cite{Facchi-Amato}.} {Time-dependent generators are discussed in Section \ref{S-LT}}.
Final conclusions are collected in Section   \ref{S-C}. Technical proofs of several propositions are included in the Supplementary Information File \cite{SIF}.

\section{Constraint for completely positive trace-preserving case}   \label{S-II}

{We start our analysis by proving the following
\begin{thm} \label{Pro-0} The relaxation rates $\Gamma_\ell$ of an arbitrary GKLS generator of a $d$-level quantum system satisfy
    \begin{equation} \label{CON-!}
 \Gamma_{\rm max} \leq \frac 1d \sum_{\ell=1}^{d^2-1} \Gamma_\ell \, .
\end{equation}
\end{thm}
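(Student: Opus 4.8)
The plan is to convert the statement into a spectral inequality and reduce it to a single bound on the real parts of the eigenvalues of $\mathfrak{L}$. Writing $\Gamma_\ell=-\mathrm{Re}\,\lambda_\ell$, where $\{0,\lambda_1,\dots,\lambda_{d^2-1}\}$ is the spectrum of $\mathfrak{L}$ (the extra $0$ being the stationary direction forced by trace preservation), I would first evaluate the sum of all rates as a super-operator trace. Choosing the canonical form of (\ref{LGKS}) with traceless jump operators normalised by $\mathrm{Tr}(L_\ell^\dagger L_{\ell'})=\delta_{\ell\ell'}$, the gain terms $L_\ell\rho L_\ell^\dagger$ and the Hamiltonian contribute nothing to $\mathrm{Tr}\,\mathfrak{L}$, while each anticommutator term contributes $-d\,\gamma_\ell$, so that
\begin{equation}
\sum_{\ell=1}^{d^2-1}\Gamma_\ell=-\mathrm{Re}\,\mathrm{Tr}\,\mathfrak{L}=d\sum_{\ell}\gamma_\ell .
\end{equation}
Thus (\ref{CON-!}) is equivalent to the dimension-free estimate $\Gamma_{\mathrm{max}}\le\sum_\ell\gamma_\ell$, i.e.\ to the claim that every eigenvalue satisfies $\mathrm{Re}\,\lambda\ge-\sum_\ell\gamma_\ell$.

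Next I would isolate one eigenvalue. Let $X$ be a normalised eigen-operator, $\mathfrak{L}(X)=\lambda X$, and split $\mathfrak{L}=\Phi+\mathcal{A}$ with $\Phi(\rho)=\sum_\ell\gamma_\ell L_\ell\rho L_\ell^\dagger$ completely positive and $\mathcal{A}(\rho)=-Z\rho-\rho Z^\dagger$, where $Z=\tfrac12 K+\imath H$ and $K=\sum_\ell\gamma_\ell L_\ell^\dagger L_\ell\ge0$. Taking the Hilbert--Schmidt product with $X$ eliminates the Hamiltonian and gives
\begin{equation}
\mathrm{Re}\,\lambda=\langle X,\Phi(X)\rangle-\tfrac12\,\mathrm{Tr}\!\big(K(XX^\dagger+X^\dagger X)\big).
\end{equation}
The anticommutator term is bounded below by $-\mathrm{Tr}(K)=-\sum_\ell\gamma_\ell$, so the whole difficulty is concentrated in the cross term $\langle X,\Phi(X)\rangle$, which is real but can be negative (already for amplitude damping). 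Everything rests on showing that this negativity never drives $\mathrm{Re}\,\lambda$ below $-\sum_\ell\gamma_\ell$; this is precisely the point at which conditional complete positivity must enter.

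The controlling input is complete (hence $2$-)positivity of $\Phi$, exploited through the block inequality
\begin{equation}
\begin{pmatrix}\Phi(\oper) & \Phi(X)\\ \Phi(X)^\dagger & \Phi(X^\dagger X)\end{pmatrix}\ge0 ,
\end{equation}
into which I would substitute the eigenvalue equation as $\Phi(X)=\lambda X+ZX+XZ^\dagger$, and then pair the block with a test configuration summed over an orthonormal basis, chosen so that $\lambda$ is isolated while the remaining entries reassemble into $\mathrm{Tr}(K)$. I expect this to be the main obstacle, for a sharp reason: estimating $\langle X,\Phi(X)\rangle$ by a naive operator Cauchy--Schwarz only reproduces the Rayleigh-quotient bound, and that is genuinely too weak, since the Hermitian part of $\mathfrak{L}$ can carry eigenvalues strictly below $-\sum_\ell\gamma_\ell$. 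Any argument that does not use that $X$ is a true eigenvector — equivalently, that the component of $\Phi(X)$ orthogonal to $X$ is pinned to $ZX+XZ^\dagger$ — loses a factor and degrades the constant to $2\sum_\ell\gamma_\ell$. The heart of the proof is therefore to select the pairing so that the eigen-constraint cancels exactly this slack and restores the sharp constant $\sum_\ell\gamma_\ell$.

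Finally, I would note that the scheme is structurally stable and explains the hierarchy in (\ref{mainres2}): replacing the full block positivity by the operator Schwarz inequality (\ref{OSI}) relaxes the bound to the harmonic-mean constant $\tfrac{2}{d+1}$, whereas positivity alone controls only the sign of the rates and yields the trivial constant $1$. For the theorem at hand the completely positive case is all that is required, and the argument deliberately avoids the Lyapunov-exponent machinery of the earlier proof.
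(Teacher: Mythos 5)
Your setup is sound and parallels the paper's starting point: the trace identity $\sum_\ell\Gamma_\ell=-\operatorname{Tr}\mathfrak{L}=d\sum_\ell\gamma_\ell$ (the paper's Eq.~(\ref{TrG})) is correct, the reduction of the theorem to ``every eigenvalue satisfies $\mathrm{Re}\,\lambda\ge-\sum_\ell\gamma_\ell$'' is valid, and your diagnosis that a naive Cauchy--Schwarz estimate on $\langle X,\Phi(X)\rangle$ loses the sharp constant is accurate. But the proposal has a genuine gap exactly where you place the ``heart of the proof'': the pairing of the block inequality with a ``test configuration summed over an orthonormal basis, chosen so that $\lambda$ is isolated'' is never constructed, and you explicitly defer it as the main obstacle. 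A proof whose central cancellation is announced rather than exhibited is not a proof, and in this case the announcement hides the real difficulty: you work with an arbitrary eigenvalue $\lambda$ and eigen-operator $X$, but for complex $\lambda$ the eigenvector cannot be taken Hermitian, so $X$ has no orthonormal eigenbasis to serve as your test configuration, $XX^\dagger\neq X^\dagger X$, and even your side claim that $\langle X,\Phi(X)\rangle$ is real fails. There is no evident way to ``isolate $\lambda$'' by a basis choice in this generality.

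The paper's proof shows what is missing and why it cannot be skipped. Before any basis argument, the problem is reduced to generators with \emph{purely real} spectrum: one passes to the KMS adjoint $\mathfrak{L}^{\#}=V_{\bm\omega}^{-1}\circ\mathfrak{L}\circ V_{\bm\omega}$ with respect to a faithful stationary state $\bm\omega$, forms the KMS-self-adjoint generator $\widetilde{\mathfrak{L}}=\tfrac12(\mathfrak{L}^\dagger+\mathfrak{L}^{\#})$, and invokes the Bendixson--Hirsch inequality to bound $\Gamma_{\rm max}$ of $\mathfrak{L}$ by $\widetilde\Gamma_{\rm max}$ of $\widetilde{\mathfrak{L}}$ while preserving $\operatorname{Tr}\widetilde{\mathfrak{L}}=\operatorname{Tr}\mathfrak{L}$. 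Only after this step does your desired pairing exist: a real eigenvalue has a Hermitian eigenvector $X=\sum_k x_k|k\rangle\langle k|$, and in its eigenbasis the classical generator $\mathcal{K}_{ij}=\langle i|\mathfrak{L}(|j\rangle\langle j|)|i\rangle$ satisfies $\mathcal{K}\bm{x}=\lambda\bm{x}$, while conditional complete positivity evaluated on $\sum_k|k\otimes k\rangle$ yields $\operatorname{Tr}\mathfrak{L}\le d\operatorname{Tr}\mathcal{K}$, giving $\Gamma_k\le\operatorname{Tr}(-\mathcal{K})\le\tfrac1d\sum_\ell\Gamma_\ell$. Note also that the symmetrization must be done in the KMS inner product, not Hilbert--Schmidt: as the paper's Remark~2 points out (and as your own observation about the Hermitian part of $\mathfrak{L}$ suggests), $\tfrac12(\mathfrak{L}+\mathfrak{L}^\dagger)$ is not a legitimate Heisenberg-picture generator and the final inequality can fail for it. So your proposal is missing precisely the two ideas---KMS symmetrization plus Bendixson--Hirsch, and the resulting passage to Hermitian eigenvectors and the classical generator $\mathcal{K}$---that convert the correct setup into a complete argument.
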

The above theorem was originally proved in \cite{JPA2025} using basic tools from Lyapunov theory.}
Here we provide a purely algebraic proof which relies only on the defining property of the GKLS generator: $\mathfrak{L}$ generates a semigroup of completely positive maps iff it is conditionally completely positive \cite{Evans1,Evans2,Wolf08b,ChrD2022}, that is,
\begin{equation} \label{CCP}
 (\oper_d \otimes \oper_d - P^+)({\rm id}_d \otimes \mathfrak{L})(P^+) (\oper_d \otimes \oper_d - P^+) \geq 0\,,
\end{equation}
where $P^+ := |\psi^+\>\<\psi^+|$, $|\psi^+\> = \frac{1}{\sqrt{d}}\sum_{i=1}^d |i \otimes i\>$ is the canonical maximally entangled state in $\mathbb{C}^d \otimes \mathbb{C}^d$, and $\{|1\>,\ldots,|d\>\}$ is the standard orthonormal basis in $\mathbb{C}^d$. Note that (\ref{CCP}) is strictly weaker than the condition of complete positivity which requires that the Choi matrix of the map $\Phi$ is positive
\begin{equation} \label{C-Phi}
 C_\Phi := ({\rm id}_d \otimes \Phi)(P^+) \geq 0 \,.
\end{equation}
Indeed, conditional complete positivity requires positivity of the Choi matrix of $\mathfrak{L}$ \textit{only} on the subspace orthogonal to $|\psi^+\>$. Now, the spectrum of $\mathfrak{L}$ is symmetric w.r.t.~the real line (because $\mathfrak{L}$ is Hermiticity-preserving) and all eigenvalues $\lambda_\alpha $ of $\mathfrak{L}$ satisfy ${\rm Re}\lambda_\alpha \leq 0$.  Finally, $\mathfrak{L}$ admits a leading eigenvalue $\lambda_0 =0$ which corresponds to the stationary state $\mathfrak{L}(\bm{\omega})=0$. With this, the relaxation rates are defined as
\begin{equation}
 \Gamma_\ell = - {\rm Re}\, \lambda_\ell \ , \ \ \ \ell=1,\ldots,d^2-1 \,.
\end{equation}
It is {also} evident that
\begin{equation} \label{TrG}
 \sum_{\ell=1}^{d^2-1} \Gamma_\ell = -{\rm Tr}\mathfrak{L} \,.
\end{equation}
{Let us stress again that} the above properties of the spectrum of the generator have nothing to do with complete positivity{, and that any} generator giving rise to a semigroup of positive and trace-preserving maps enjoys the same properties.

 Now, we are ready to provide the proof {of \eqref{CON}} which, for the reader convenience, is divided into {four} simple steps.
\begin{itemize}
 \item Step 1. Given a linear map $\Phi$ let us denote by $\Phi^\dagger$ its adjoint w.r.t.~the Hilbert-Schmidt inner product
\begin{equation}
 (\Phi(A),B)_{\rm HS} = (A,\Phi^\dagger(B))_{\rm HS} \,,
\end{equation}
where $(A,B)_{\rm HS} := {\rm Tr}(A^\dagger B)$. In particular the adjoint of the GKLS generator (\ref{LGKS}) reads as follows
\begin{align}
	\mathfrak{L}^\dagger(\bm{{X}}) = \imath\,\left[\operatorname{H},\bm{X}\right]+
\sum_{\ell=1}^{d^{2}-1}\gamma_{\ell} \left( \operatorname{L}^\dagger_{\ell}\bm{X} \operatorname{L}_{\ell} - \frac 12 \{ \operatorname{L}_{\ell}^\dagger \operatorname{L}_{\ell},\bm{X}\} \right) \,,
		\label{LGKS-ad}
	\end{align}
and it satisfies $\mathfrak{L}^\dagger(\oper_d)=0$. Such a generator $\mathfrak{L}^\dagger$ gives rise to a semigroup of completely positive unital maps. It is worth noticing that the authors of \cite{GKLS} adopted Schr\"odinger's picture of the dynamics to derive (\ref{LGKS}), whereas Lindblad \cite{LinG1976} obtained (\ref{LGKS-ad}) using Heisenberg's picture. Both $\mathfrak{L}$ and $\mathfrak{L}^\dagger$ have the same spectrum (again, due to Hermiticity-preservation) and hence they have the same relaxation rates $\Gamma_\ell$. Therefore, the universal constraint (\ref{CON}) does not depend on whether one uses the Schr\"odinger or the Heisenberg picture. It is clear that $\mathfrak{L}^\dagger$ generates a semigroup of completely positive maps if and only if it is conditionally completely positive, that is, if it satisfies (\ref{CCP}).

Let us now assume that $\mathfrak{L}$ admits a faithful stationary state $\bm{\omega}$, in other words that there exists a strictly positive state operator $\bm{\omega}>0 $ satisfying $\mathfrak{L}(\bm{\omega})=0$. We explain in Remark~1 below why such an assumption is not restrictive. Next, we recall the definition of the \emph{Kubo-Martin-Schwinger} (KMS) inner product w.r.t.~$\omega$:
\begin{equation} \label{KMS}
 (A,B)_{\bm{\omega}} := (A,\bm{\omega}^{\frac 12}B\bm{\omega}^{\frac 12})_{\rm HS} .
\end{equation}
This is just a special case of the following family of inner products:
\begin{equation} \label{s}
 (A,B)_s := (A,\bm{\omega}^{s}B\bm{\omega}^{1-s})_{\rm HS} \ , \ \ \ s\in [0,1] .
\end{equation}
For $s=0$ this recovers the well-known \emph{Gelfand–Naimark–Segal} (GNS) inner product and for $s=1/2$ it reduces to the symmetric formula (\ref{KMS}) (cf.~\cite{Fagnola2010,AmCa2021} for more details). Denote by $\mathfrak{L}^\#$ the adjoint of $\mathfrak{L}^\dagger$ w.r.t. the KMS inner product
\begin{equation}\label{eq:def_L_sharp}
 (\mathfrak{L}^\#(A),B)_{\bm{\omega}} := (A,\mathfrak{L}^\dagger(B))_{\bm{\omega}} \,.
\end{equation}
Combining (\ref{KMS}) with (\ref{eq:def_L_sharp}) readily implies the following formula for $ \mathfrak{L}^\#$
\begin{equation}\label{eq:L_sharp_L_relation}
 \mathfrak{L}^\# = V^{-1}_{\bm{\omega}} \circ \mathfrak{L} \circ V_{\bm{\omega}} \,,
\end{equation}
where
\begin{equation}
 V_{\bm{\omega}}(X) = \bm{\omega}^{\frac 12}X\bm{\omega}^{\frac 12} \,, \ \ \ \
 V^{-1}_{\bm{\omega}}(X) = \bm{\omega}^{-\frac 12}X\bm{\omega}^{-\frac 12} \,.
\end{equation}
Note that if $\bm{\omega} = \oper_d/d$ is the maximally mixed state, then (\ref{eq:L_sharp_L_relation}) correctly reproduces $\mathfrak{L}^\# =(\mathfrak{L}^\dagger)^\dagger= \mathfrak{L}$.
\begin{proposition}
The KMS-adjoint $\mathfrak{L}^\#$ defines a  GKLS generator in the Heisenberg picture.
\end{proposition}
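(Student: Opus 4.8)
The plan is to exploit the similarity relation \eqref{eq:L_sharp_L_relation} directly, rather than computing the Lindblad data of $\mathfrak{L}^\#$ by hand. Since $\mathfrak{L}^\# = V^{-1}_{\bm{\omega}} \circ \mathfrak{L} \circ V_{\bm{\omega}}$ is obtained from $\mathfrak{L}$ by conjugation with the fixed invertible superoperator $V_{\bm{\omega}}$, the cancellations $V_{\bm{\omega}} \circ V^{-1}_{\bm{\omega}} = \mathrm{id}$ give $(\mathfrak{L}^\#)^n = V^{-1}_{\bm{\omega}} \circ \mathfrak{L}^n \circ V_{\bm{\omega}}$ for every $n$, and summing the exponential series yields the key identity
\begin{equation}
 e^{t\mathfrak{L}^\#} = V^{-1}_{\bm{\omega}} \circ e^{t\mathfrak{L}} \circ V_{\bm{\omega}} , \qquad t \geq 0 .
\end{equation}
The strategy is then to read off the two defining properties of a Heisenberg-picture GKLS generator---complete positivity of the generated semigroup and unitality---from this factorization.

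First I would establish complete positivity. The maps $V_{\bm{\omega}}(X) = \bm{\omega}^{1/2} X \bm{\omega}^{1/2}$ and $V^{-1}_{\bm{\omega}}(X) = \bm{\omega}^{-1/2} X \bm{\omega}^{-1/2}$ are conjugations by the Hermitian operators $\bm{\omega}^{\pm 1/2}$; being single-Kraus maps of the form $X \mapsto K X K^\dagger$ they are completely positive, and here the faithfulness assumption $\bm{\omega} > 0$ is exactly what guarantees that $\bm{\omega}^{-1/2}$ exists, so that $V^{-1}_{\bm{\omega}}$ is well defined and CP. Since $\mathfrak{L}$ is a bona fide GKLS generator, $e^{t\mathfrak{L}}$ is completely positive for all $t \geq 0$, and therefore $e^{t\mathfrak{L}^\#}$ is a composition of three completely positive maps, hence completely positive.

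Next I would check unitality. Evaluating on the identity and using $V_{\bm{\omega}}(\oper_d) = \bm{\omega}$ together with the stationarity $\mathfrak{L}(\bm{\omega}) = 0$ gives $\mathfrak{L}^\#(\oper_d) = V^{-1}_{\bm{\omega}}(\mathfrak{L}(\bm{\omega})) = 0$, so that $e^{t\mathfrak{L}^\#}(\oper_d) = \oper_d$ for all $t$. Finally, since $V_{\bm{\omega}}$ and $\mathfrak{L}$ are Hermiticity-preserving, so is $\mathfrak{L}^\#$. Having shown that $\mathfrak{L}^\#$ is Hermiticity-preserving and generates a semigroup of completely positive unital maps, the GKLS characterization theorem (equivalently, conditional complete positivity of \eqref{CCP} together with unitality) forces $\mathfrak{L}^\#$ to be of the Heisenberg-picture form \eqref{LGKS-ad}, with conjugated jump operators $\bm{\omega}^{-1/2} \operatorname{L}_\ell \bm{\omega}^{1/2}$ and unchanged rates $\gamma_\ell$.

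The main obstacle is not any single estimate but rather making sure that ``generates a completely positive unital semigroup'' is legitimately equivalent to ``is a GKLS generator in the Heisenberg picture''; this is where one must invoke the GKLS/Lindblad characterization rather than verify the Lindblad form by direct, and messier, computation. If one instead insisted on an explicit derivation of the form \eqref{LGKS-ad}, the delicate step would be to collect all the terms that are not already of the clean jump form $\tilde{\operatorname{L}}_\ell X \tilde{\operatorname{L}}_\ell^\dagger$ into a single effective Hamiltonian plus anticommutator, while confirming that the transition rates remain non-negative.
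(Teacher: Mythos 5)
Your proposal is correct and follows essentially the same route as the paper: exponentiate the similarity relation $\mathfrak{L}^\# = V^{-1}_{\bm{\omega}} \circ \mathfrak{L} \circ V_{\bm{\omega}}$ to get $e^{t\mathfrak{L}^\#} = V^{-1}_{\bm{\omega}} \circ e^{t\mathfrak{L}} \circ V_{\bm{\omega}}$, conclude complete positivity from the composition of the three CP maps, and obtain unitality from $\mathfrak{L}(\bm{\omega})=0$. Your added remarks (faithfulness of $\bm{\omega}$ guaranteeing that $V^{-1}_{\bm{\omega}}$ is well defined, and the explicit appeal to the GKLS/Lindblad characterization to pass from ``generates a CP unital semigroup'' to ``Heisenberg-picture GKLS generator'') only make explicit what the paper leaves implicit.
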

Indeed, $\mathfrak{L}^\#$ gives rise to the following semigroup
\begin{equation}
 e^{t \mathfrak{L}^\#} = e^{t V^{-1}_{\bm{\omega}} \circ \mathfrak{L} \circ V_{\bm{\omega}}} =
 V^{-1}_{\bm{\omega}} \circ e^{t \mathfrak{L}} \circ V_{\bm{\omega}} \,,
\end{equation}
which is evidently completely positive for all $t\geq 0$ as it is a composition of three completely positive maps: $ V_{\bm{\omega}}$, $ V^{-1}_{\bm{\omega}}$ and $e^{t \mathfrak{L}}$. Moreover, $e^{t \mathfrak{L}^\#} $ is unital, because $ \mathfrak{L}(\bm{\omega})=0$ implies
\begin{equation}
 \mathfrak{L}^\#(\oper_d) = V^{-1}_{\bm{\omega}} \circ \mathfrak{L}(\bm{\omega})=0 \,.
\end{equation}
Hence $ e^{t \mathfrak{L}^\#}$ is completely positive and unital for all $t \geq 0$.

 \item Step 2. Since both $\mathfrak{L}^\dagger$ and $\mathfrak{L}^\#$ are GKLS generators in the Heisenberg picture, it is clear that
\begin{equation} \label{LLL}
 \widetilde{\mathfrak{L}} := \frac 12 ( \mathfrak{L}^\dagger + \mathfrak{L}^\#) \,,
\end{equation}
defines a GKLS generator as well.
Moreover, by construction, $ \widetilde{\mathfrak{L}}$ is self-adjoint (w.r.t.~to the KMS inner product) and hence the spectrum of $
\widetilde{\mathfrak{L}}$ is purely real. Note that the generators $\mathfrak{L}^\dagger$ and $\mathfrak{L}^\#$ are isospectral and hence
\begin{equation}
 {\rm Tr}\, \widetilde{\mathfrak{L}} = {\rm Tr}\, \mathfrak{L}^\dagger =  {\rm Tr}\, \mathfrak{L} \, .
\end{equation}
Denote relaxation rates of $\widetilde{\mathfrak{L}}$ by $\widetilde{\Gamma}_\ell$.

\begin{proposition}  \label{Pro-tilde}  If the relaxation rates $\widetilde{\Gamma}_\ell$ of $\widetilde{\mathfrak{L}}$
satisfy (\ref{CON-!}), then relaxation rates ${\Gamma}_\ell$ of the original generator ${\mathfrak{L}}$ satisfy (\ref{CON-!}) as well.
\end{proposition}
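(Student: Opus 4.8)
The plan is to reduce the claimed bound for $\mathfrak{L}$ to the \emph{assumed} bound for $\widetilde{\mathfrak{L}}$ by establishing two facts: that the two generators carry the same \emph{sum} of relaxation rates, and that the \emph{maximal} rate cannot grow when one passes from $\widetilde{\mathfrak{L}}$ to $\mathfrak{L}$. The first is immediate. Since $\mathfrak{L}$, $\mathfrak{L}^\dagger$ and $\mathfrak{L}^\#$ are isospectral and ${\rm Tr}\,\widetilde{\mathfrak{L}} = {\rm Tr}\,\mathfrak{L}$, formula (\ref{TrG}) yields
\[
 \sum_{\ell=1}^{d^2-1}\widetilde{\Gamma}_\ell = -{\rm Tr}\,\widetilde{\mathfrak{L}} = -{\rm Tr}\,\mathfrak{L} = \sum_{\ell=1}^{d^2-1}\Gamma_\ell \,,
\]
so the right-hand sides of (\ref{CON-!}) for the two generators coincide. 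Consequently it suffices to prove $\Gamma_{\rm max} \leq \widetilde{\Gamma}_{\rm max}$.

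The heart of the argument is a Rayleigh-quotient (numerical-range) estimate carried out in the KMS inner product, the point being that $\widetilde{\mathfrak{L}}$ defined in (\ref{LLL}) is exactly the KMS-symmetric part of $\mathfrak{L}^\dagger$. Let $\lambda$ be an eigenvalue of $\mathfrak{L}^\dagger$ with a KMS-normalized eigenvector $X$, i.e.\ $\mathfrak{L}^\dagger(X) = \lambda X$ and $(X,X)_{\bm{\omega}}=1$. From the defining relation (\ref{eq:def_L_sharp}) of the KMS-adjoint together with the conjugate symmetry of the inner product one gets $(X,\mathfrak{L}^\# X)_{\bm{\omega}} = (\mathfrak{L}^\dagger X, X)_{\bm{\omega}} = \overline{(X,\mathfrak{L}^\dagger X)_{\bm{\omega}}}$, whence
\[
 (X,\widetilde{\mathfrak{L}} X)_{\bm{\omega}} = \tfrac12\big[(X,\mathfrak{L}^\dagger X)_{\bm{\omega}} + (X,\mathfrak{L}^\# X)_{\bm{\omega}}\big] = {\rm Re}\,(X,\mathfrak{L}^\dagger X)_{\bm{\omega}} = {\rm Re}\,\lambda \,.
\]
Because $\widetilde{\mathfrak{L}}$ is self-adjoint with respect to $(\cdot,\cdot)_{\bm{\omega}}$ and has real spectrum, the finite-dimensional min–max principle squeezes its Rayleigh quotient between its extreme eigenvalues; in particular ${\rm Re}\,\lambda = (X,\widetilde{\mathfrak{L}} X)_{\bm{\omega}} \geq \widetilde{\lambda}_{\min}$, where $\widetilde{\lambda}_{\min}$ is the smallest eigenvalue of $\widetilde{\mathfrak{L}}$. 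Taking the minimum over all eigenvalues of $\mathfrak{L}^\dagger$ (which include the shared stationary value $\lambda_0=0$, since $\mathfrak{L}^\dagger(\oper_d)=\mathfrak{L}^\#(\oper_d)=0$ forces $\widetilde{\mathfrak{L}}(\oper_d)=0$) gives $\min_\alpha {\rm Re}\,\lambda_\alpha \geq \widetilde{\lambda}_{\min}$. Since every real part is nonpositive for both generators, this reads $\Gamma_{\rm max} = -\min_\alpha {\rm Re}\,\lambda_\alpha \leq -\widetilde{\lambda}_{\min} = \widetilde{\Gamma}_{\rm max}$, using isospectrality of $\mathfrak{L}$ and $\mathfrak{L}^\dagger$.

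Finally I would chain the two facts:
\[
 \Gamma_{\rm max} \leq \widetilde{\Gamma}_{\rm max} \leq \frac{1}{d}\sum_{\ell=1}^{d^2-1}\widetilde{\Gamma}_\ell = \frac{1}{d}\sum_{\ell=1}^{d^2-1}\Gamma_\ell \,,
\]
where the middle inequality is the hypothesis that $\widetilde{\mathfrak{L}}$ satisfies (\ref{CON-!}); this is precisely (\ref{CON-!}) for $\mathfrak{L}$. I expect the only genuine obstacle to be the middle paragraph: one must recognize that $\widetilde{\mathfrak{L}}$ is the KMS-symmetrization of $\mathfrak{L}^\dagger$ so that the real parts of the \emph{a priori} complex eigenvalues of $\mathfrak{L}^\dagger$ are controlled by the genuinely real spectrum of $\widetilde{\mathfrak{L}}$ through its numerical range. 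The subtlety is purely bookkeeping—the ``vectors'' are operators in $\Md$ and the relevant inner product is the KMS one (\ref{KMS}) rather than Hilbert–Schmidt—but the Rayleigh–Ritz bound applies verbatim once positive-definiteness of $(\cdot,\cdot)_{\bm{\omega}}$ (guaranteed by $\bm{\omega}>0$) is invoked.
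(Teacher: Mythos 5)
Your proof is correct and follows essentially the same route as the paper: first the equality $\sum_\ell \widetilde{\Gamma}_\ell = \sum_\ell \Gamma_\ell$ via the trace and isospectrality, then the reduction to $\Gamma_{\rm max} \leq \widetilde{\Gamma}_{\rm max}$ using that $\widetilde{\mathfrak{L}}$ is the KMS-symmetric part of $\mathfrak{L}^\dagger$. The only difference is that the paper obtains this last inequality by citing the Bendixson--Hirsch inequality as a black box, whereas you re-derive it via the Rayleigh-quotient argument in the KMS inner product---which is precisely the standard proof of that inequality, and has the incidental merit of making explicit that it holds for the self-adjoint part taken w.r.t.\ an arbitrary (here KMS) inner product, not only the Hilbert--Schmidt one.
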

Indeed, if
\begin{equation}
 \widetilde{\Gamma}_{\rm max} \leq \frac 1d {\rm Tr}(- \widetilde{\mathfrak{L}}) = \frac 1d {\rm Tr}(- \mathfrak{L})\, ,
\end{equation}
then to complete the proof it suffices to show that ${\Gamma}_{\rm max}$ of $\mathfrak{L}$ is upper bounded by $\widetilde{\Gamma}_{\rm max}$ of $\widetilde{\mathfrak{L}}$. But this follows immediately from the Bendixson--Hirsch inequality \cite{Bendixson02,Hirsch02} (cf.~also \cite[Thm.~12.6.6]{Mirsky55}): for any complex matrix $A\in \Md $ with (complex) eigenvalues $\{a_1,\ldots,a_d\}$
\begin{eqnarray}
 m \leq {\rm Re}\, a_k \leq M \,,
\end{eqnarray}
where $m$ and $M$ are minimal and maximal eigenvalues of the Hermitian matrix $\frac 12(A + A^\dagger)$. Applying this inequality to $\mathfrak{L}^\dagger$ and the self-adjoint $\widetilde{\mathfrak{L}}$ one finds
\begin{equation}
 -\widetilde{\Gamma}_{\rm max} \leq -{\Gamma}_{\rm max} \leq 0 \,,
\end{equation}
which altogether completes the proof of Proposition \ref{Pro-tilde}. To summarize, in this step we reduced the original problem for arbitrary generators to self-adjoint generators with purely real spectra.

\item Step 3. Given any orthonormal basis {$\{|e_1\rangle,\ldots,|e_d\rangle\}$} in $\mathbb{C}^d${, we define}
\begin{equation} \label{Kij}
 \mathcal{K}_{ij} := {\rm Tr}(P_i \mathfrak{L}(P_j)) \,,
\end{equation}
with $P_i := |e_i\>\<e_i|$.
{Note that, as a direct consequence of the above definition, $\mathcal{K}_{ij}$ satisfies the following properties:
\begin{equation}
 \mathcal{K}_{ij} \geq 0 \ \ \ (i \neq j) \ , \ \ \ \sum_{i=1}^d \mathcal{K}_{ij}=0;
\end{equation}
and hence defines a generator of a classical semigroup of stochastic matrices \cite{Kampen2007}.}
We stress that $\mathcal{K}_{ij}$ does depend on the basis $\{|e_i\rangle \}_{i=1}^d$.

\begin{proposition} \label{Pro-I}
Any quantum GKLS generator $\mathfrak{L}$, together with the corresponding classical generator $\mathcal{K}$ defined with respect to an arbitrary orthonormal basis $\{|e_i\rangle\}_{i=1}^d$, satisfies
\begin{equation} \label{KL}
  \operatorname{Tr} \mathfrak{L} \leq d\, \operatorname{Tr} \mathcal{K}.
\end{equation}
\end{proposition}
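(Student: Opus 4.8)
The plan is to express both $\operatorname{Tr}\mathfrak{L}$ and $\operatorname{Tr}\mathcal{K}$ through the Choi matrix of $\mathfrak{L}$ built from the \emph{same} basis $\{|e_i\rangle\}$, and then read off \eqref{KL} directly from conditional complete positivity \eqref{CCP}. Concretely, I would fix the basis-adapted maximally entangled vector $|\psi^+_e\rangle := \frac{1}{\sqrt d}\sum_{i}|e_i\otimes e_i\rangle$ and the Choi matrix $C := ({\rm id}_d\otimes\mathfrak{L})(|\psi^+_e\rangle\langle\psi^+_e|)$. Using the Hilbert--Schmidt-orthonormal matrix units $E_{ij}:=|e_i\rangle\langle e_j|$ to evaluate the superoperator trace gives $\operatorname{Tr}\mathfrak{L}=\sum_{ij}\langle e_i|\mathfrak{L}(E_{ij})|e_j\rangle$, while a one-line computation yields $\langle e_i\otimes e_i|C|e_j\otimes e_j\rangle=\tfrac1d\langle e_i|\mathfrak{L}(E_{ij})|e_j\rangle$. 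Summing these, and using $\mathcal{K}_{ii}=\langle e_i|\mathfrak{L}(E_{ii})|e_i\rangle$, I would obtain the clean pair
\begin{equation}
\operatorname{Tr}\mathfrak{L}=d^2\,\langle\psi^+_e|C|\psi^+_e\rangle,\qquad \operatorname{Tr}\mathcal{K}=d\sum_{i=1}^d\langle e_i\otimes e_i|C|e_i\otimes e_i\rangle .
\end{equation}
Thus \eqref{KL} is \emph{equivalent} to the single scalar inequality $\langle\psi^+_e|C|\psi^+_e\rangle\le\sum_i\langle e_i\otimes e_i|C|e_i\otimes e_i\rangle$.

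Next I would invoke \eqref{CCP} with $P^+$ replaced by $P^+_e:=|\psi^+_e\rangle\langle\psi^+_e|$; this is not a loss of generality, since conditional complete positivity is the infinitesimal form of complete positivity of $e^{t\mathfrak{L}}$, a basis-independent property (equivalently, differentiate $\langle g|({\rm id}_d\otimes e^{t\mathfrak{L}})(P^+_e)|g\rangle\ge0$ at $t=0$ for any $|g\rangle\perp|\psi^+_e\rangle$). I would then test \eqref{CCP} against the projected diagonal vectors $|g_i\rangle:=(\oper_d\otimes\oper_d-P^+_e)|e_i\otimes e_i\rangle$, each of which gives $\langle g_i|C|g_i\rangle\ge0$. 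Summing over $i$ and using $\sum_i|e_i\otimes e_i\rangle=\sqrt d\,|\psi^+_e\rangle$ together with $\langle\psi^+_e|e_i\otimes e_i\rangle=1/\sqrt d$, the cross terms collapse and the three contributions proportional to $\langle\psi^+_e|C|\psi^+_e\rangle$ combine into a single one, yielding
\begin{equation}
0\le\sum_{i=1}^d\langle g_i|C|g_i\rangle=\sum_{i=1}^d\langle e_i\otimes e_i|C|e_i\otimes e_i\rangle-\langle\psi^+_e|C|\psi^+_e\rangle .
\end{equation}
This is exactly the scalar inequality isolated above, so multiplying by $d^2$ and substituting the two trace identities gives $\operatorname{Tr}\mathfrak{L}=d^2\langle\psi^+_e|C|\psi^+_e\rangle\le d^2\sum_i\langle e_i\otimes e_i|C|e_i\otimes e_i\rangle=d\,\operatorname{Tr}\mathcal{K}$, which is \eqref{KL}.

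The main obstacle is not the algebra but the choice of test vectors. Since $\mathfrak{L}$ is not completely positive, $C$ is \emph{not} positive semidefinite on the full space (only on $|\psi^+_e\rangle^\perp$), so one cannot naively bound a diagonal sum $\sum_i\langle e_i\otimes e_i|C|e_i\otimes e_i\rangle$ by anything positive. The construction $|g_i\rangle=(\oper_d\otimes\oper_d-P^+_e)|e_i\otimes e_i\rangle$ is engineered precisely so that (i) each $|g_i\rangle$ lies in the subspace where \eqref{CCP} guarantees positivity, and (ii) the sum $\sum_i\langle g_i|C|g_i\rangle$ telescopes exactly into the difference of the two traces entering \eqref{KL}. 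Verifying this cancellation is the crux of the argument; the remaining point to state carefully is the basis-independence of \eqref{CCP}, justified as above, which is what makes \eqref{KL} valid for an \emph{arbitrary} orthonormal basis $\{|e_i\rangle\}$ as claimed.
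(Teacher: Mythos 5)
Your proof is correct and takes essentially the same route as the paper's: both express $\operatorname{Tr}\mathfrak{L}$ and $\operatorname{Tr}\mathcal{K}$ through matrix elements of the Choi matrix and then test conditional complete positivity against the diagonal vectors, i.e., evaluate $(\oper_d\otimes\oper_d-P^+)\,C_{\mathfrak{L}}\,(\oper_d\otimes\oper_d-P^+)$ on $|e_i\otimes e_i\rangle$ and sum over $i$, obtaining exactly the same cancellation down to $\operatorname{Tr}\mathcal{K}-\frac1d\operatorname{Tr}\mathfrak{L}\geq 0$. The only (minor, and welcome) difference is that you explicitly work with the basis-adapted projector $P^+_e$ and justify that conditional complete positivity holds in that form, a point the paper leaves implicit by writing its computation in the basis defining $P^+$ in (\ref{CCP}).
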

\begin{proof}
Recall that given a linear map $\Phi : \Md \to \Md$ one defines
\begin{equation}
 {\rm Tr}\,\Phi = \sum_{\alpha =1}^{d^2} (F_\alpha,\Phi(F_\alpha))_{\rm HS} = \sum_{\alpha =1}^{d^2} {\rm Tr}(F_\alpha^\dagger \Phi(F_\alpha)) \,,
\end{equation}
where $F_\alpha$ is an arbitrary orthonormal (i.e., $(F_\alpha,F_\beta)_{\rm HS} = \delta_{\alpha\beta}$) basis in $\Md$. In particular, taking $F_\alpha = |i\>\<j|$ one finds
\begin{equation}
 {\rm Tr}\,\Phi = \sum_{i,j =1}^{d} \<i|\, \Phi(|i\>\<j|)\, |j\>\,.
\end{equation}
Note that
\begin{equation} \label{tr-psi}
 {\rm Tr}\,\Phi = d^2 \< \psi^+| C_\Phi | \psi^+\> \,,
\end{equation}
where $C_\Phi$ denotes the Choi matrix \eqref{C-Phi} of $\Phi$ (See \cite[Lemma~2]{vE24_decomp_findim}).
Now, starting from the condition (\ref{CCP}) of conditional complete positivity of the generator $\mathfrak{L}$
\begin{equation} \label{}
 (\oper_d \otimes \oper_d - P^+) C_{\mathfrak{L}} (\oper_d \otimes \oper_d - P^+) \geq 0\,,
\end{equation}
in particular, one has
\begin{equation} \label{}
 \sum_{k=1}^d \<k \otimes k| (\oper_d \otimes \oper_d - P^+) C_{\mathfrak{L}} (\oper_d \otimes \oper_d - P^+) |k \otimes k\> \geq 0\,.
\end{equation}
Using $\< k \otimes k|\psi^+\> = \frac{1}{\sqrt{d}}$ one easily computes
\begin{eqnarray*}
&& \sum_{k=1}^d \< k \otimes k| C_{\mathfrak{L}}|k \otimes k\> = \frac 1d\sum_{k=1}^d \<k|\, \mathfrak{L}(|k\>\<k|)\, |k\> = \frac 1d {\rm Tr}\, \mathcal{K} \,, \\
&& \sum_{k=1}^d \< k \otimes k| P^+_d\, C_{\mathfrak{L}}|k \otimes k\> = \<\psi^+|\, C_{\mathfrak{L}}\, | \psi^+\> = \frac{1}{d^2} {\rm Tr}\, \mathfrak{L} \,, \\
&& \sum_{k=1}^d \< k \otimes k| C_{\mathfrak{L}}\,P^+_d |k \otimes k\> = \<\psi^+|\, C_{\mathfrak{L}}\, | \psi^+\> = \frac{1}{d^2} {\rm Tr}\, \mathfrak{L} \,, \\
&& \sum_{k=1}^d \< k \otimes k| P^+_d \, C_{\mathfrak{L}} \,P^+_d|k \otimes k\> = \<\psi^+|\, C_{\mathfrak{L}}\, | \psi^+\> = \frac{1}{d^2} {\rm Tr}\, \mathfrak{L} \,,
\end{eqnarray*}
and hence
\begin{equation}
 {\rm Tr}\, \mathcal{K} - \frac 1d {\rm Tr}\, \mathfrak{L} \geq 0 \,,
\end{equation}
which immediately proves (\ref{KL}).
\end{proof}

It should be  emphasized that {the proof of (\ref{KL}) does not rely on the explicit representation (\ref{LGKS}). In particular, it does not require the positivity of the transition rates $\gamma_\ell \geq 0$ and it is entirely based on the very notion of conditional complete positivity of $\mathfrak{L}$.}  {Furthermore, as} we show later, (\ref{KL}) can be proved for a wider class of generators assuming that they generate a semigroup of 2-positive maps. Observe that, (\ref{KL}) together with (\ref{TrG}) implies
\begin{equation}
 {\rm Tr}(- \mathcal{K}) \leq \frac 1d \sum_{\ell=1}^{d^2-1} \Gamma_\ell \,.
\end{equation}
Hence, if we manage to find {(as we do in the next step)} a particular basis $\{|e_i\rangle \}_{i=1}^d$ for which the classical generator $\mathcal{K}$ has an eigenvalue whose real part coincides with `$-\Gamma_{\rm max}$' then, necessarily, $\Gamma_{\rm max} \leq {\rm Tr}(- \mathcal{K})$, because (as in the quantum case) all eigenvalues of `$-\mathcal{K}$' have non-negative real parts and hence (\ref{CON}) immediately follows.

\item Step 4. Recall that Step 2. reduced the problem to generators with purely real spectra. If $\lambda$ is a real eigenvalue of $\mathfrak{L}$, then the corresponding eigenvector $X$ can always be chosen to be Hermitian. Indeed, given some non-Hermitian $Y\neq 0$ such that $\mathfrak{L}(Y) = \lambda Y $, one has $\mathfrak{L}(Y^\dagger) = \lambda Y^\dagger $ (because $\mathfrak{L}$ is Hermiticity-preserving) and hence
$X:=i(Y-Y^\dagger)\neq 0$ is a Hermitian eigenvector corresponding to $\lambda$. Let $\{|k\>\}_{k=1}^d$ be an eigenbasis of $X$, i.e.,
$X = \sum_k x_k |k\>\<k|$ with $x_k \in \mathbb{R}$. Define a classical generator $\mathcal{K}$ w.r.t.  an eigenbasis of $X$, i.e., $\mathcal{K}_{ij} = \<i|\mathfrak{L}(|j\>\<j|)|i\>$.
\begin{proposition}    \label{PRO-2}
A real eigenvalue  $\lambda$ of $\mathfrak{L}$ is an eigenvalue of $\mathcal{K}$ and $\bm{x} = (x_1,\ldots,x_d)$ defines the corresponding eigenvector:
\begin{equation}
 \mathcal{K} \bm{x} = \lambda \bm{x} \,.
\end{equation}
\end{proposition}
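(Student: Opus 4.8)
The plan is to verify the claimed identity $\mathcal{K}\bm{x}=\lambda\bm{x}$ directly and componentwise, using nothing more than the linearity of $\mathfrak{L}$ together with the eigen-relation $\mathfrak{L}(X)=\lambda X$. First I would write out the $i$-th component of $\mathcal{K}\bm{x}$ from the definition $\mathcal{K}_{ij}=\langle i|\mathfrak{L}(|j\rangle\langle j|)|i\rangle$,
\[
 (\mathcal{K}\bm{x})_i = \sum_{j=1}^d \mathcal{K}_{ij}\,x_j = \sum_{j=1}^d x_j\,\langle i|\,\mathfrak{L}(|j\rangle\langle j|)\,|i\rangle ,
\]
and then pull the scalars $x_j$ back inside the linear map $\mathfrak{L}$. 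Since $X=\sum_j x_j|j\rangle\langle j|$ is precisely the spectral decomposition of $X$ in the chosen eigenbasis, the inner sum reassembles $X$ and the right-hand side collapses to the single diagonal matrix element $\langle i|\,\mathfrak{L}(X)\,|i\rangle$.

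The second step invokes the eigen-relation. Because $\mathfrak{L}(X)=\lambda X$ and each $|i\rangle$ is by construction an eigenvector of $X$ with eigenvalue $x_i$, one has
\[
 \langle i|\,\mathfrak{L}(X)\,|i\rangle = \lambda\,\langle i|\,X\,|i\rangle = \lambda\,x_i .
\]
Comparing the two displays gives $(\mathcal{K}\bm{x})_i=\lambda x_i$ for every $i=1,\ldots,d$, that is $\mathcal{K}\bm{x}=\lambda\bm{x}$. To conclude that $\lambda$ is genuinely an eigenvalue of $\mathcal{K}$ I would finally note that $\bm{x}\neq\bm{0}$: as $X$ was taken to be a nonzero Hermitian eigenvector, at least one of its eigenvalues $x_k$ is nonzero, so $\bm{x}$ is a legitimate eigenvector of $\mathcal{K}$.

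There is no genuinely hard step in this proposition; its whole content is the observation that evaluating the $(i,i)$ diagonal entry in the eigenbasis of $X$ captures the images $\mathfrak{L}(|j\rangle\langle j|)$ only through the diagonal weights $x_j$, which is exactly what linearity encodes. The single point deserving a remark is that the eigenbasis of $X$ need not be unique when $X$ is degenerate, but the computation above holds verbatim for any fixed choice, so no difficulty arises. The delicate work of the overall argument sits rather in the preceding reduction to real spectra (Step 2) and in Proposition~\ref{Pro-I}; the role of Proposition~\ref{PRO-2} is merely to realize $-\Gamma_{\rm max}$ as an eigenvalue of a classical generator $\mathcal{K}$, which---combined with \eqref{KL}---closes the chain of inequalities yielding \eqref{CON}.
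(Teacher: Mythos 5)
Your proof is correct and follows exactly the same route as the paper's: expand $(\mathcal{K}\bm{x})_i$ componentwise, use linearity of $\mathfrak{L}$ to reassemble $X=\sum_j x_j|j\rangle\langle j|$, and apply the eigen-relation $\mathfrak{L}(X)=\lambda X$ in the eigenbasis of $X$. Your added observation that $\bm{x}\neq\bm{0}$ (so $\lambda$ is genuinely an eigenvalue of $\mathcal{K}$) is a small but welcome point of rigor that the paper leaves implicit.
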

Indeed, one immediately finds
\begin{eqnarray}\label{eq:K_Gamma_max}
 (\mathcal{K} \bm{x})_i &=& \sum_{j=1}^d \mathcal{K}_{ij} x_j = \sum_{j=1}^d \<i|\mathfrak{L}(x_j|j\>\<j|)|i\> = \<i|\mathfrak{L}(X)|i\> = \lambda \<i|X|i\> =  \lambda x_i.
\end{eqnarray}
Therefore, assuming that ${\mathfrak{L}}$ has a real spectrum one finds that `$-\Gamma_k$' is an eigenvalue of $\mathcal{K}$ which, in turn, implies that $\Gamma_k \leq {\rm Tr}(- \mathcal{K})$. Finally, taking into account (\ref{KL}) one finds
\begin{equation}   \label{<<}
    \Gamma_k \leq {\rm Tr}(- \mathcal{K}) \leq \frac 1 d {\rm Tr}(- \mathfrak{L}) \equiv \frac 1d \sum_{\ell=1}^{d^2-1} \Gamma_\ell ,
\end{equation}
which ends the proof of Theorem \ref{Pro-0}.   \hfill $\Box$
\end{itemize}

\begin{remark}\label{remark_posdef_stationary_state_dense}
In Step 1 of our proof, we assumed that $\bm{\omega}$ is a faithful stationary state of $\mathfrak{L}$. While not every generator $\mathfrak{L}$ satisfies this, \textnormal{almost all}  do, in the sense that the set of all GKLS generators which have a full-rank stationary state $\bm{\omega}$ is dense in the space of all GKLS generators (for a proof see \cite{SIF}).
Hence, there exists a sequence $\{ \mathfrak{L}_n\}_{n \in \mathbb{N}}$ of generators which all have a full-rank stationary state $\bm{\omega}_n$ such that
\begin{equation}
 \mathfrak{L} = \lim_{n \to \infty} \mathfrak{L}_n\,.
\end{equation}
Together with the known fact that the eigenvalues, resp.~relaxation rates depend continuously on the input \cite[Ch.~II, Thm.~5.14]{Kato80}, this generalizes the inequality from generators with full-rank stationary state to arbitrary generators.
\end{remark}
\begin{remark}
The reader might wonder why in Step 1 we decided to introduce the KMS inner product instead of the standard Hilbert-Schmidt one. Clearly, we could have instead defined
\begin{equation}
 \widetilde{\mathfrak{L}} = \frac 12 (\mathfrak{L}^\dagger  + \mathfrak{L}) \,,
\end{equation}
which is self-adjoint w.r.t.~the Hilbert-Schmidt inner product. Note, however, that contrary to (\ref{LLL}) the above formula does not define a proper generator (in the Heisenberg picture), i.e., $\widetilde{\mathfrak{L}}(\oper_d) \neq 0$ (unless $\bm{\omega}$ is maximally mixed). In particular, Step 4 need not apply to $\widetilde{\mathfrak{L}}$ because
now the minimal eigenvalue of the
associated $-\mathcal K$ may be negative, so the inequality $\Gamma_{\rm max} \leq {\rm Tr} (- \mathcal{K})$ used in (\ref{<<}) need not hold anymore.
\end{remark}

\begin{remark}
Recall that $$\mathfrak{L}(\bm{\rho}) = -\imath [H,\bm{\rho}]+ \mathfrak{D}(\bm{\rho})$$ enjoys a quantum detailed balance property w.r.t.~a steady state $\bm{\omega}$ if $[H,\omega]=0$ and the dissipative part of the generator $\mathfrak{D}$ satisfies \cite{Alicki1976,KFGV1977}
\begin{equation} \label{DD}
 {\rm Tr}(\bm{\omega} X^\dagger \mathfrak{D}^\dagger(Y)) = {\rm Tr}(\bm{\omega} \mathfrak{D}^\dagger(X^\dagger) Y) \,,
\end{equation}
i.e., $\mathfrak{D}^\dagger$ is self-adjoint w.r.t.~the GNS inner product $(X,Y)_{\rm GNS} = {\rm Tr}(\bm{\omega} X^\dagger Y)$. The authors of \cite{Fagnola2010} considered the so-called $s$-detailed balance, i.e., when $\mathfrak{D}^\dagger$ is self-adjoint w.r.t.~(\ref{s}). In particular, they analyzed the KMS-detailed balance  corresponding to a symmetric scenario $s=1/2$. Finally, the $s$-inner product and, in particular, the $s=1/2$ KMS case are instrumental to obtain quantum log-Sobolev inequalities \cite{KaTe2013,KaEi2013}, that provide a framework for the derivation of improved bounds on the convergence time of quantum-dynamical semigroups. {Interestingly, the family of $s$-detailed balance conditions  collapses only to two instances: the symmetric case $s=1/2$ and the strictly stronger case $s\neq 1/2$ \cite{Carlen-Maas}.}
\end{remark}

\section{Beyond completely positive semigroups: 2-positive maps} \label{S-III}

In this section, we show that the assumption of complete positivity may be significantly relaxed.
Let us recall that a positive map $\Phi : \Md \to \Md$ is called $k$-positive \cite{Stormer,Paulsen,Bhatia07} if the extended map $\Phi_k : \mathcal{M}_k(\mathbb{C})\otimes \Md \to \mathcal{M}_k(\mathbb{C})\otimes \Md$ defined by
\begin{equation}
 \Phi_k := {\rm id}_k \otimes \Phi \,,
\end{equation}
is positive (here ${\rm id}_k : \mathcal{M}_k(\mathbb{C}) \to \mathcal{M}_k(\mathbb{C})$ is the identity map).
In this language, complete positivity coincides with $d$-positivity \cite{Choi-72,Choi-75}. Anyway, it is clear that $k$-positivity requires
\begin{equation}
 \< \psi |\, [{\rm id}_k \otimes \Phi](|\phi\>\<\phi|)\, |\psi\> \geq 0 \,,
\end{equation}
for all vectors $\psi,\phi \in \mathbb{C}^k \otimes \mathbb{C}^d$. {Table \ref{Table1} summarizes the hierarchy of maps with different {\em degree}  of positivity:

\begin{table}[h!]
\centering
\renewcommand{\arraystretch}{1.2}
\begin{tabular}{|l|p{6cm}|p{5cm}|}
\hline
\textbf{Class of map} & \textbf{Definition} & \textbf{Properties} \\ \hline
Positive & $\Phi(X) \ge 0$ whenever $X \ge 0$ & Preserves positivity for single operators \\ \hline
2-Positive & ${\rm id}_2 \otimes \Phi$ is positive & Preserves positivity for $2\times 2$ block matrices \\ \hline
$k$-Positive & ${\rm id}_k \otimes \Phi $ is positive & Preserves positivity for $k\times k$ block matrices; stronger than $\ell$-positive for $\ell < k$ \\ \hline
Completely Positive (CP) & $\Phi$ is $k$-positive for all $k=1,2,\ldots$. It saturates for $k=d$. & Choi matrix $({\rm id}_d \otimes \Phi)(P^+)$ is positive semi-definite
\\ \hline
\end{tabular}
\caption{Hierarchy and properties of different classes of maps.   \label{Table1}}
\end{table}
}

Consider now a semigroup $e^{t \mathfrak{L}}$ of $k$-positive maps. One has the following result \cite{GoKoSu1976,Evans1}
\begin{proposition}
$\mathfrak{L}$ generates a semigroup of $k$-positive maps if and only if $\mathfrak{L}$ is conditionally $k$-positive, that is,
\begin{equation} \label{kCP}
 (\oper_k \otimes \oper_d - |\phi\>\<\phi|) [{\rm id}_k \otimes \mathfrak{L}](|\phi\>\<\phi|) (\oper_k \otimes \oper_d - |\phi\>\<\phi|) \geq 0 \,,
\end{equation}
for all normalized vectors $|\phi\rangle \in \mathbb{C}^k \otimes \mathbb{C}^d$.
\end{proposition}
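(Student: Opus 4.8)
The plan is to reduce the claim to the corresponding statement for \emph{positive} (rather than $k$-positive) semigroups applied to an amplified generator, and then to rewrite the resulting conditional-positivity inequality in the sandwiched form \eqref{kCP}.

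First I would note that $k$-positivity of $e^{t\mathfrak{L}}$ is, by definition, positivity of the map $\mathrm{id}_k \otimes e^{t\mathfrak{L}}$ on $\mathcal{M}_k(\mathbb{C}) \otimes \Md \cong \mathcal{M}_{kd}(\mathbb{C})$. Since tensoring with $\mathrm{id}_k$ commutes with taking the exponential, $\mathrm{id}_k \otimes e^{t\mathfrak{L}} = e^{t \mathfrak{L}'}$ with the genuine linear generator $\mathfrak{L}' := \mathrm{id}_k \otimes \mathfrak{L}$. Hence the statement \textit{``$e^{t\mathfrak{L}}$ is $k$-positive for all $t \ge 0$''} is equivalent to \textit{``$\{e^{t\mathfrak{L}'}\}_{t\ge 0}$ is a semigroup of positive maps on $\mathcal{M}_{kd}(\mathbb{C})$''}, and the whole problem collapses to characterizing generators of positive semigroups and reading off the condition satisfied by $\mathfrak{L}'$.

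Second, I would use the characterization that a linear generator $L$ on $\mathcal{M}_n(\mathbb{C})$ produces a semigroup of positive maps if and only if $\langle \psi | L(A) | \psi\rangle \ge 0$ for every $A \ge 0$ and every $\psi \in \ker A$; decomposing $A$ over rank-one projections onto its range shows this is equivalent to $\langle \psi | L(|\phi\rangle\langle\phi|) | \psi\rangle \ge 0$ for all $\phi \perp \psi$. The necessity (``only if'') direction is immediate from a first-derivative test: for fixed orthogonal $\psi,\phi$ the scalar function $f(t) := \langle\psi|\,e^{tL}(|\phi\rangle\langle\phi|)\,|\psi\rangle$ is non-negative for all $t \ge 0$ by positivity of the semigroup and vanishes at $t=0$ since $\psi \perp \phi$; therefore $t=0$ is a minimum on $[0,\infty)$ and $f'(0)=\langle\psi|L(|\phi\rangle\langle\phi|)|\psi\rangle \ge 0$.

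The main obstacle is the sufficiency (``if'') direction, which is the substance of the cited results \cite{GoKoSu1976,Evans1}. Here I would adopt the cone-invariance viewpoint: the cone of positive semidefinite matrices is forward-invariant under the linear flow $\dot A_t = L(A_t)$ precisely when the vector field $L$ is subtangential to the cone at every boundary point, which by Nagumo's theorem is exactly the above condition $\langle\psi|L(A)|\psi\rangle \ge 0$ for $A\ge 0$, $\psi\in\ker A$. To handle the degeneracy on the boundary rigorously, I would regularize by replacing $L$ with $L_\varepsilon := L + \varepsilon\,\Xi$, where $\Xi(A) := (\operatorname{Tr}A)\,\oper_n - n\,A$ generates the (completely positive) depolarizing semigroup and satisfies $\langle\psi|\Xi(A)|\psi\rangle = \operatorname{Tr}A > 0$ at every nonzero boundary point; this makes subtangentiality strict, so the smallest eigenvalue of $A_t$ can never reach $0$ and $e^{tL_\varepsilon}$ stays positive, after which $\varepsilon\to 0$ together with continuity of eigenvalues yields positivity of $e^{tL}$.

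Finally I would translate the conditional positivity of $\mathfrak{L}'$ into \eqref{kCP}. Writing $Q := \oper_k\otimes\oper_d - |\phi\rangle\langle\phi|$ for the orthogonal projection onto $\phi^{\perp}$, every $\psi\perp\phi$ is of the form $\psi = Q\xi$; hence ``$\langle\psi|\mathfrak{L}'(|\phi\rangle\langle\phi|)|\psi\rangle \ge 0$ for all $\psi\perp\phi$'' is equivalent to $\langle\xi|\,Q\,\mathfrak{L}'(|\phi\rangle\langle\phi|)\,Q\,|\xi\rangle \ge 0$ for all $\xi$, i.e.\ to the operator inequality $Q\,[\mathrm{id}_k\otimes\mathfrak{L}](|\phi\rangle\langle\phi|)\,Q \ge 0$ for every normalized $\phi\in\mathbb{C}^k\otimes\mathbb{C}^d$, which is precisely \eqref{kCP}. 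As a consistency check, setting $k=d$ specializes this to complete positivity, where Choi's theorem further collapses the family of conditions to the single maximally entangled vector $|\phi\rangle=|\psi^+\rangle$, recovering the conditional-complete-positivity condition \eqref{CCP}.
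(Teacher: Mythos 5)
Your proof is correct, but there is no internal proof in the paper to compare it against: the paper states this proposition as a known result, citing \cite{GoKoSu1976,Evans1} (with the rank-one reformulation \eqref{eq:kCP_alt_form} attributed to \cite{Koss72}), and proves nothing itself. What you have written is essentially the classical argument behind those citations, cleanly organized: the identity $\mathrm{id}_k\otimes e^{t\mathfrak{L}}=e^{t(\mathrm{id}_k\otimes\mathfrak{L})}$ reduces the whole statement to Kossakowski's characterization of generators of \emph{positive} semigroups on $\mathcal{M}_{kd}(\mathbb{C})$; the ``only if'' direction is the first-derivative test at $t=0$; the ``if'' direction is forward invariance of the PSD cone, which you make rigorous via the depolarizing regularization $L_\varepsilon=L+\varepsilon\,\Xi$ and a first-exit-time contradiction (the strict inequality $\langle\psi|L_\varepsilon(A)|\psi\rangle\geq\varepsilon\operatorname{Tr}A>0$ at nonzero boundary points is exactly what kills the degenerate case); and the translation between the quadratic-form condition for $\psi\perp\phi$ and the sandwiched operator inequality \eqref{kCP} is immediate since $\psi=Q\xi$ ranges over $\phi^\perp$. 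Two small points should be made explicit to make this airtight. First, your cone-invariance step tacitly uses that the trajectory $A_t=e^{tL_\varepsilon}(A_0)$ remains Hermitian, i.e., that $\mathfrak{L}$ is Hermiticity-preserving; this is a standing assumption of the paper, but it is genuinely needed for the ``if'' direction, since, e.g., $\mathfrak{L}=\imath\,\mathrm{id}_d$ satisfies \eqref{kCP} (the sandwiched expression vanishes identically) yet generates the non-positive maps $e^{\imath t}\,\mathrm{id}_d$. Second, your exit-time argument establishes invariance of the \emph{open} cone for strictly positive initial data $A_0>0$; positivity of the map $e^{tL_\varepsilon}$ on the closed cone then follows by approximating $A_0\geq 0$ by $A_0+\delta\oper_{kd}$ and letting $\delta\to 0$, which deserves one sentence before you send $\varepsilon\to 0$. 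Both fixes are routine, and with them your argument stands as a complete, self-contained substitute for the paper's citation.
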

Equivalently, condition (\ref{kCP}) may be reformulated as follows
\begin{equation}\label{eq:kCP_alt_form}
 \< \psi |\, [{\rm id}_k \otimes \mathfrak{L}](|\phi\>\<\phi|)\, |\psi\> \geq 0 \,,
\end{equation}
for all mutually orthogonal vectors $|\psi\rangle \perp |\phi \rangle$ from $\mathbb{C}^k \otimes \mathbb{C}^d$ (cf.~also \cite{Koss72}).
The Proposition \ref{Pro-I} admits the following generalization:
\begin{proposition} \label{Pro-II}
If $\mathfrak{L}$ generates a semigroup of 2-positive maps, then
\begin{equation} \label{KL2}
 {\rm Tr}\mathfrak{L}\leq d\,{\rm Tr}\, \mathcal{K} \,,
\end{equation}
for any orthonormal basis $\{|e_i\rangle \}_{i=1}^d$.
\end{proposition}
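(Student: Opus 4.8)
The plan is to reuse the architecture of the proof of Proposition~\ref{Pro-I}, replacing the single global use of conditional complete positivity by a family of conditional $2$-positivity constraints, one for each pair of basis vectors. In the completely positive case one could feed the full maximally entangled state $|\psi^+\rangle$ on $\mathbb{C}^d\otimes\mathbb{C}^d$ into the Choi picture; with only $2$-positivity available the ancilla is a single qubit, so I must build the estimate out of two-dimensional ``Bell-like'' blocks and then reassemble the full trace by summation. The correct prefactor $1/d$ will re-emerge not from one expectation value but from the combinatorial multiplicity with which each diagonal term reappears across all pairwise constraints.

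First I would record the algebraic identity that splits $\operatorname{Tr}\mathfrak{L}$ into a classical (diagonal) piece and a coherence piece. Writing $a_{ij} := \langle e_i|\mathfrak{L}(|e_i\rangle\langle e_j|)|e_j\rangle$ and using $\operatorname{Tr}\mathfrak{L} = \sum_{i,j}\langle e_i|\mathfrak{L}(|e_i\rangle\langle e_j|)|e_j\rangle$, the diagonal terms $a_{ii}=\mathcal{K}_{ii}$ sum to $\operatorname{Tr}\mathcal{K}$, so that
\begin{equation}
 \operatorname{Tr}\mathfrak{L} = \operatorname{Tr}\mathcal{K} + \sum_{i\neq j} a_{ij}.
\end{equation}
Hermiticity preservation of $\mathfrak{L}$ guarantees $\mathcal{K}_{ii}\in\mathbb{R}$ and $a_{ij}+a_{ji}\in\mathbb{R}$, so all quantities below are real, and the target inequality (\ref{KL2}) becomes equivalent to $(d-1)\operatorname{Tr}\mathcal{K}\ge \sum_{i\neq j}a_{ij}$.

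Next I would invoke the equivalent form (\ref{eq:kCP_alt_form}) of conditional $2$-positivity with a qubit ancilla. For each pair of distinct indices $i<j$ I choose the orthonormal pair
\begin{equation}
 |\phi_{ij}\rangle = \frac{1}{\sqrt 2}\big(|0\rangle|e_i\rangle + |1\rangle|e_j\rangle\big), \qquad |\psi_{ij}\rangle = \frac{1}{\sqrt 2}\big(|0\rangle|e_i\rangle - |1\rangle|e_j\rangle\big)
\end{equation}
in $\mathbb{C}^2\otimes\mathbb{C}^d$, which indeed satisfy $|\psi_{ij}\rangle\perp|\phi_{ij}\rangle$. Expanding $[\mathrm{id}_2\otimes\mathfrak{L}](|\phi_{ij}\rangle\langle\phi_{ij}|)$ into its four system blocks $\mathfrak{L}(|e_a\rangle\langle e_b|)$ with $a,b\in\{i,j\}$ and sandwiching with $|\psi_{ij}\rangle$, a short computation gives
\begin{equation}
 \langle\psi_{ij}|[\mathrm{id}_2\otimes\mathfrak{L}](|\phi_{ij}\rangle\langle\phi_{ij}|)|\psi_{ij}\rangle = \frac14\big(\mathcal{K}_{ii}+\mathcal{K}_{jj}-a_{ij}-a_{ji}\big)\ge 0,
\end{equation}
the inequality being precisely (\ref{eq:kCP_alt_form}). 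Hence $\mathcal{K}_{ii}+\mathcal{K}_{jj}\ge a_{ij}+a_{ji}$ for every $i<j$.

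Finally I would sum this pairwise inequality over all unordered pairs. On the left each diagonal entry $\mathcal{K}_{ii}$ appears in exactly $d-1$ pairs, so $\sum_{i<j}(\mathcal{K}_{ii}+\mathcal{K}_{jj}) = (d-1)\operatorname{Tr}\mathcal{K}$; on the right $\sum_{i<j}(a_{ij}+a_{ji}) = \sum_{i\neq j}a_{ij} = \operatorname{Tr}\mathfrak{L}-\operatorname{Tr}\mathcal{K}$ by the identity above. Combining yields $(d-1)\operatorname{Tr}\mathcal{K}\ge \operatorname{Tr}\mathfrak{L}-\operatorname{Tr}\mathcal{K}$, i.e.\ $\operatorname{Tr}\mathfrak{L}\le d\operatorname{Tr}\mathcal{K}$, which is (\ref{KL2}). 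The only conceptually delicate point I expect is the passage from the one global $d$-dimensional witness of Proposition~\ref{Pro-I} to a collection of purely two-level witnesses: one must check that the off-diagonal contributions, when summed over the $\binom{d}{2}$ pairwise constraints, reconstruct exactly $\operatorname{Tr}\mathfrak{L}-\operatorname{Tr}\mathcal{K}$ and that each diagonal term is counted with the right multiplicity $d-1$; both facts, however, follow immediately from the trace identity, so no genuinely new estimate beyond the single-qubit $2$-positivity inequality is required.
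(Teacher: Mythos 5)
Your proposal is correct and follows essentially the same route as the paper's own proof: both use conditional $2$-positivity tested on the pairs $|0\rangle|e_i\rangle \pm |1\rangle|e_j\rangle$ to obtain $\mathcal{K}_{ii}+\mathcal{K}_{jj}\geq a_{ij}+a_{ji}$ for each $i\neq j$, and then sum over pairs using the identity $\operatorname{Tr}\mathfrak{L}=\operatorname{Tr}\mathcal{K}+\sum_{i\neq j}a_{ij}$. The only differences are cosmetic (you normalize the test vectors and sum over unordered rather than ordered pairs).
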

Again, the proof of (\ref{KL2}) is entirely based on the very notion of conditional 2-positivity of $\mathfrak{L}$. We defer the proof to \cite{SIF}.
Using (\ref{KL2}) we are in a position to prove:
\begin{corollary}
If $\mathfrak{L}$ generates a semigroup of 2-positive trace-preserving maps, then the corresponding relaxation rates satisfy (\ref{CON}).
\end{corollary}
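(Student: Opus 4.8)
The plan is to re-run the four-step argument that proves Theorem~\ref{Pro-0}, replacing at each step the appeal to complete positivity by the corresponding $2$-positivity statement. The only step in that proof that genuinely invokes complete positivity is Step~3 (Proposition~\ref{Pro-I}), and Proposition~\ref{Pro-II} already supplies its exact analogue ${\rm Tr}\,\mathfrak{L}\le d\,{\rm Tr}\,\mathcal{K}$ under the weaker hypothesis of a $2$-positive semigroup. Hence the task reduces to checking that Steps~1, 2 and 4 survive the weakening.

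First I would set up Step~1 as before, assuming (by the density and continuity argument underlying Remark~\ref{remark_posdef_stationary_state_dense}, which transfers verbatim to $2$-positive trace-preserving generators since $2$-positivity implies positivity and trace preservation already guarantees a stationary state) that $\mathfrak{L}$ admits a faithful stationary state $\bm{\omega}$. I then form $\mathfrak{L}^\dagger$, its KMS-adjoint $\mathfrak{L}^\#=V^{-1}_{\bm{\omega}}\circ\mathfrak{L}\circ V_{\bm{\omega}}$, and $\widetilde{\mathfrak{L}}=\tfrac12(\mathfrak{L}^\dagger+\mathfrak{L}^\#)$. The content to verify is that each of these generates a $2$-positive unital semigroup. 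This rests on three elementary facts: (i) the Hilbert--Schmidt adjoint of a $2$-positive map is $2$-positive, because ${\rm id}_2\otimes\Phi^\dagger=({\rm id}_2\otimes\Phi)^\dagger$ and the HS-adjoint of a positive map is positive, so $e^{t\mathfrak{L}^\dagger}=(e^{t\mathfrak{L}})^\dagger$ is $2$-positive; (ii) $e^{t\mathfrak{L}^\#}=V^{-1}_{\bm{\omega}}\circ e^{t\mathfrak{L}}\circ V_{\bm{\omega}}$ is a composition of the completely positive maps $V^{\pm 1}_{\bm{\omega}}$ with the $2$-positive map $e^{t\mathfrak{L}}$, and pre- or post-composing a $2$-positive map with a completely positive one preserves $2$-positivity (since ${\rm id}_2\otimes(\cdot)$ distributes over composition and maps CP factors to positive factors); (iii) conditional $2$-positivity (\ref{kCP}) with $k=2$ is a \emph{linear} constraint on the generator, so the generators of $2$-positive semigroups form a convex cone and the average $\widetilde{\mathfrak{L}}$ stays inside it. Unitality $\widetilde{\mathfrak{L}}(\oper_d)=0$ follows exactly as in the completely positive case.

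With $\widetilde{\mathfrak{L}}$ established as a generator of a $2$-positive unital semigroup, Step~2 goes through unchanged: $\widetilde{\mathfrak{L}}$ is KMS-self-adjoint, hence has real spectrum and satisfies ${\rm Tr}\,\widetilde{\mathfrak{L}}={\rm Tr}\,\mathfrak{L}$, and the Bendixson--Hirsch inequality used in Proposition~\ref{Pro-tilde} (a purely linear-algebraic statement) again reduces the claim to $\widetilde{\mathfrak{L}}$. For this reduced generator I invoke Proposition~\ref{Pro-II}, which needs only $2$-positivity, to obtain ${\rm Tr}\,\widetilde{\mathfrak{L}}\le d\,{\rm Tr}\,\widetilde{\mathcal{K}}$ for any orthonormal basis. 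Step~4 is then purely algebraic: choosing the eigenbasis of a Hermitian eigenvector attached to the real eigenvalue $-\widetilde\Gamma_{\rm max}$, Proposition~\ref{PRO-2} shows that $-\widetilde\Gamma_{\rm max}$ is an eigenvalue of $\widetilde{\mathcal{K}}$. Here $\widetilde{\mathcal{K}}$ is still a classical generator: its off-diagonal entries are nonnegative because $\widetilde{\mathfrak{L}}$ generates a positive semigroup, while its \emph{rows} now sum to zero because $\widetilde{\mathfrak{L}}$ is unital (in contrast to the column-sum-zero property of the trace-preserving case). Either way the spectral abscissa of $\widetilde{\mathcal{K}}$ is nonpositive, so $\widetilde\Gamma_{\rm max}\le{\rm Tr}(-\widetilde{\mathcal{K}})\le\tfrac1d{\rm Tr}(-\widetilde{\mathfrak{L}})=\tfrac1d\sum_\ell\widetilde\Gamma_\ell$; combined with Step~2 this yields (\ref{CON}) for $\mathfrak{L}$.

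I expect the main obstacle to be precisely the verification in Step~1 that the symmetrized generator $\widetilde{\mathfrak{L}}$ still generates a $2$-positive semigroup. In the completely positive case this was immediate from the GKLS cone being closed under the HS- and KMS-adjoint constructions and under convex combinations; in the present setting one must instead argue through the conditional $2$-positivity characterization (\ref{kCP}) and its convex-cone structure, and confirm that the completely positive sandwiching by $V^{\pm 1}_{\bm{\omega}}$ does not destroy $2$-positivity. Once these stability properties are in place, no further use of complete positivity is required and the remaining steps transfer mechanically, the only cosmetic change being the passage from column- to row-stochasticity of $\widetilde{\mathcal{K}}$ noted above.
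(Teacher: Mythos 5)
Your proposal is correct and follows essentially the same route as the paper: re-running the four-step argument with Proposition~\ref{Pro-II} replacing Proposition~\ref{Pro-I}, establishing that $\mathfrak{L}^\#$ and $\widetilde{\mathfrak{L}}$ still generate $2$-positive unital semigroups, and then invoking Bendixson--Hirsch and the classical-generator spectral bound. Your explicit justifications (the convex-cone argument via linearity of conditional $2$-positivity, and the row- versus column-sum-zero remark for $\widetilde{\mathcal{K}}$) are details the paper leaves implicit, but they do not change the method.
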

Indeed, let us check that all steps from the proof of the completely positive case (Section~\ref{S-II}) work here, as well. Step 1: let $\bm{\omega}$ be a positive definite stationary state of $\mathfrak{L}$ (recall from Remark~\ref{remark_posdef_stationary_state_dense} that this is not a restriction, also because the density result in \cite{SIF} was formulated for general $k$-positivity).
As the adjoint $\mathfrak{L}^\dagger$ defines a generator of a 2-positive unital dynamical semigroup, the KMS adjoint $\mathfrak{L}^\#$ of $\mathfrak{L}^\dagger$ satisfies
\begin{equation}
 \mathfrak{L}^\# = V^{-1}_{\bm{\omega}} \circ \mathfrak{L} \circ V_{\bm{\omega}} \,,
\end{equation}
 thus defining the generator of a 2-positive unital dynamical semigroup in the Heisenberg picture. Indeed, $\mathfrak{L}^\#$ gives rise to
\begin{equation}
 e^{t \mathfrak{L}^\#} = e^{t V^{-1}_{\bm{\omega}} \circ \mathfrak{L} \circ V_{\bm{\omega}}} =
 V^{-1}_{\bm{\omega}} \circ e^{t \mathfrak{L}} \circ V_{\bm{\omega}} \,,
\end{equation}
which is evidently 2-positive being a composition of two completely positive maps: $ V_{\bm{\omega}}$, $ V^{-1}_{\bm{\omega}}$ and the 2-positive map $e^{t \mathfrak{L}}$. Step 2 also works without any change due to the fact that the self-adjoint generator
\begin{equation}
 \widetilde{\mathfrak{L}} := \frac 12 ( \mathfrak{L}^\dagger + \mathfrak{L}^\#) \,,
\end{equation}
defines a generator of 2-positive trace-preserving maps, so Step 2 applies to it and one can apply the Bendixson--Hirsch inequality.  Finally, Step 4 does not depend on the assumption of complete positivity (it is true for any Hermiticity-preserving map).
\begin{remark}
It should be stressed that the assumption of 2-positivity is absolutely essential and cannot be reduced to simple positivity. {To illustrate this,} consider the following qubit generator
\begin{equation} \label{Pauli}
 \mathfrak{L}(\bm{\rho}) = \frac 12 \sum_{k=1}^{{3}} \gamma_k (\sigma_k \bm{\rho} \sigma_k - \bm{\rho})\,.
\end{equation}
{One finds for the spectrum: $\mathcal{L}(\oper_2) = 0$, together with
\begin{equation}\label{}
  \mathcal{L}(\sigma_k) = \lambda_k \sigma_k ,
\end{equation}
with $\lambda_1 = -(\gamma_2+{\gamma_3})$, $\lambda_2 = -(\gamma_3+\gamma_1)$, and $\lambda_3 = -(\gamma_1+\gamma_2)$. Hence, the corresponding relaxation rates read
\begin{equation}\label{}
  \Gamma_1 = \gamma_2 + {\gamma_3} \ , \ \ \Gamma_2 = \gamma_3+\gamma_1 \ ,\ \ \Gamma_3 = \gamma_1+\gamma_2 .
\end{equation}
}
 On the other hand, $\mathfrak{L}$ generates a semigroup of positive maps if and only if the transition rates satisfy \cite{ChrD2022}
$$
\gamma_1 + \gamma_2 \geq 0 \ , \ \ \gamma_2 + \gamma_3 \geq 0 \ , \ \ \gamma_3 + \gamma_1 \geq 0 \,,
$$
which means that at most one rate can be negative. Consider $\gamma_1=\gamma_2=1$ and $\gamma_3=-1$. One finds for the relaxation rates
$$
\Gamma_1 = \Gamma_2 = 0 \ , \ \ \Gamma_3 = 2 \,,
$$
and hence $\Gamma_{\rm max}= \sum_k \Gamma_k$ so the constraint (\ref{CON}) is violated.
\end{remark}
This example proves:
\begin{corollary}
If $\mathfrak{L}$ generates a semigroup of positive trace-preserving maps, then in general the relaxation rates of $\mathfrak{L}$ satisfy only the trivial {(yet tight)} constraint
\begin{equation} \label{trivial}
 \Gamma_{\rm max} \leq \sum_{\ell=1}^{d^2-1} \Gamma_\ell\,.
\end{equation}
\end{corollary}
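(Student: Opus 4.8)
The plan is to split the claim into two independent parts: the validity of the bound \eqref{trivial}, and its tightness. Neither requires any new machinery, so the whole argument is short; the only input is a spectral fact already recorded earlier in the paper.

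For the inequality itself, I would rely on the spectral properties of positive trace-preserving generators noted in Section~\ref{S-II}: any $\mathfrak{L}$ giving rise to a semigroup of positive trace-preserving maps has its spectrum in the closed left half-plane, $\mathrm{Re}\,\lambda_\alpha \leq 0$, with a leading eigenvalue $\lambda_0 = 0$. Consequently every relaxation rate $\Gamma_\ell = -\mathrm{Re}\,\lambda_\ell$ is non-negative, and the maximum of a finite family of non-negative numbers is trivially at most their sum,
\begin{equation}
 \Gamma_{\rm max} = \max_{1 \leq \ell \leq d^2-1} \Gamma_\ell \leq \sum_{\ell=1}^{d^2-1} \Gamma_\ell\,,
\end{equation}
which is precisely \eqref{trivial}. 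This uses nothing beyond the sign of the rates, which is exactly why the universal constant degrades all the way to $1$ in the merely positive class.

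For tightness I would simply invoke the qubit example of the preceding Remark: the Pauli generator \eqref{Pauli} with $\gamma_1 = \gamma_2 = 1$, $\gamma_3 = -1$ generates a positive trace-preserving semigroup (the three conditions $\gamma_i + \gamma_j \geq 0$ all hold, two of them saturated), and produces $\Gamma_1 = \Gamma_2 = 0$, $\Gamma_3 = 2$, so that $\Gamma_{\rm max} = 2 = \sum_{k=1}^3 \Gamma_k$. Thus \eqref{trivial} is attained with equality, ruling out any universal constant strictly below $1$ for this class.

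The main obstacle, such as it is, is conceptual rather than computational: one must be sure that the eigenvalue sign condition $\mathrm{Re}\,\lambda_\alpha \leq 0$ persists when complete positivity is weakened to mere positivity. This is the sole nontrivial ingredient, and it has already been established in the paper's discussion; everything else is immediate. The example then certifies optimality, placing the positive case in sharp contrast with the improved constants $\tfrac1d$ and $\tfrac{2}{d+1}$ available under $2$-positivity and the Schwarz property.
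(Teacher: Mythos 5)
Your proposal is correct and follows essentially the same route as the paper: the trivial bound is immediate from the non-negativity of the rates (a spectral property that, as the paper notes in Section~\ref{S-II}, holds for any positive trace-preserving semigroup), and tightness is certified by exactly the Pauli example with $\gamma_1=\gamma_2=1$, $\gamma_3=-1$ from the preceding Remark, which is precisely how the paper proves the corollary (``This example proves:''). No gaps; your write-up merely makes explicit the $\max\leq\sum$ step that the paper leaves implicit.
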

This shows that there is {a} critical difference between positivity and 2-positivity.

\section{Universal constraint for semigroups of Schwarz maps} \label{S-IV}

As seen just now we cannot replace 2-positivity by 1-positivity while still getting a non-trivial constraint on the relaxation rates; for the 1-positive scenario we saw that one cannot go beyond the trivial bound (\ref{trivial}). Interestingly, there exists a class of maps which interpolate between positive and 2-positive maps: the so-called Schwarz maps \cite{Paulsen}. A unital linear map $\Phi : \Md \to \Md$ is called a Schwarz map if
\begin{equation}\label{KS}
 \Phi(X^\dagger X) \geq \Phi(X)^\dagger \Phi(X) \,,
\end{equation}
for all $X \in \Md$ \cite{Paulsen,Kadison83,Bhatia07,Stormer}. All Schwarz maps are necessarily positive.
A key example of a positive unital map which is not Schwarz is the transposition map $T$. Indeed, for $X = |1\>\<2|$ one finds $T(X^\dagger X) = T(|2\>\<2|) = |2\>\<2|$, whereas $T(X)^\dagger T(X) =|1\>\<1| $.

{Interestingly, any unital 2-positive map $\Phi : \Md \to \Md$ satisfies (\ref{KS}). Indeed, consider a positive block operator
$$   \mathbb{X} = \left( \begin{array}{cc} X^\dagger X & X^\dagger \\ X & \oper_d \end{array} \right) \ge 0 , $$
with arbitrary $X \in \Md$. Then if $\Phi$ is unital and 2-positive one has
$$  ({\rm id}_2 \otimes \Phi)(\mathbb{X}) = \left( \begin{array}{cc} \Phi(X^\dagger X) & \Phi(X^\dagger) \\ \Phi(X) & \oper_d \end{array} \right)  \geq 0 , $$
which is equivalent to (\ref{KS}). However, there are Schwarz maps which are not 2-positive. A simple example of such a map $\Phi : \mathcal{M}_2(\mathbb{C}) \to \mathcal{M}_2(\mathbb{C})  $ reads
$$  \Phi(X) = \frac 12 \left( \frac 12 \oper_2 {\rm Tr}\, X + X^{\rm T} \right) .  $$
In a recent paper \cite{Bihalan} the authors presented a large family of qubit maps and identified which of them are positive but not Schwarz, and which are Schwarz but not 2-positive (which for one qubit is just completely positivity).
The following proposition illustrates an interesting property of Schwarz maps.
\begin{proposition}[Variance contractivity]
Let $\Phi$ be a positive trace-preserving map with an invariant state $\bm{\omega}$. If $\Phi^\dagger$ is a Schwarz map, then for any operator $A \in \Md$  one has
\begin{equation}  \label{Var}
    \mathrm{Var}_\omega(\Phi^\dagger(A)) \;\le\; \mathrm{Var}_\omega(A),
\end{equation}
where $\mathrm{Var}_\omega(A) := \operatorname{Tr}(\bm{\omega} A^\dagger A) - |\operatorname{Tr}(\bm{\omega} A)|^2$ is the quantum variance of $A$ in a state $\bm{\omega}$.
\end{proposition}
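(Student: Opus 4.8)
The plan is to reduce the statement to two independent ingredients: a \emph{linear conservation law} coming from the invariance of $\bm{\omega}$ together with trace preservation, and a quadratic bound coming directly from the Schwarz inequality (\ref{KS}). First I would record the bookkeeping: since $\Phi$ is trace-preserving, $(\oper_d,\Phi(X))_{\rm HS}=(\oper_d,X)_{\rm HS}$ for all $X$, which is exactly $\Phi^\dagger(\oper_d)=\oper_d$. This unitality is precisely the normalization under which $\Phi^\dagger$ is assumed to be a Schwarz map, so the hypotheses are consistent and (\ref{KS}) is available for $\Phi^\dagger$.

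The key observation I would then isolate is that, for \emph{every} operator $B\in\Md$,
\[
 \operatorname{Tr}(\bm{\omega}\,\Phi^\dagger(B)) = (\bm{\omega},\Phi^\dagger(B))_{\rm HS} = (\Phi(\bm{\omega}),B)_{\rm HS} = \operatorname{Tr}(\bm{\omega}\,B),
\]
where I used that $\bm{\omega}$ is Hermitian, the defining property of the adjoint, and the invariance $\Phi(\bm{\omega})=\bm{\omega}$. Applying this with $B=A$ settles the subtracted term in the variance at once, since it gives $|\operatorname{Tr}(\bm{\omega}\,\Phi^\dagger(A))|^2=|\operatorname{Tr}(\bm{\omega}\,A)|^2$.

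It then remains to control the first term $\operatorname{Tr}(\bm{\omega}\,\Phi^\dagger(A)^\dagger\Phi^\dagger(A))$. Here I would invoke the Schwarz property (\ref{KS}) of $\Phi^\dagger$ with $X=A$, yielding the operator inequality $\Phi^\dagger(A^\dagger A)\ge\Phi^\dagger(A)^\dagger\Phi^\dagger(A)$. Because $\bm{\omega}\ge0$, the functional $B\mapsto\operatorname{Tr}(\bm{\omega}\,B)$ is monotone on Hermitian operators (indeed $\operatorname{Tr}(\bm{\omega}\,B)=\operatorname{Tr}(\bm{\omega}^{1/2}B\,\bm{\omega}^{1/2})\ge0$ whenever $B\ge0$), so this operator inequality integrates against $\bm{\omega}$ and, combined with the conservation law applied to $B=A^\dagger A$, gives
\[
 \operatorname{Tr}(\bm{\omega}\,\Phi^\dagger(A)^\dagger\Phi^\dagger(A)) \le \operatorname{Tr}(\bm{\omega}\,\Phi^\dagger(A^\dagger A)) = \operatorname{Tr}(\bm{\omega}\,A^\dagger A).
\]
Subtracting the two equal squared linear terms from the two sides then yields exactly $\mathrm{Var}_\omega(\Phi^\dagger(A))\le\mathrm{Var}_\omega(A)$, i.e.\ (\ref{Var}).

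I do not expect a genuine obstacle: the argument is a three-line chain once the conservation law is in hand. The only points deserving care are (i) verifying that (\ref{KS}) is used with the correct normalization, namely the unitality of $\Phi^\dagger$ that is equivalent to trace preservation of $\Phi$, and (ii) justifying the monotonicity of $\operatorname{Tr}(\bm{\omega}\,\cdot)$ from $\bm{\omega}\ge0$ alone. In particular, faithfulness of the invariant state is \emph{not} needed---only positivity---which is why the contractivity holds for an arbitrary invariant $\bm{\omega}$.
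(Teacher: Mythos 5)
Your proof is correct and takes essentially the same route as the paper's: the invariance $\Phi(\bm{\omega})=\bm{\omega}$ settles the subtracted linear term, the Schwarz inequality for $\Phi^\dagger$ bounds the quadratic term, and monotonicity of $\operatorname{Tr}(\bm{\omega}\,\cdot)$ on positive operators (applied once more with invariance to $A^\dagger A$) finishes the argument. You merely make explicit two points the paper leaves implicit---the unitality of $\Phi^\dagger$ following from trace preservation of $\Phi$, and the justification of the monotonicity step---and you correctly observe that faithfulness of $\bm{\omega}$ is never needed.
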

\begin{proof} Indeed, one computes
\[  \mathrm{Var}_\omega(\Phi^\dagger(A)) = {\rm Tr}(\bm{\omega} \, \Phi^\dagger(A^\dagger)\Phi^\dagger(A)) - |\operatorname{Tr}(\bm{\omega} \Phi^\dagger(A))|^2 =
{\rm Tr}(\bm{\omega} \, \Phi^\dagger(A^\dagger)\Phi^\dagger(A)) - |\operatorname{Tr}(\bm{\omega}A)|^2 ,
\]
due to $\Phi(\bm{\omega}) = \bm{\omega}$. Now, using the Schwarz inequality one obtains
\[
 \mathrm{Var}_\omega(\Phi^\dagger(A)) \leq {\rm Tr}(\bm{\omega} \, \Phi^\dagger(A^\dagger A)) - |\operatorname{Tr}(\bm{\omega}A)|^2 = {\rm Tr}(\bm{\omega} \, A^\dagger A) - |\operatorname{Tr}(\bm{\omega}A)|^2 = \mathrm{Var}_\omega(A) ,
 \]
which completes the proof.
\end{proof}}
{
\begin{remark}
As can be seen by using Kadison’s inequality, (\ref{Var}) also holds for unital positive maps when restricted to Hermitian matrices.
The above proposition means that, in the presence of the Schwarz property, this can be extended to non-Hermitian operators as well.
In fact, the variance of non-Hermitian operators also has physical applications and can, for example, be measured through weak measurements \cite{Pati}.
\end{remark}
}
{\begin{remark} In analogy to $k$-positivity one says that $\Phi$ is a $k$-Schwarz map if ${\rm id}_k \otimes \Phi$ is a Schwarz map. Finally, $\Phi$ is completely Schwarz if it is $k$-Schwarz for all $k=1,2,\ldots$. Interestingly, $k$-Schwarz maps interpolate between unital $k$-positive and unital $(k+1)$-positive maps and completely Schwarz maps coincide with unital completely positive maps \cite{Paulsen}.
\end{remark}}

Since we consider unital maps, we pass to the Heisenberg picture and consider the corresponding Heisenberg picture generator $\mathfrak{L}^\dagger$. In his seminal paper \cite{LinG1976} Lindblad  derived the following result:
\begin{proposition}
\label{PRO-L} $\mathfrak{L}^\dagger$ generates a semigroup of unital Schwarz maps if and only if $\mathfrak{L}^\dagger(\oper_d)=0$ and
\begin{equation}\label{L!}
 \mathfrak{L}^\dagger(X^\dagger X) \geq \mathfrak{L}^\dagger(X^\dagger) X + X^\dagger \mathfrak{L}^\dagger(X) \,,
\end{equation}
for all $X \in \Md$.
\end{proposition}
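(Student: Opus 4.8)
The plan is to set $\Phi_t := e^{t\mathfrak{L}^\dagger}$ and to read (\ref{L!}) as the statement that the \emph{dissipation operator}
\[
 D(X) := \mathfrak{L}^\dagger(X^\dagger X) - \mathfrak{L}^\dagger(X^\dagger)X - X^\dagger\mathfrak{L}^\dagger(X)
\]
is positive semidefinite for every $X\in\Md$; throughout I use that $\mathfrak{L}^\dagger$ is Hermiticity-preserving, so that $\mathfrak{L}^\dagger(X^\dagger)=\mathfrak{L}^\dagger(X)^\dagger$. The unitality half is immediate: since $\tfrac{d}{dt}\Phi_t(\oper_d)=\mathfrak{L}^\dagger(\Phi_t(\oper_d))$, the condition $\mathfrak{L}^\dagger(\oper_d)=0$ is equivalent to $\Phi_t(\oper_d)=\oper_d$ for all $t\ge 0$. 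It therefore remains to match the operator Schwarz inequality for the family $\{\Phi_t\}$ with the infinitesimal inequality $D\ge 0$.

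For the necessity direction I would differentiate the \emph{Schwarz defect}
\[
 f(t) := \Phi_t(X^\dagger X) - \Phi_t(X)^\dagger\Phi_t(X) .
\]
If every $\Phi_t$ is a Schwarz map then $f(t)\ge 0$ for all $t\ge 0$, while $f(0)=0$. Sandwiching between an arbitrary vector $|\xi\rangle$ shows that $t\mapsto\langle\xi|f(t)|\xi\rangle$ is nonnegative with a zero at $t=0$, so its right derivative at $0$ is nonnegative; as $|\xi\rangle$ is arbitrary this gives $f'(0)\ge 0$. A direct computation (the product rule on $\Phi_t(X)^\dagger\Phi_t(X)$ at $t=0$) yields $f'(0)=D(X)$, which is exactly (\ref{L!}).

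For the sufficiency direction the key step is to run the interpolation
\[
 g(s) := \Phi_{t-s}\big(\Phi_s(X)^\dagger\Phi_s(X)\big),\qquad 0\le s\le t ,
\]
whose endpoints are $g(0)=\Phi_t(X^\dagger X)$ and $g(t)=\Phi_t(X)^\dagger\Phi_t(X)$. Writing $Y_s:=\Phi_s(X)$, using $\tfrac{d}{ds}\Phi_{t-s}=-\Phi_{t-s}\mathfrak{L}^\dagger$, $\dot Y_s=\mathfrak{L}^\dagger(Y_s)$, and the Leibniz rule, the Hermiticity-preserving cancellations collapse the derivative to
\[
 g'(s) = -\,\Phi_{t-s}\big(D(\Phi_s(X))\big) .
\]
Since $D\ge 0$, the sign of $g'(s)$ is governed by positivity of $\Phi_{t-s}$, so $g$ is nonincreasing and $g(t)\le g(0)$ delivers the Schwarz inequality for $\Phi_t$.

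The main obstacle is precisely this appeal to positivity of $\Phi_{t-s}$: a Schwarz map is automatically positive, yet here positivity is needed as an input to the very argument meant to produce the Schwarz property, so the naive bootstrap is circular. I would break the circle by extracting positivity of the whole semigroup \emph{first}, directly from (\ref{L!}): given $Y\ge 0$ and $|\xi\rangle$ with $\langle\xi|Y|\xi\rangle=0$, the choice $X=Y^{1/2}$ forces $X|\xi\rangle=0$, whence (\ref{L!}) collapses to $\langle\xi|\mathfrak{L}^\dagger(Y)|\xi\rangle\ge 0$. This is the Kossakowski--Evans--Lewis conditional-positivity criterion -- the $k=1$ instance of the characterization used above, transported to the Heisenberg picture by adjoint duality -- and it guarantees that each $\Phi_t$ is positive. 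With positivity secured independently of the Schwarz bootstrap, the interpolation argument closes and the equivalence follows.
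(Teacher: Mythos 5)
Your proof is correct, but there is nothing in the paper to compare it against step by step: the paper does not prove Proposition~\ref{PRO-L} at all, it states the result as Lindblad's theorem and cites \cite{LinG1976}. What you have written is a sound, self-contained, finite-dimensional reconstruction of that classical result, along essentially the standard lines. The necessity half (right derivative at $t=0$ of the nonnegative Schwarz defect $f(t)=\Phi_t(X^\dagger X)-\Phi_t(X)^\dagger\Phi_t(X)$, which vanishes at $t=0$) is fine, and the sufficiency half via the interpolation $g(s)=\Phi_{t-s}\big(\Phi_s(X)^\dagger\Phi_s(X)\big)$ with $g'(s)=-\Phi_{t-s}\big(D(\Phi_s(X))\big)\le 0$ is the classical Lindblad/Evans argument. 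You also correctly identified the one genuinely delicate point---positivity of the maps $\Phi_{t-s}$ must be secured \emph{before} the interpolation can be run, on pain of circularity---and your repair is valid: for $Y\ge 0$ and $\langle\xi|Y|\xi\rangle=0$, taking $X=Y^{1/2}$ in \eqref{L!} forces $X|\xi\rangle=0$, whence $\langle\xi|\mathfrak{L}^\dagger(Y)|\xi\rangle\ge 0$; this is precisely conditional positivity ($k=1$), and positivity of $e^{t\mathfrak{L}^\dagger}$ then follows from the characterization of generators of $k$-positive semigroups that the paper itself quotes without proof in Section~\ref{S-III} from \cite{GoKoSu1976,Evans1,Koss72}, so your argument is self-contained modulo exactly one external ingredient the paper already takes as given. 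Two remarks: (i) your standing assumption that $\mathfrak{L}^\dagger$ is Hermiticity-preserving is genuinely used (to identify $\mathfrak{L}^\dagger(X)^\dagger=\mathfrak{L}^\dagger(X^\dagger)$ in both derivative computations, and to apply the conditional-positivity criterion), and it is legitimate here: it is a standing convention of the paper, it is implicit in reading \eqref{L!} as an operator inequality (whose left-hand side must be Hermitian), and in the necessity direction it follows from positivity of the Schwarz maps $\Phi_t$; (ii) what the paper's citation buys that your argument does not is Lindblad's more general operator-algebraic setting, whereas what your argument buys is a proof at the same level of explicitness as the algebraic derivations in the rest of the paper.
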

Lindblad called (\ref{L!}) {\em dissipativity condition} and $\mathfrak{L}$ a {\em dissipative generator} \cite{LinG1976}. In a recent paper \cite{CKM24} the following constraint for relaxation rates of a qubit dissipative generator was derived.
\begin{proposition}
If $\mathfrak{L}$ is a qubit dissipative generator, i.e., $e^{t \mathfrak{L}^\dagger}$ is a unital Schwarz map for $t \geq 0$, then the corresponding relaxation rates satisfy
\begin{equation} \label{2/3}
 \Gamma_{\rm max} \leq \frac{2}{3} \sum_{\ell=1}^3 \Gamma_\ell\,.
\end{equation}
\end{proposition}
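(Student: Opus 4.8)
The plan is to run the same four-step scheme used in Section~\ref{S-II} for the completely positive case, with the Choi-matrix input of Proposition~\ref{Pro-I} replaced by Lindblad's dissipativity condition \eqref{L!} from Proposition~\ref{PRO-L}. First I would reduce, exactly as in Remark~\ref{remark_posdef_stationary_state_dense}, to a generator possessing a faithful stationary state $\bm{\omega}>0$, and work in the Heisenberg picture where the hypothesis reads $\mathfrak{L}^\dagger(\oper_d)=0$ together with $\mathfrak{L}^\dagger(X^\dagger X)\geq\mathfrak{L}^\dagger(X^\dagger)X+X^\dagger\mathfrak{L}^\dagger(X)$ for all $X$. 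The decisive structural observation is that \eqref{L!} is a \emph{linear} inequality in the generator: hence the class of dissipative generators is closed under convex combinations, so the KMS-symmetrised generator $\widetilde{\mathfrak{L}}=\tfrac12(\mathfrak{L}^\dagger+\mathfrak{L}^\#)$ is again dissipative \emph{provided} its KMS-adjoint constituent $\mathfrak{L}^\#$ is. Granting this, $\widetilde{\mathfrak{L}}$ is KMS-self-adjoint (hence has real spectrum) and dissipative, Steps~1--2 carry over verbatim, and the Bendixson--Hirsch inequality gives $\Gamma_{\rm max}\leq\widetilde{\Gamma}_{\rm max}$ together with ${\rm Tr}\,\widetilde{\mathfrak{L}}={\rm Tr}\,\mathfrak{L}$.

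The genuinely new ingredient is the Schwarz analogue of \eqref{KL}, namely ${\rm Tr}\,\mathfrak{L}\leq\tfrac{d+1}{2}\,{\rm Tr}\,\mathcal{K}$ for any orthonormal basis $\{|e_i\>\}_{i=1}^d$, where $\mathcal{K}_{ij}={\rm Tr}(P_i\mathfrak{L}(P_j))=\<e_j|\mathfrak{L}^\dagger(P_i)|e_j\>$. I would prove it directly from \eqref{L!}, never invoking positivity of a Choi matrix. Writing $a_{ij}:=\<e_i|\mathfrak{L}^\dagger(|e_i\>\<e_j|)|e_j\>$ and splitting the trace into diagonal and off-diagonal parts gives ${\rm Tr}\,\mathfrak{L}={\rm Tr}\,\mathcal{K}+\sum_{i\neq j}a_{ij}$, since $a_{ii}=\mathcal{K}_{ii}$. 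Testing \eqref{L!} on the matrix unit $X=|e_i\>\<e_j|$ (so $X^\dagger X=P_j$) and sandwiching between $\<e_j|\cdot|e_j\>$ produces $\mathcal{K}_{jj}\geq a_{ij}+a_{ji}=2\,{\rm Re}\,a_{ij}$ (Hermiticity preservation gives $a_{ji}=\overline{a_{ij}}$); the symmetric test $X=|e_j\>\<e_i|$ sandwiched by $\<e_i|\cdot|e_i\>$ gives $\mathcal{K}_{ii}\geq 2\,{\rm Re}\,a_{ij}$, whence $2\,{\rm Re}\,a_{ij}\leq\min(\mathcal{K}_{ii},\mathcal{K}_{jj})\leq\tfrac12(\mathcal{K}_{ii}+\mathcal{K}_{jj})$. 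Summing over the $\binom{d}{2}$ unordered pairs, with each $\mathcal{K}_{ii}$ appearing $d-1$ times, bounds $\sum_{i\neq j}a_{ij}\leq\tfrac{d-1}{2}\,{\rm Tr}\,\mathcal{K}$ and yields the factor $\tfrac{d+1}{2}$. This is exactly the harmonic-mean interpolation advertised in \eqref{mainres2}: the weaker Schwarz hypothesis replaces the factor $d$ by $\tfrac{d+1}{2}=1/c_d$, the remaining factor $d-1$ being replaced by its half because only the \emph{symmetrised} combination $2\,{\rm Re}\,a_{ij}$ is controlled.

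With these two pieces the argument closes as before. Step~4 is insensitive to the class of the generator, using only Hermiticity preservation: since $\widetilde{\mathfrak{L}}$ has real spectrum, the eigenvalue $-\widetilde{\Gamma}_{\rm max}$ admits a Hermitian eigenvector, whose eigenbasis makes $-\widetilde{\Gamma}_{\rm max}$ an eigenvalue of the associated classical generator $\widetilde{\mathcal{K}}$, so $\widetilde{\Gamma}_{\rm max}\leq{\rm Tr}(-\widetilde{\mathcal{K}})$. Chaining the estimates,
\begin{equation}
 \Gamma_{\rm max}\leq\widetilde{\Gamma}_{\rm max}\leq{\rm Tr}(-\widetilde{\mathcal{K}})\leq\frac{2}{d+1}\,{\rm Tr}(-\widetilde{\mathfrak{L}})=\frac{2}{d+1}\sum_{\ell=1}^{d^2-1}\Gamma_\ell,
\end{equation}
which for $d=2$ is precisely \eqref{2/3}.

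The step I expect to be the main obstacle is the one structural fact assumed above: that the KMS-adjoint $\mathfrak{L}^\#$ of a dissipative generator is again dissipative, equivalently that $e^{t\mathfrak{L}^\#}=V^{-1}_{\bm{\omega}}\circ e^{t\mathfrak{L}}\circ V_{\bm{\omega}}$ is a \emph{unital Schwarz} semigroup. Unlike the completely positive and $2$-positive cases, this cannot simply be read off from the composition $V^{-1}_{\bm{\omega}}\circ(\cdot)\circ V_{\bm{\omega}}$, because the Schwarz inequality \eqref{KS} is not stable under arbitrary pre- and post-composition by completely positive maps (it requires unitality, which the congruences $V^{\pm}_{\bm{\omega}}$ destroy). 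Unpacking the congruence, the claim is equivalent to the weighted operator inequality $e^{t\mathfrak{L}}(Y^\dagger\bm{\omega}^{-1}Y)\geq e^{t\mathfrak{L}}(Y^\dagger)\bm{\omega}^{-1}e^{t\mathfrak{L}}(Y)$ for all $Y$; establishing it — either from the KMS-symmetry/$s$-detailed-balance structure or as a separate lemma on KMS-duals of Schwarz maps — is the technical crux, after which the remainder of the proof is routine.
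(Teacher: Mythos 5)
You follow the paper's own route essentially step for step. The paper does not prove \eqref{2/3} directly for qubits either (it quotes it from \cite{CKM24}); instead it establishes the general-$d$ bound \eqref{Sd} by exactly the scheme you describe---reduction to a faithful stationary state, KMS symmetrization $\widetilde{\mathfrak{L}}=\tfrac12(\mathfrak{L}^\dagger+\mathfrak{L}^\#)$, Bendixson--Hirsch, and the classical generator $\mathcal{K}$---with Proposition~\ref{Pro-III} as the new ingredient. Your derivation of the trace inequality \eqref{KL3} coincides with the paper's proof in Appendix~\ref{sec_app_proof_pro_III} up to bookkeeping: the paper likewise tests \eqref{L!} on $X=|e_i\rangle\langle e_j|$, sandwiches with $\langle e_j|\cdot|e_j\rangle$, uses Hermiticity-preservation, and sums over $i\neq j$; your intermediate step via $\min(\mathcal{K}_{ii},\mathcal{K}_{jj})$ changes nothing. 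Your observation that dissipativity is linear in the generator, so that $\widetilde{\mathfrak{L}}$ inherits it from $\mathfrak{L}^\dagger$ and $\mathfrak{L}^\#$, is also how the paper concludes that step.

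The step you flag as the crux---that $\mathfrak{L}^\#=V_{\bm{\omega}}^{-1}\circ\mathfrak{L}\circ V_{\bm{\omega}}$ is again dissipative, i.e., that $e^{t\mathfrak{L}^\#}=V_{\bm{\omega}}^{-1}\circ e^{t\mathfrak{L}}\circ V_{\bm{\omega}}$ is a unital Schwarz semigroup---is precisely the step the paper declares ``evident,'' on the grounds that it is ``a composition of completely positive maps $V_{\bm{\omega}}$, $V_{\bm{\omega}}^{-1}$ and the unital Schwarz map $e^{t\mathfrak{L}^\dagger}$.'' Your objection to that reasoning is correct: the Schwarz property \eqref{KS} is stable under composition with \emph{unital} (completely) positive maps, but $V_{\bm{\omega}}^{\pm1}$ are unital only when $\bm{\omega}\propto\oper_d$; congruence by CP maps does preserve positivity, $2$-positivity and complete positivity---which is why the analogous steps in Sections~\ref{S-II} and~\ref{S-III} are sound---but it does not obviously preserve a quadratic constraint like \eqref{KS}. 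Unfolding the congruence, what must be shown is your weighted inequality
\begin{equation}
 e^{t\mathfrak{L}}\bigl(Y^\dagger\bm{\omega}^{-1}Y\bigr)\;\geq\; e^{t\mathfrak{L}}(Y^\dagger)\,\bm{\omega}^{-1}\,e^{t\mathfrak{L}}(Y)\,,
\end{equation}
equivalently, at the generator level, $\mathfrak{L}(Y^\dagger\bm{\omega}^{-1}Y)\geq\mathfrak{L}(Y^\dagger)\,\bm{\omega}^{-1}Y+Y^\dagger\bm{\omega}^{-1}\mathfrak{L}(Y)$ for all $Y$. For $2$-positive semigroups this follows from a Schur-complement argument applied to $({\rm id}_2\otimes e^{t\mathfrak{L}})$ acting on the block matrix with entries $Y^\dagger\bm{\omega}^{-1}Y,\,Y^\dagger,\,Y,\,\bm{\omega}$ (using $e^{t\mathfrak{L}}(\bm{\omega})=\bm{\omega}$), but for maps that are merely Schwarz it does not follow from \eqref{KS} by the available order-preservation/composition arguments (the natural chains of estimates run in the wrong direction), and neither you nor the paper supplies a proof. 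So your proposal is exactly as complete as the paper's own argument: you have not missed an idea that the paper has; rather, you have correctly isolated a lacuna that the paper papers over, and closing it (a lemma stating that KMS duals of unital Schwarz semigroups with faithful invariant state are again Schwarz) is the genuine outstanding task for the general-$d$ claim \eqref{Sd}. The qubit statement \eqref{2/3} under review is nevertheless safe, since the paper attributes it to \cite{CKM24}, where it was obtained independently of the KMS-duality argument.
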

The purpose of this section  is to generalize (\ref{2/3}) to Schwarz semigroups on $\Md$ with {arbitrary} $d\geq 2$. Our generalization is based on the following relation between such $\mathfrak{L}$ and the corresponding classical generator $\mathcal{K}$ {\eqref{Kij}}:
\begin{proposition} \label{Pro-III}
If $\mathfrak{L}^\dagger$ generates a semigroup of unital Schwarz maps, then
\begin{equation} \label{KL3}
 {\rm Tr}\mathfrak{L}^\dagger = {\rm Tr}\mathfrak{L} \leq \frac{d+1}{2}\,{\rm Tr}\, \mathcal{K} \,,
\end{equation}
for any orthonormal basis $\{|e_i\rangle \}_{i=1}^d$.
\end{proposition}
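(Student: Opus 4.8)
The plan is to reduce the trace inequality \eqref{KL3} to a bound on the \emph{off-diagonal} matrix elements of $\mathfrak{L}^\dagger$ in the basis $\{|e_i\rangle\}_{i=1}^d$, and then to control these elements one pair at a time using Lindblad's dissipativity condition \eqref{L!}. First I would decompose the trace. Using the identity ${\rm Tr}\,\Phi = \sum_{i,j}\langle e_i|\Phi(|e_i\rangle\langle e_j|)|e_j\rangle$ with $\Phi=\mathfrak{L}^\dagger$ and separating the diagonal $i=j$ from the off-diagonal $i\neq j$ terms, the diagonal part is exactly $\sum_i \langle e_i|\mathfrak{L}^\dagger(P_i)|e_i\rangle=\sum_i\mathcal{K}_{ii}={\rm Tr}\,\mathcal{K}$ (note $\mathcal{K}_{ii}={\rm Tr}(P_i\mathfrak{L}(P_i))=\langle e_i|\mathfrak{L}^\dagger(P_i)|e_i\rangle$). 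Abbreviating $a_{ij}:=\langle e_i|\mathfrak{L}^\dagger(|e_i\rangle\langle e_j|)|e_j\rangle$, this gives
\[
 {\rm Tr}\,\mathfrak{L}^\dagger={\rm Tr}\,\mathcal{K}+S,\qquad S:=\sum_{i\neq j}a_{ij}.
\]
Since ${\rm Tr}\,\mathfrak{L}^\dagger$ and ${\rm Tr}\,\mathcal{K}$ are real, $S$ is real; equivalently $a_{ji}=\overline{a_{ij}}$ by Hermiticity-preservation, so $S=\sum_{i<j}2\,{\rm Re}\,a_{ij}$. Thus the claim \eqref{KL3} is equivalent to $S\le \tfrac{d-1}{2}\,{\rm Tr}\,\mathcal{K}$.

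Next I would bound each $a_{ij}$ by feeding the matrix unit $X=|e_i\rangle\langle e_j|$ (with $i\neq j$) into the dissipativity condition \eqref{L!}. Then $X^\dagger X=P_j$, and sandwiching the resulting operator inequality between $\langle e_j|\,\cdot\,|e_j\rangle$ collapses the left-hand side to $\langle e_j|\mathfrak{L}^\dagger(P_j)|e_j\rangle=\mathcal{K}_{jj}$, while the right-hand side becomes $\langle e_j|\mathfrak{L}^\dagger(|e_j\rangle\langle e_i|)|e_i\rangle+\langle e_i|\mathfrak{L}^\dagger(|e_i\rangle\langle e_j|)|e_j\rangle=\overline{a_{ij}}+a_{ij}=2\,{\rm Re}\,a_{ij}$, where I again use $\mathfrak{L}^\dagger(|e_j\rangle\langle e_i|)=\mathfrak{L}^\dagger(|e_i\rangle\langle e_j|)^\dagger$. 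This yields the key per-pair estimate $\mathcal{K}_{jj}\ge 2\,{\rm Re}\,a_{ij}$ for every $i\neq j$.

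Finally I would sum and count. For each fixed $j$ there are $d-1$ indices $i\neq j$, so summing the per-pair estimate over all ordered pairs gives $\sum_{i\neq j}2\,{\rm Re}\,a_{ij}\le \sum_{i\neq j}\mathcal{K}_{jj}=(d-1)\,{\rm Tr}\,\mathcal{K}$, i.e.\ $2S\le(d-1)\,{\rm Tr}\,\mathcal{K}$. Combined with the decomposition above this gives ${\rm Tr}\,\mathfrak{L}^\dagger={\rm Tr}\,\mathcal{K}+S\le\frac{d+1}{2}\,{\rm Tr}\,\mathcal{K}$, which is \eqref{KL3}; the constant $(d-1)/2$, equivalently the harmonic-mean prefactor $\tfrac{2}{d+1}$, emerges purely from this counting.

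The hard part is not any calculation but the correct choice of test data: taking $X$ to be the matrix unit $|e_i\rangle\langle e_j|$ and sandwiching specifically with $|e_j\rangle$ is what makes the quadratic dissipativity expression collapse to a clean relation between a single diagonal entry $\mathcal{K}_{jj}$ and the real part of a single off-diagonal element $a_{ij}$, and it is exactly these $a_{ij}$ that reassemble into $S$. I would also verify that no positivity of the off-diagonal entries $\mathcal{K}_{ij}$ is needed anywhere---only the dissipativity operator inequality and Hermiticity-preservation enter---so the argument rests solely on the Schwarz property, consistently with the section's aim of weakening $2$-positivity to the Schwarz condition.
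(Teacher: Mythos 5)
Your proposal is correct and follows essentially the same route as the paper's own proof: both apply Lindblad's dissipativity condition to the matrix units $X=|e_i\rangle\langle e_j|$, sandwich the resulting operator inequality with $\langle e_j|\cdot|e_j\rangle$ to obtain $\mathcal{K}_{jj}\geq \langle e_j|\mathfrak{L}^\dagger(|e_j\rangle\langle e_i|)|e_i\rangle+\langle e_i|\mathfrak{L}^\dagger(|e_i\rangle\langle e_j|)|e_j\rangle$, and then sum over ordered pairs $i\neq j$, identifying the right-hand side with $2({\rm Tr}\,\mathfrak{L}-{\rm Tr}\,\mathcal{K})$ and the left-hand side with $(d-1)\,{\rm Tr}\,\mathcal{K}$. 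Your reformulation via $a_{ij}$ and $2\,{\rm Re}\,a_{ij}$ is only a notational repackaging of the paper's use of its identity (\ref{BLK}), so the two arguments coincide step for step.
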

The proof of (\ref{KL3}) relies on the dissipativity condition of $\mathfrak{L}$ and it is presented in \cite{SIF}.

\begin{corollary}
If $\mathfrak{L}^\dagger$ generates a semigroup of unital Schwarz maps, then the corresponding relaxation rates satisfy
\begin{equation} \label{Sd}
 \Gamma_{\rm max} \leq \frac{2}{d+1} \sum_{\ell=1}^{d^2-1} \Gamma_\ell\,.
\end{equation}
\end{corollary}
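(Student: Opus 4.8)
The plan is to reproduce verbatim the four-step architecture of the proof of Theorem~\ref{Pro-0}, with the single modification that the classical-generator estimate of Step~3 is now supplied by Proposition~\ref{Pro-III} instead of Proposition~\ref{Pro-I}. Concretely, once the problem has been reduced to a self-adjoint (hence real-spectrum) Schwarz generator, the bound $\Gamma_{\max}\le\mathrm{Tr}(-\mathcal{K})$ coming from Step~4 is combined with $\mathrm{Tr}(-\mathcal{K})\le\frac{2}{d+1}\mathrm{Tr}(-\mathfrak{L})$, the latter being merely the restatement of \eqref{KL3}. The whole task therefore consists in checking that the reduction machinery of Steps~1--2 survives the weakening from complete positivity to the Schwarz property.

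First I would invoke Remark~\ref{remark_posdef_stationary_state_dense} (and its Schwarz analogue established in \cite{SIF}) to assume without loss of generality that $\mathfrak{L}$ possesses a faithful stationary state $\bm{\omega}>0$, the general case following by the density-plus-continuity argument. I then form the KMS adjoint $\mathfrak{L}^\# = V^{-1}_{\bm{\omega}}\circ\mathfrak{L}\circ V_{\bm{\omega}}$ of $\mathfrak{L}^\dagger$ and the symmetrized generator $\widetilde{\mathfrak{L}}=\tfrac12(\mathfrak{L}^\dagger+\mathfrak{L}^\#)$. Exactly as in the completely positive case, $\widetilde{\mathfrak{L}}$ is self-adjoint with respect to the KMS inner product because the KMS adjoint is an involution, so its spectrum is real; moreover $\mathfrak{L}^\#$ is similar to $\mathfrak{L}$ via $V_{\bm{\omega}}$ and hence isospectral to $\mathfrak{L}^\dagger$, whence $\mathrm{Tr}\,\widetilde{\mathfrak{L}}=\mathrm{Tr}\,\mathfrak{L}$. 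The Bendixson--Hirsch inequality applied to $\mathfrak{L}^\dagger$ and the self-adjoint $\widetilde{\mathfrak{L}}$ then yields $\Gamma_{\max}\le\widetilde{\Gamma}_{\max}$, reducing everything to the real-spectrum generator $\widetilde{\mathfrak{L}}$.

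With the reduction in hand, Step~4 applies verbatim, as it only uses Hermiticity preservation: picking a real eigenvalue $-\widetilde{\Gamma}_{\max}$ of $\widetilde{\mathfrak{L}}$ with Hermitian eigenvector $X$ and building the classical generator $\mathcal{K}$ in an eigenbasis of $X$, one obtains $\widetilde{\Gamma}_{\max}\le\mathrm{Tr}(-\mathcal{K})$. Feeding this same basis into Proposition~\ref{Pro-III} (which holds for an arbitrary orthonormal basis) gives $\mathrm{Tr}\,\widetilde{\mathfrak{L}}\le\frac{d+1}{2}\mathrm{Tr}\,\mathcal{K}$, i.e. $\mathrm{Tr}(-\mathcal{K})\le\frac{2}{d+1}\mathrm{Tr}(-\widetilde{\mathfrak{L}})$. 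Chaining the three inequalities and using $\mathrm{Tr}(-\widetilde{\mathfrak{L}})=\mathrm{Tr}(-\mathfrak{L})=\sum_\ell\Gamma_\ell$ produces $\Gamma_{\max}\le\widetilde{\Gamma}_{\max}\le\mathrm{Tr}(-\mathcal{K})\le\frac{2}{d+1}\sum_\ell\Gamma_\ell$, which is \eqref{Sd}.

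The one genuinely new point---and the step I expect to be the main obstacle---is proving that $\widetilde{\mathfrak{L}}$ is itself a Schwarz generator, so that Proposition~\ref{Pro-III} may be applied to it. In the completely positive case this was immediate, since $e^{t\mathfrak{L}^\#}=V^{-1}_{\bm{\omega}}\circ e^{t\mathfrak{L}}\circ V_{\bm{\omega}}$ is a composition of completely positive maps; but $V_{\bm{\omega}}$-conjugation does not obviously preserve the operator Schwarz inequality, and the Hilbert--Schmidt dual of a Schwarz map need not be Schwarz. I would resolve this in two moves. First, Lindblad's dissipativity functional $\mathcal{D}_{G}(X):=G(X^\dagger X)-G(X^\dagger)X-X^\dagger G(X)$ is \emph{linear} in the generator $G$, so $\mathcal{D}_{\widetilde{\mathfrak{L}}}=\tfrac12(\mathcal{D}_{\mathfrak{L}^\dagger}+\mathcal{D}_{\mathfrak{L}^\#})\ge0$ as soon as both $\mathfrak{L}^\dagger$ and $\mathfrak{L}^\#$ satisfy \eqref{L!}; since $\mathfrak{L}^\dagger$ is a Schwarz generator by hypothesis, only $\mathfrak{L}^\#$ remains. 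Second, I would show that the KMS adjoint of a unital Schwarz semigroup is again a unital Schwarz semigroup---equivalently, that $\mathfrak{L}^\#$ satisfies \eqref{L!}. Unitality of $e^{t\mathfrak{L}^\#}$ follows from $\mathfrak{L}^\#(\oper_d)=V^{-1}_{\bm{\omega}}\circ\mathfrak{L}(\bm{\omega})=0$, while the preservation of the Schwarz inequality under KMS duality with a faithful invariant state is a structural fact from the theory of KMS-symmetric quantum Markov semigroups. Establishing this preservation cleanly---rather than leaning on the composition shortcut available in the completely positive case---is where the real work of the corollary lies.
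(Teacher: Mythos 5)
Your proposal follows the paper's proof route step for step: the same reduction to a faithful stationary state via the density argument of Remark~\ref{remark_posdef_stationary_state_dense}, the same KMS adjoint $\mathfrak{L}^{\#}=V_{\bm{\omega}}^{-1}\circ\mathfrak{L}\circ V_{\bm{\omega}}$ and symmetrization $\widetilde{\mathfrak{L}}=\tfrac12(\mathfrak{L}^\dagger+\mathfrak{L}^{\#})$, the same Bendixson--Hirsch reduction $\Gamma_{\max}\le\widetilde{\Gamma}_{\max}$, and the same final chain combining Step~4 with Proposition~\ref{Pro-III} to get \eqref{Sd}. All of that matches the paper, and your observation that Lindblad's dissipativity functional is linear in the generator---so that $\widetilde{\mathfrak{L}}$ is dissipative as soon as $\mathfrak{L}^\dagger$ and $\mathfrak{L}^{\#}$ are---is actually a cleaner way to handle the symmetrized generator than anything stated in the paper.

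However, the proposal has a genuine gap at exactly the point you flag: you never prove that $\mathfrak{L}^{\#}$ satisfies the dissipativity condition \eqref{L!}, i.e., that the KMS adjoint of a unital Schwarz semigroup is again a unital Schwarz semigroup; you only appeal to ``a structural fact from the theory of KMS-symmetric quantum Markov semigroups,'' with no proof and no precise citation. This is not a soft technicality. Unwinding $\mathfrak{L}^{\#}=V_{\bm{\omega}}^{-1}\circ\mathfrak{L}\circ V_{\bm{\omega}}$, dissipativity of $\mathfrak{L}^{\#}$ is equivalent to the weighted inequality
\[
\mathfrak{L}\bigl(Y^\dagger\bm{\omega}^{-1}Y\bigr)\;\geq\;\mathfrak{L}(Y^\dagger)\,\bm{\omega}^{-1}Y\;+\;Y^\dagger\bm{\omega}^{-1}\mathfrak{L}(Y)\qquad\text{for all } Y\in\Md ,
\]
a Choi/Lieb--Ruskai-type inequality. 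It does follow from conditional 2-positivity (differentiate Choi's inequality $\Psi(Y^\dagger B^{-1}Y)\ge\Psi(Y)^\dagger\Psi(B)^{-1}\Psi(Y)$ for the 2-positive maps $\Psi=e^{t\mathfrak{L}}$ at $B=\bm{\omega}$), but it is not an obvious consequence of \eqref{L!} for $\mathfrak{L}^\dagger$ alone, which is all you assume. So, as written, the proof is incomplete at the step you yourself identify as ``the real work.'' To your credit, your diagnosis of \emph{why} this step is hard is sharper than the paper's own treatment: the paper dismisses it by calling $e^{t\mathfrak{L}^{\#}}=V_{\bm{\omega}}^{-1}\circ e^{t\mathfrak{L}}\circ V_{\bm{\omega}}$ ``evidently unital and Schwarz'' as a composition of CP maps with ``the unital Schwarz map $e^{t\mathfrak{L}^\dagger}$''---but the middle factor there is $e^{t\mathfrak{L}}$ (merely positive and trace-preserving), and, as you note, conjugation by the non-unital CP maps $V_{\bm{\omega}}^{\pm1}$ does not propagate the operator Schwarz inequality (one checks that $V_{\bm{\omega}}^{-1}$ itself violates it for any state $\bm{\omega}$). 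You have thus correctly located the one place where the Schwarz case genuinely departs from the CP/2-positive template; but neither your proposal nor the paper's one-line justification actually closes it, so the missing lemma must either be proven directly or supported by a concrete reference.
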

Clearly, (\ref{Sd}) reduces to (\ref{2/3}) for $d=2$ so it is indeed a generalization of the latter. The proof goes along the same line as for complete positivity and 2-positivity. Let $\bm{\omega}$ be a faithful stationary state of $\mathfrak{L}$. Again, this is not a restriction; the argument from Remark~\ref{remark_posdef_stationary_state_dense} still applies because the density result from \cite{SIF} continues to hold when replacing $k$-positivity, resp.~conditional $k$-positivity by Schwarz, resp.~dissipativity. The only point where properties of $e^{t\mathfrak{L}}$ (other than Hermiticity-preservation) is used is the existence of a stationary state in Step 1. But this is still true because every unital Schwarz map is positive and unital, meaning it has a fixed point because its adjoint (positive and trace-preserving) does, by the Brouwer fixed point theorem.
The KMS adjoint $\mathfrak{L}^\#$ of $\mathfrak{L}^\dagger$ satisfies
\begin{equation}
 \mathfrak{L}^\# = V^{-1}_{\bm{\omega}} \circ \mathfrak{L} \circ V_{\bm{\omega}} \,,
\end{equation}
thus defining a legitimate generator of unital Schwarz maps.
Indeed, $\mathfrak{L}^\#$ gives rise to the following semigroup
\begin{equation}
 e^{t \mathfrak{L}^\#} = e^{t V^{-1}_{\bm{\omega}} \circ \mathfrak{L} \circ V_{\bm{\omega}}} =
 V^{-1}_{\bm{\omega}} \circ e^{t \mathfrak{L}} \circ V_{\bm{\omega}} \,,
\end{equation}
which is evidently unital and Schwarz being a composition of completely positive maps: $ V_{\bm{\omega}}$, $ V^{-1}_{\bm{\omega}}$ and the unital Schwarz map $e^{t \mathfrak{L}^\dagger}$. Finally, Step 4 perfectly works without any change due to the fact that self-adjoint generator
\begin{equation}
 \widetilde{\mathfrak{L}} := \frac 12 ( \mathfrak{L}^\dagger + \mathfrak{L}^\#) \,,
\end{equation}
defines a generator of unital Schwarz maps and one can apply the Bendixson--Hirsch inequality.
\begin{remark}
Consider once again the qubit generator of Pauli maps defined in (\ref{Pauli}) which, notably, is self-adjoint ($\mathfrak{L}^\dagger=\mathfrak{L}$). Again to generate a semigroup of Schwarz maps it may have at most one negative eigenvalue. Assuming $\gamma_3<0$ it is well known that $\mathfrak{L}$ generates a semigroup of Schwarz maps if \cite{ChrD2022,CKM24}
$$
\gamma_1+ 2 \gamma_3 \geq 0 \ , \ \ \ \gamma_2 + 2 \gamma_3 \geq 0 \,.
$$
In particular for $\gamma_1=\gamma_2=2$ and $\gamma_3=-1$ one finds
$$
\Gamma_1 = \Gamma_2 = 1 \ , \ \ \Gamma_3 = 4 \,,
$$
and hence $\Gamma_{\rm max} = \frac 23 \sum_k \Gamma_k$. Thus the constraint (\ref{CON}) is violated and (\ref{Sd}) is saturated.
\end{remark}

\section{A bound for a number of steady states} \label{S-V}

In a recent paper \cite{Facchi-Amato} an interesting bound for the number $m_0$ of steady states of arbitrary non-trivial GKLS generators (equivalently $m_0$ is a dimension of the kernel of $\mathfrak{L}$) was found:
\begin{equation} \label{m0}
 m_0^{\rm (CP)} \leq d^2-2d +2\,,
\end{equation}
and this bound was even proven to be sharp. It turns out that the constraints on the relaxation rates of 2-positive trace-preserving dynamics, as well as of unital Schwarz semigroups imply the following bounds on steady states for these relaxed scenarios:
\begin{proposition}
If $\mathfrak{L}\neq 0$ generates a 2-positive trace-preserving semigroup, then
\begin{equation} \label{mo2P}
 m_0^{\rm (2P)} \leq d^2 -d\,.
\end{equation}
If $\mathfrak{L}^\dagger\neq 0$ generates a semigroup of unital Schwarz maps, then
\begin{equation} \label{moS}
 m_0^{\rm (S)} \leq d^2 -\frac{d+1}{2}\,.
\end{equation}
\end{proposition}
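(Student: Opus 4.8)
The plan is to translate each relaxation-rate constraint \eqref{CON}, \eqref{Sd} into a lower bound on the number of eigenvalues of $\mathfrak{L}$ with strictly negative real part, and then read off the dimension of $\ker\mathfrak{L}$. Let $p$ denote the number of eigenvalues of $\mathfrak{L}$ (counted with algebraic multiplicity) satisfying ${\rm Re}\,\lambda<0$; these are exactly the eigenvalues carrying a strictly positive relaxation rate. Because every eigenvalue obeys ${\rm Re}\,\lambda\le 0$, the eigenvalues that are \emph{not} among these $p$ are either $0$ or nonzero and purely imaginary, so the algebraic multiplicity of the eigenvalue $0$ is at most $d^2-p$. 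Since $m_0=\dim\ker\mathfrak{L}$ is the geometric multiplicity of $0$, which never exceeds its algebraic multiplicity, it suffices to bound $p$ from below.

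First I would dispose of the generic, dissipative case ${\rm Tr}\,\mathfrak{L}<0$. By \eqref{TrG} this means $\sum_{\ell=1}^{d^2-1}\Gamma_\ell>0$, hence $\Gamma_{\rm max}>0$ and $p\ge 1$. Each of the $p$ strictly positive relaxation rates is at most $\Gamma_{\rm max}$, so $\sum_{\ell=1}^{d^2-1}\Gamma_\ell\le p\,\Gamma_{\rm max}$. Substituting this into the constraint $\Gamma_{\rm max}\le c_d\sum_\ell\Gamma_\ell$ and cancelling the positive factor $\Gamma_{\rm max}$ gives $p\ge 1/c_d$. For $2$-positive dynamics $c_d=\tfrac1d$ yields $p\ge d$ and hence $m_0\le d^2-d$, while for Schwarz dynamics $c_d=\tfrac{2}{d+1}$ yields $p\ge\tfrac{d+1}{2}$ and hence $m_0\le d^2-\tfrac{d+1}{2}$. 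This already establishes \eqref{mo2P} and \eqref{moS} whenever the semigroup dissipates.

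It then remains to treat the degenerate case ${\rm Tr}\,\mathfrak{L}=0$, where all relaxation rates vanish and the constraints become vacuous. Here I would invoke the self-adjoint generator $\widetilde{\mathfrak{L}}=\tfrac12(\mathfrak{L}^\dagger+\mathfrak{L}^\#)$ from Step~2: its spectrum is real and nonpositive with ${\rm Tr}\,\widetilde{\mathfrak{L}}={\rm Tr}\,\mathfrak{L}=0$, so all its eigenvalues vanish and, being self-adjoint, $\widetilde{\mathfrak{L}}=0$. Consequently $\mathfrak{L}^\dagger=-\mathfrak{L}^\#$ is skew-adjoint for the KMS inner product, so $e^{t\mathfrak{L}^\dagger}$ is a one-parameter family of KMS-isometric, unital, $2$-positive (resp.\ Schwarz) maps. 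The structural heart of this case is that such an isometry must saturate the operator Schwarz inequality and is therefore a $*$-automorphism of $\Md$, i.e.\ $e^{t\mathfrak{L}^\dagger}(X)=U_t X U_t^\dagger$ with $U_t=e^{\imath t K}$ for some Hermitian $K$. Then $\mathfrak{L}^\dagger(X)=\imath[K,X]$, so $\ker\mathfrak{L}^\dagger$ is the commutant of the non-scalar matrix $K$, whose dimension is at most $d^2-2d+2$ by the very count that yields \eqref{m0}. Using $\dim\ker\mathfrak{L}=\dim\ker\mathfrak{L}^\dagger$ together with the elementary inequalities $d^2-2d+2\le d^2-d$ and $d^2-2d+2\le d^2-\tfrac{d+1}{2}$, both valid for all $d\ge 2$, the claimed bounds follow in this case as well.

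The hard part is precisely this rigidity in the degenerate case. In the completely positive setting ${\rm Tr}\,\mathfrak{L}=0$ forces the GKLS dissipator to vanish through an explicit trace identity, but for merely $2$-positive or Schwarz generators the GKLS form is unavailable, so one must instead argue that a KMS-isometric unital $2$-positive (or Schwarz) map is automatically multiplicative. I expect this to follow from the vanishing of the symmetrized Dirichlet form implied by $\widetilde{\mathfrak{L}}=0$, which forces the associated carr\'e du champ---equivalently, Lindblad's dissipativity defect in \eqref{L!}---to vanish against the faithful state $\bm{\omega}$, thereby saturating \eqref{KS}; pinning down this equality case, together with the fact that every $*$-automorphism of $\Md$ is inner, is the delicate step, whereas the dissipative case is an immediate pigeonhole consequence of the already-established relaxation-rate bounds.
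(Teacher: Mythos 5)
Your first case (${\rm Tr}\,\mathfrak{L}<0$) is, modulo rearrangement, exactly the paper's proof: the paper sums the rate constraint \eqref{CON} (resp.\ \eqref{Sd}) over the strictly positive rates and then divides by $\sum_{\Gamma_k>0}\Gamma_k$, which is the same pigeonhole count as your chain $\Gamma_{\rm max}\le c_d\sum_\ell\Gamma_\ell\le c_d\,p\,\Gamma_{\rm max}$. Your write-up is in fact more careful on two points. First, you only use $m_0\le d^2-p$ (geometric multiplicity bounded by algebraic multiplicity), whereas the paper asserts the equality $\sum_{\Gamma_k>0}1=d^2-m_0$, which can fail in the presence of nonzero purely imaginary eigenvalues. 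Second, and more importantly, you noticed that the paper's justification for the division---``$\sum_{\Gamma_k>0}\Gamma_k\neq0$ because $\mathfrak{L}\neq0$''---is not valid: a Hamiltonian generator $\mathfrak{L}=-\imath[\operatorname{H},\cdot\,]$ with non-scalar $\operatorname{H}$ is nonzero, generates a CPTP semigroup (so in particular a $2$-positive one, and $\mathfrak{L}^\dagger$ generates unital Schwarz maps), yet has all relaxation rates equal to zero. So the degenerate case you isolate is real, and the paper silently skips it.

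That said, your handling of the degenerate case is a sketch with two concrete gaps, only one of which you flag. (i) The rigidity step---that a KMS-isometric, unital, $2$-positive (resp.\ Schwarz) map is an inner $*$-automorphism---is asserted (``I expect this to follow\dots''), not proven; this is the heart of the case. (ii) Even before that, defining $\mathfrak{L}^\#$ and $\widetilde{\mathfrak{L}}$ requires a \emph{faithful} stationary state, and for this proposition you cannot appeal to Remark~\ref{remark_posdef_stationary_state_dense}: kernel dimension is not continuous in the generator (it can jump upward in a limit, e.g.\ $\tfrac1n\mathfrak{L}_0\to0$), so bounds proved for approximants possessing faithful stationary states do not pass to $\mathfrak{L}$; and establishing faithfulness in the degenerate case is essentially tantamount to the rigidity in (i). Both gaps close simultaneously if you bypass the KMS construction altogether: vanishing of all rates means the spectrum of $\mathfrak{L}$ is purely imaginary; positive trace-preserving maps are uniformly bounded in norm, so these eigenvalues carry no Jordan blocks and $\{e^{t\mathfrak{L}}\}_{t\ge0}$ has compact closure which is a group; hence each $e^{t\mathfrak{L}}$ has a positive trace-preserving inverse, so each $e^{t\mathfrak{L}^\dagger}$ is a unital order automorphism of $\Md$, hence by Kadison's theorem \cite{Kadison1952} a Jordan automorphism, and by connectedness to the identity an inner automorphism $U_t(\cdot)U_t^\dagger$. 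Then $\mathfrak{L}^\dagger=\imath[K,\cdot\,]$ with $K$ non-scalar (else $\mathfrak{L}=0$), so $\ker\mathfrak{L}^\dagger$ is the commutant of $K$, whose dimension is at most $d^2-2d+2$ exactly as in \eqref{m0}, and $d^2-2d+2\le d^2-d\le d^2-\tfrac{d+1}{2}$ for $d\ge2$, which completes both bounds.
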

\noindent Proof. If $\mathfrak{L}$ generates 2-positive trace-preserving semigroup, then
\begin{eqnarray} \label{a}
 \Gamma_k \leq \frac 1d \sum_\ell \Gamma_\ell\,.
\end{eqnarray}
Now, following arguments from \cite{Facchi-Amato} one observes that summing (\ref{a}) over strictly positive rates one obtains
\begin{eqnarray} \label{a1}
 \sum_{\Gamma_k >0} \Gamma_k \leq \frac 1d \sum_{\Gamma_k >0} \Big( \sum_\ell \Gamma_\ell \Big)\,.
\end{eqnarray}
But $\sum_{\Gamma_k >0} = d^2 - m_0^{\rm (2P)}$ by definition of $m_0^{\rm (2P)}$ so re-arranging (\ref{a1}) and dividing by $\sum_{\Gamma_k >0} \Gamma_k$ ($\neq 0$, because $\mathfrak{L}\neq 0$) yields (\ref{mo2P}). Moreover, (\ref{moS}) is shown analogously. Clearly one has the following hierarchy of inequalities
\begin{equation}
 m_0^{\rm (CP)} \leq m_0^{\rm (2P)} \leq m_0^{\rm (S)}\,.
\end{equation}
For $d=2$ complete positivity and 2-positive coincide and hence $m_0^{\rm (CP)} = m_0^{\rm (2P)} =2$. For $d=3$
$$
m_0^{\rm (CP)} =5 < m_0^{\rm (2P)} =6 < m_0^{\rm (S)} =7 \,,
$$
i.e., all three bounds are different. It would be interesting to construct examples which saturate the bounds.

{
\section{Constraints for time-local relaxation rates}   \label{S-LT}
It has long been known \cite{ShiT1970} that open system evolution equations derived by means of the Nakajima-Zwanzig projection method can be couched, at least formally and over finite time intervals, into the form of time-local equations. In the case of open quantum systems, the result of this procedure is in general a time-dependent master equation
\begin{align}
\dot{\bm{\rho}}_t = \mathfrak{L}_{t}(\bm{\rho}_t) .
\label{me-t}
\end{align}
The generator $\mathfrak{L}_{t}$ is always amenable to the canonical form (see, e.g., \cite{HaCrLiAn2014})
\begin{align}
	\mathfrak{L}_t(\bm{{\rho}}) =-\imath\,\left[\operatorname{H}_t,\bm{\rho}\right]+
\sum_{\ell=1}^{d^{2}-1}\gamma_{\ell,t} \left( \operatorname{L}_{\ell,t}\bm{\rho} \operatorname{L}_{\ell,t}^\dagger - \frac 12 \{ \operatorname{L}_{\ell,t}^\dagger \operatorname{L}_{\ell,t},\bm{\rho}\} \right)
		\label{LGKS-t}
	\end{align}
with time-dependent Hermitian $\operatorname{H}_t$, noise operators $\operatorname{L}_{\ell,t}$ coupled by a collection of also time-dependent scalar function $\gamma_{\ell,t}$
for $\ell=1,\dots,d^2-1$.  However, contrary to the time-independent case leading to a Markovian semigroup, the rates $\gamma_{\ell,t}$ need not be positive for all $t \ge 0$.
Assuming the time-dependence of all these quantities to be sufficiently regular for any $t\ge 0$ to satisfy global existence and uniqueness of solutions, solving (\ref{me-t}) with the identity map as initial condition at $t= 0$ yields the following formula for the dynamical map
\begin{equation}
    \Lambda_{t,0} = \mathcal{T} \exp\left( \int_0^t \mathfrak{L}_\tau d\tau \right) = {\rm id}_d + \int_0^t d\tau\, \mathfrak{L}_\tau + \int_0^t dt_1 \int_0^{t_1} dt_2 \, \mathfrak{L}_{t_1} \circ \mathfrak{L}_{t_2} + \ldots .
\end{equation}
A time-dependent generator is physically admissible if $\Lambda_{t,0}$ is CPTP for all $t \geq 0$.  It should be stressed that in general there is no guarantee that the propagators
\begin{equation}   \label{Texp}
    \Lambda_{t,s}  = \mathcal{T} \exp\left( \int_s^t \mathfrak{L}_\tau d\tau \right) ,
\end{equation}
are also CPTP for $t \geq s > 0$. Existence and uniqueness of solutions also imply that propagators satisfy the time-inhomogeneous semigroup composition law
\begin{align}
  &  \Lambda_{t,s}  = \Lambda_{t,u} \circ \Lambda_{u,s} \ ,\ \ \ \ \forall\ t\ge u \ge s\ge 0\ .
\label{composition}
\end{align}
We are then in a position to distinguish between different situations.
\begin{enumerate}
\item If the additional conditions
\begin{align}
    \gamma_{\ell,t} \geq 0\qquad\forall \ \ell=1,\dots\,d^{2}-1\ ,
    \label{ratefun}
\end{align}
hold true for all $t \ge 0$, i.e., $\mathfrak{L}_t$ is conditionally completely positive for all $t \geq 0$, then all the maps (\ref{Texp}) are also CP  \cite{RiHu2012,RiHuPl2010}.
In such a case one calls the dynamical map $\{\Lambda_{t,0}\}_{t\geq 0}$ CP-divisible \cite{RiHuPl2010,NM1}.
Note that (\ref{Texp}) implies
\begin{equation}   \label{cpd}
    \Lambda_{t,0} = \Lambda_{t,t_n} \circ \Lambda_{t_n,t_{n-1}} \circ \ldots \circ \Lambda_{t_1,0} ,
\end{equation}
for an arbitrary set $\{t_n,t_{n-1},\ldots,t_1\}$ of times satisfying $t \geq t_n \geq t_{n-1} \geq \ldots \geq t_1 \geq 0$. Owing to the probabilistic interpretation of quantum-dynamical maps, CP-divisibility is reminiscent of the Chapman-Kolmogorov equation necessarily satisfied by transition functions of classical Markov processes (see, e.g., \cite{Kampen2007}, \cite[\S~2.2]{PavG2014}). This analogy and the observation that a time-homogeneous semigroup can always be thought of as a particular case of a time-inhomogeneous one satisfying
\begin{align}
&\Lambda_{t,s}=\Lambda_{t-s,0} = e^{(t-s)\mathfrak{L}} \qquad\forall\  t-s\ge s\ge 0
\nonumber
\end{align}
motivates us to say that {\em Markovian} means any CP-divisible dynamical map.  It should, however, be emphasized that in the literature the adjectives Markovian and non-Markovian are also used with other meanings \cite{NM1,NM2,NM3,ChrD2022,WoEiCuCi2008}. In particular, the review \cite{NM4} shows an intricate hierarchy of various and context-dependent notions.
\item When the conditions (\ref{ratefun}) do not hold, we say that the evolution represented by $\{\Lambda_{t,0}\}_{t\geq 0}$  is non-Markovian. However, the propagators---which are no longer CP---may still have additional properties. In particular, if $\mathfrak{L}_t$ is conditionally $k$-positive for all $t\ge 0$, then each propagator $\Lambda_{t,s}$ is $k$-positive for all $t \ge s \geq 0$.  One calls $\{\Lambda_{t,0}\}_{t \geq 0}$ $k$-divisible \cite{DC+Sabrina} if all propagators are $k$-positive and trace-preserving.  Now, let $\lambda_{\ell,t}$ be the time-dependent eigenvalues of $\mathfrak{L}_t$ and $\Gamma_{\ell,t} := -{\rm Re}\, \lambda_{\ell,t}$ the corresponding time-local relaxation rates. We arrive at the following
\begin{corollary} If the dynamical map $\{\Lambda_{t,0}\}_{t\geq 0}$ is 2-divisible, then
\begin{equation}   \label{GGt}
  \Gamma_{k,t} \leq \frac 1d \sum_{\ell=1}^{d^2-1} \Gamma_{\ell,t}\, ,
\end{equation}
for all $k=1,\ldots,d^2-1$ and $t \geq 0$.  In particular (\ref{GGt}) holds true for any Markovian evolution.
\end{corollary}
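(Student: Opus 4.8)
The plan is to reduce the time-dependent claim to the already-established time-independent Corollary of Section~\ref{S-III}, applied separately at each instant $t$. The bridge is the observation that $2$-divisibility of $\{\Lambda_{t,0}\}_{t\geq 0}$ forces each frozen time-local generator $\mathfrak{L}_t$ to be conditionally $2$-positive in the sense of \eqref{kCP}, so that at every fixed $t$ the one-parameter family $s \mapsto e^{s\mathfrak{L}_t}$ is itself a genuine $2$-positive trace-preserving semigroup, to which the earlier bound applies verbatim.

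First I would establish this reduction. Fix $t_0 \geq 0$ and recall that $\mathfrak{L}_{t_0}$ is the right derivative of the propagator at coincident times,
\begin{equation}
 \mathfrak{L}_{t_0} = \lim_{\epsilon \to 0^+} \frac{1}{\epsilon}\big( \Lambda_{t_0+\epsilon,t_0} - {\rm id}_d \big)\,,
\end{equation}
which exists by the standing regularity assumption on $\{\mathfrak{L}_t\}$. To verify conditional $2$-positivity I would use the equivalent formulation \eqref{eq:kCP_alt_form}: take any two orthogonal vectors $|\psi\rangle \perp |\phi\rangle$ in $\mathbb{C}^2 \otimes \mathbb{C}^d$ and note that, since $\Lambda_{t_0+\epsilon,t_0}$ is $2$-positive by $2$-divisibility, the quantity $\langle \psi|\,[{\rm id}_2 \otimes \Lambda_{t_0+\epsilon,t_0}](|\phi\rangle\langle\phi|)\,|\psi\rangle$ is non-negative for every $\epsilon > 0$. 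At $\epsilon = 0$ it vanishes, because $[{\rm id}_2 \otimes {\rm id}_d](|\phi\rangle\langle\phi|) = |\phi\rangle\langle\phi|$ and $\langle\psi|\phi\rangle = 0$. Differentiating from the right therefore yields
\begin{equation}
 \langle \psi|\,[{\rm id}_2 \otimes \mathfrak{L}_{t_0}](|\phi\rangle\langle\phi|)\,|\psi\rangle = \lim_{\epsilon\to 0^+} \frac{1}{\epsilon}\,\langle \psi|\,[{\rm id}_2 \otimes \Lambda_{t_0+\epsilon,t_0}](|\phi\rangle\langle\phi|)\,|\psi\rangle \geq 0\,,
\end{equation}
which is precisely \eqref{eq:kCP_alt_form} with $k=2$, i.e., conditional $2$-positivity of $\mathfrak{L}_{t_0}$.

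With this in hand the remainder is automatic. Since $t_0$ was arbitrary, $\mathfrak{L}_t$ is conditionally $2$-positive for every $t \geq 0$, so by the Proposition characterizing conditional $k$-positivity the frozen semigroup $e^{s\mathfrak{L}_t}$ is $2$-positive and trace-preserving for all $s\geq 0$ (trace preservation being inherited from the GKLS form \eqref{LGKS-t}). The Corollary of Section~\ref{S-III} then applies to $\mathfrak{L}_t$ at each fixed $t$, and because the spectrum, and hence the relaxation rates $\Gamma_{\ell,t} = -{\rm Re}\,\lambda_{\ell,t}$, are by definition those of that frozen generator, we obtain $\Gamma_{k,t} \leq \frac1d \sum_{\ell=1}^{d^2-1}\Gamma_{\ell,t}$ for all $k$ and all $t\geq 0$, which is \eqref{GGt}. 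The final assertion follows at once: any Markovian (CP-divisible) evolution is in particular $2$-divisible, since complete positivity of each propagator implies its $2$-positivity.

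I expect the only genuinely delicate point to be the reduction step, specifically the justification that the right derivative defining $\mathfrak{L}_{t_0}$ exists and may be interchanged with evaluation in the inner product; this is guaranteed by the regularity hypothesis on $\{\mathfrak{L}_t\}$, but it is the one place where the time-dependent setting differs structurally from the semigroup case. Once conditional $2$-positivity is secured, the argument is a verbatim reuse of Steps~1--4 of Section~\ref{S-II} together with Proposition~\ref{Pro-II}, applied pointwise in $t$.
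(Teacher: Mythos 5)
Your proof is correct and follows essentially the same route as the paper: reduce to the frozen generator $\mathfrak{L}_t$ at each fixed time, note that $2$-divisibility forces $\mathfrak{L}_t$ to be conditionally $2$-positive, and then invoke the $2$-positive semigroup result of Section~\ref{S-III} pointwise in $t$. The only difference is that you spell out explicitly (via the right-derivative of $\epsilon\mapsto\Lambda_{t_0+\epsilon,t_0}$ on orthogonal vectors) the implication ``$2$-divisible $\Rightarrow$ $\mathfrak{L}_t$ conditionally $2$-positive,'' which the paper treats as implicit in its identification of $k$-divisibility with conditional $k$-positivity of the time-local generator.
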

Let us recall that $k$-divisibility can be characterized in terms of monotonicity of the trace-norm \cite{DC+Sabrina}, that is, $\Lambda_{t,0}$ is $k$-divisible if
\begin{equation}
    \frac{d}{dt} \| ({\rm id}_k \otimes \Lambda_{t,0})(\mathbb{X})\|_1 \leq 0 ,
\end{equation}
for all Hermitian block operators $\mathbb{X} \in \mathcal{M}_k(\Md)$ ($\|\,.\,\|_1$ denotes the usual trace-norm).
\begin{corollary}
    If  $\frac{d}{dt} \| ({\rm id}_2 \otimes \Lambda_{t,0})(\mathbb{X})\|_1 \leq 0$
for all Hermitian block operators $\mathbb{X} \in \mathcal{M}_2(\Md)$, then (\ref{GGt}) holds.
\end{corollary}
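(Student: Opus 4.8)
The plan is to recognize the hypothesis as nothing but the $k=2$ instance of the trace-norm monotonicity criterion recalled immediately above, and then to reduce the claim to the preceding corollary. First I would invoke the characterization of $k$-divisibility established in \cite{DC+Sabrina}: the requirement that $\frac{d}{dt}\|({\rm id}_2 \otimes \Lambda_{t,0})(\mathbb{X})\|_1 \leq 0$ hold for every Hermitian block operator $\mathbb{X} \in \mathcal{M}_2(\Md)$ is equivalent to $2$-divisibility of $\{\Lambda_{t,0}\}_{t\geq 0}$, i.e.\ to the $2$-positivity and trace-preservation of every propagator $\Lambda_{t,s}$ for $t \geq s \geq 0$. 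Granting this equivalence, the statement follows at once from the previous corollary, which asserts that $2$-divisibility implies (\ref{GGt}) for all $k$ and all $t \geq 0$.

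Should one prefer a self-contained route that does not treat the preceding corollary as a black box, the idea is to pass from the global divisibility property to a pointwise-in-time statement about the generator. Concretely, $2$-divisibility means that for each fixed $t$ the propagator $\Lambda_{t+\epsilon,t}$ is $2$-positive to first order in $\epsilon$, which is precisely conditional $2$-positivity (\ref{kCP}) of the canonical time-local generator $\mathfrak{L}_t$ in (\ref{LGKS-t}). With conditional $2$-positivity of $\mathfrak{L}_t$ in hand at each $t$, Proposition \ref{Pro-II} yields ${\rm Tr}\,\mathfrak{L}_t \leq d\,{\rm Tr}\,\mathcal{K}_t$ for the associated classical generator $\mathcal{K}_t$, and Step~4 together with the Bendixson--Hirsch inequality upgrades this to the eigenvalue bound $\Gamma_{k,t} \leq \frac{1}{d}\sum_\ell \Gamma_{\ell,t}$ at that instant. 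Since $t$ is arbitrary, (\ref{GGt}) holds for all $t \geq 0$.

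The only genuine content is the equivalence between trace-norm contractivity and $2$-divisibility, and this is exactly the point I expect to be the main obstacle if it had to be proved from scratch rather than quoted from \cite{DC+Sabrina}. The subtle direction is deducing $2$-positivity of the propagators from the differential monotonicity of the trace-norm: one must control the sign of $\frac{d}{dt}\|({\rm id}_2 \otimes \Lambda_{t,0})(\mathbb{X})\|_1$ uniformly over all Hermitian $\mathbb{X}$, handle the non-smoothness of $\|\,\cdot\,\|_1$ at operators with degenerate or vanishing spectrum, and translate the resulting pointwise inequality into conditional $2$-positivity of $\mathfrak{L}_t$. Because the paper already isolates this as a known characterization, the corollary itself is immediate once the citation is invoked.
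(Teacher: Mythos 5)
Your proposal is correct and follows the paper's own (implicit) reasoning exactly: the hypothesis is identified as the $k=2$ trace-norm monotonicity characterization of $2$-divisibility cited from \cite{DC+Sabrina}, and the claim then follows immediately from the preceding corollary that $2$-divisibility implies (\ref{GGt}). Your second, self-contained route (passing to conditional $2$-positivity of $\mathfrak{L}_t$ at each instant and invoking Proposition \ref{Pro-II} with the Bendixson--Hirsch argument) is also a faithful unpacking of how that preceding corollary is established in the paper.
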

\item Consider now the  master equation for an observable $X_t$
\begin{equation}
    \dot{X}_t = \mathfrak{L}^\dagger_t(X_t) = \imath\,\left[\operatorname{H}_t,X_t\right]+
\sum_{\ell=1}^{d^{2}-1}\gamma_{\ell,t} \left( \operatorname{L}_{\ell,t}^\dagger X_t \operatorname{L}_{\ell,t} - \frac 12 \{ \operatorname{L}_{\ell,t}^\dagger \operatorname{L}_{\ell,t},X_t\} \right) .
\end{equation}
\begin{corollary} If $\mathfrak{L}_t^\dagger$ is dissipative, i.e., if it satisfies \eqref{L!}
for all $X \in \Md$ and $t \ge 0$, then the time-dependent version of (\ref{Sd}) holds true, i.e.,
\begin{equation}   \label{Sdt}
  \Gamma_{k,t} \leq \frac{2}{d+1} \sum_{\ell=1}^{d^2-1} \Gamma_{\ell,t}\, ,
\end{equation}
for all $k=1,\ldots,d^2-1$ and $t \geq 0$.
\end{corollary}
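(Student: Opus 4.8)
The plan is to observe that the hypothesis is a purely pointwise-in-time condition, so that the entire time-independent Schwarz analysis of Section~\ref{S-IV} can be transplanted, verbatim, to the \emph{frozen} generator $\mathfrak{L}_t$ at each fixed instant $t\geq 0$. First I would fix an arbitrary $t\geq 0$ and regard $\mathfrak{L}_t$ as a time-independent map on $\Md$. By assumption its Heisenberg adjoint $\mathfrak{L}_t^\dagger$ satisfies the dissipativity condition \eqref{L!} for every $X\in\Md$; by Proposition~\ref{PRO-L} this is exactly equivalent to $\mathfrak{L}_t^\dagger$ generating the time-homogeneous semigroup $\{e^{s\mathfrak{L}_t^\dagger}\}_{s\geq 0}$ of unital Schwarz maps. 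Thus the frozen generator $\mathfrak{L}_t$ is a legitimate Schwarz generator of precisely the kind treated in Section~\ref{S-IV}.

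Second, I would invoke the Corollary establishing \eqref{Sd}: the relaxation rates of any generator whose Heisenberg adjoint generates a unital Schwarz semigroup obey $\Gamma_{\rm max}\leq\frac{2}{d+1}\sum_\ell\Gamma_\ell$. Applying this to $\mathfrak{L}_t$ yields the stated bound once one identifies the relaxation rates of the frozen generator with the time-local rates. This identification is immediate from the definitions: the eigenvalues of $\mathfrak{L}_t$ are the $\lambda_{\ell,t}$, and the time-local relaxation rates are $\Gamma_{\ell,t}=-{\rm Re}\,\lambda_{\ell,t}$, which are by construction the relaxation rates of $\mathfrak{L}_t$ read as a static generator. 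Since $t$ was arbitrary, \eqref{Sdt} follows for all $t\geq 0$ and all $k=1,\ldots,d^2-1$.

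The only points requiring care are the auxiliary ingredients on which Section~\ref{S-IV} relied, all of which survive the freezing. The existence of a faithful stationary state of $\mathfrak{L}_t$ (used in Step~1) follows because $e^{s\mathfrak{L}_t^\dagger}$ is positive and unital, so its positive trace-preserving adjoint $e^{s\mathfrak{L}_t}$ has a fixed point by the Brouwer fixed-point theorem; full rank is arranged by the density argument of Remark~\ref{remark_posdef_stationary_state_dense}, which \cite{SIF} extends to the Schwarz/dissipativity setting. The spectral facts ${\rm Re}\,\lambda_{\ell,t}\leq 0$ with a leading eigenvalue $\lambda_{0,t}=0$ hold because each $\mathfrak{L}_t$ is a genuine Schwarz generator, and the Bendixson--Hirsch reduction of Step~2 together with the classical-generator estimate \eqref{KL3} applies unchanged at fixed $t$.

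The main---and essentially only---obstacle is conceptual rather than technical: one must recognize that the dissipativity hypothesis \eqref{L!} is imposed \emph{separately} at each time and therefore does not couple distinct instants, so no genuinely time-dependent argument involving the time-ordered propagators \eqref{Texp} is needed. In particular, one uses neither complete positivity of $\Lambda_{t,0}$ nor any divisibility of the propagators---which is why negative rates $\gamma_{\ell,t}$ are permitted. The bound on $\Gamma_{\ell,t}$ is thus a purely instantaneous statement about the canonical generator, in exact parallel with the $2$-divisible Corollary \eqref{GGt}.
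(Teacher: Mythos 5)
Your proposal is correct and matches the paper's (implicit) argument exactly: since the dissipativity hypothesis \eqref{L!} is imposed at each instant separately, the corollary follows by applying the time-independent Schwarz result \eqref{Sd} of Section~\ref{S-IV} to the frozen generator $\mathfrak{L}_t$, with the time-local rates $\Gamma_{\ell,t}$ being precisely the relaxation rates of that static generator. Your additional checks (automatic unitality of $\mathfrak{L}_t^\dagger$ in canonical form, existence of a faithful stationary state via Brouwer plus the density argument, and the inapplicability/irrelevance of any divisibility of the propagators) are exactly the points the paper relies on.
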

In this case the propagators $\Lambda_{t,s}^\dagger$ are unital Schwarz maps for $t \geq s \ge 0$ (such dynamics were called Schwarz-divisible in \cite{DC-Farrukh}).
\end{enumerate}
\begin{example}
    Consider the well-known qubit generator \cite{ChrD2022,BrPe2002,HaCrLiAn2014}
\begin{equation}
    \mathfrak{L}_t(\bm{\rho}) = - \imath \frac{\epsilon_t}{2}[\sigma_z,\bm{\rho}] + \gamma_{+,t} \left( \sigma_+ \bm{\rho}\sigma_- - \frac 12 \{\sigma_-\sigma_+, \bm{\rho}\} \right) + \gamma_{-,t} \left( \sigma_- \bm{\rho}\sigma_+ - \frac 12 \{\sigma_+\sigma_-, \bm{\rho}\} \right) + \gamma_{z,t}(\sigma_z \bm{\rho}\sigma_z - \bm{\rho}) ,
\end{equation}
where $\sigma_\pm = \frac 12 (\sigma_x \pm \imath \sigma_y)$. The corresponding time-local relaxation rates (longitudinal and transversal) read
$$    \Gamma_{{\rm L},t} = \gamma_{+,t} + \gamma_{-,t} \ , \ \ \   \Gamma_{{\rm T},t} = \frac{\gamma_{+,t} + \gamma_{-,t}}{2} + 2 \gamma_{z,t}  \ , $$
and the transversal rate is doubly degenerated. If  $\gamma_{\pm,t}\geq 0$ and $\gamma_{z,t} \geq 0$ then Markovianity (in this case 2-divisibility) implies the standard relation
\begin{equation}   \label{2GG}
    2 \Gamma_{{\rm T},t} \geq \Gamma_{{\rm L},t} .
\end{equation}
Following \cite{HaCrLiAn2014} consider $\gamma_{\pm,t}=1$ and $\gamma_{z,t} = - \mu\, {\rm tanh}t$, that is, if $\mu > 0$ one of the rate is negative for $t > 0$. It turns out that such a generator still gives rise to a CPTP dynamical map if $\mu \leq \frac 12$ \cite{HaCrLiAn2014}. In that case $\Gamma_{{\rm L},t}=2$ and $\Gamma_{{\rm T},t}=1- 2\mu\, {\rm tanh}t$, and hence (\ref{2GG}) is violated for all $t > 0$. However, for $\mu \leq \frac 14$ the evolution is Schwarz-divisible, i.e., all propagators $\Lambda_{t,s}$ are unital Schwarz maps for $t \geq s \geq 0$.
\end{example}
}

\section{Conclusions}   \label{S-C}

In this work, we have derived a family of universal constraints for the relaxation rates of quantum-dynamical semigroups by extending the previously known constraint for GKLS generators to broader classes of maps.
We presented a new, purely algebraic derivation of the inequality
\begin{equation} \label{A}
 \Gamma_{\rm max} \leq \frac 1d \sum_{\ell=1}^{d^2-1} \Gamma_\ell \,,
\end{equation}
which holds for any GKLS generator of a completely positive, trace-preserving semigroup of a
$d$-level quantum system. Unlike an earlier proof based on classical Lyapunov theory \cite{JPA2025}, our approach relies solely on the fact that the corresponding generator is conditionally completely positive. Key elements of the proof include the use of the KMS inner product and the application of the Bendixson--Hirsch inequality.
Recall that the KMS inner product provides an important tool for the analysis of quantum detailed balance \cite{Fagnola2010,AmCa2021}.
Interestingly, our proof demonstrated that complete positivity is not essential for the validity of this bound and it suffices for the generator to be conditionally 2-positive, i.e., it is sufficient that $\mathfrak{L}$ gives rise to a 2-positive trace-preserving semigroup. For qubits, 2-positivity and complete positivity coincide, but for higher-dimensional systems, this distinction is crucial.

By further relaxing positivity assumptions to semigroups of unital Schwarz maps (a weaker condition than 2-positivity but stronger than positivity), we derived a modified bound
\begin{equation} \label{B}
 \Gamma_{\rm max} \leq \frac{2}{d+1} \sum_{\ell=1}^{d^2-1} \Gamma_\ell \,,
\end{equation}
which recovers a known qubit-case result \cite{CKM24}. For maps which are merely positive, only the trivial bound $ \Gamma_{\rm max} \leq \sum_{\ell} \Gamma_\ell $ holds, thus illustrating a clear hierarchy regarding the strength of constraints depending on the degree of positivity.

The derived constraints (\ref{A}) and (\ref{B}) also imply upper bounds on the number of steady states for the generators, with the tightest bounds corresponding to higher degrees of positivity.
For completely positive dynamics the bound `$d^2-2d+2$' was recently derived in \cite{Facchi-Amato}. Here we found a weaker bound `$d^2-d$' for 2-positive maps, the even weaker bound `$d^2-\frac 12 (d+1)$' for Schwarz maps.

Going one step further, if the generator $\mathfrak{L}_t$ is time-dependent, then $\Lambda_{t,0}$ is no longer a semigroup, but the dynamical map is instead defined via the time-ordered exponential $\Lambda_{t,0} = \mathcal{T} \exp\big(\int_0^t \mathfrak{L}_\tau d\tau\big)$, where $\mathcal{T}$ denotes a chronological product. If, for any $t\geq 0$, $\mathfrak{L}_t$ is a legitimate GKLS generator, then $\Lambda_{t,0}$ is not only completely positive but all intermediate maps (propagators)
$\Lambda_{t,s} = \mathcal{T} \exp\big(\int_s^t \mathfrak{L}_\tau d\tau\big)$ are completely positive for $t \geq s$. One calls such maps CP-divisible and the corresponding evolution Markovian \cite{NM1,NM2,NM3,NM4,ChrD2022}. It is, therefore, clear that for Markovian evolutions the rates $\Gamma_{k,t}$---which are now time-dependent---the constraint (\ref{A}) is still valid for all $t\geq 0$. Note, however, that even if CP-divisibility is violated, i.e., the evolution is non-Markovian, but if all propagators $\Lambda_{t,s}$ are at least 2-positive, then the constraint (\ref{A}) still holds. Such dynamical maps are called 2-divisible \cite{DC+Sabrina}. Hence, our result clarifies the essential role of 2-positivity, and at the same time offers new tools for diagnosing non-Markovianity via the framework of divisibility.
{Recall, that maps which are positive but not completely positive provide a basic theoretical tool for entanglement detection \cite{HHH,HHHH}. In particular 2-positive maps can be used to detect weakly entangled states, i.e., states with Schmidt number greater than two. Hence, whenever the dynamical map $\Lambda_{t,0}$ is 2-divisible and a bipartite state $\bm{\rho}$ is of Schmidt number at most two,  $({\rm id} \otimes \Lambda_{t,s})(\bm{\rho})$ still defines a legitimate quantum state in spite of the fact that the propagators $\Lambda_{t,s}$ are only 2-positive.\\
\indent Interestingly, the authors of the two seminal papers \cite{LinG1976,GoKoSu1976} used two different approaches to derive the structure of the generator of semigroups of completely positive maps. Gorini et al.~\cite{GoKoSu1976} worked in the Schr\"odinger picture and generalized conditional positivity
\begin{equation}  \label{c1}
      \< \psi|\mathfrak{L}(|\phi\>\<\phi)|\psi\> \geq 0 \ , \ \ \ \forall\ \psi \perp \phi ,
\end{equation}
to conditional $k$-positivity, and eventually conditional complete positivity. Using the property of the Choi matrix \cite{Choi-75} finally let them derive the structure of the generator.  Lindblad \cite{LinG1976} worked in the Heisenberg picture and generalized the dissipativity condition
\begin{equation}  \label{c2}
     \mathfrak{L}^\dagger(X^\dagger X) \geq \mathfrak{L}^\dagger(X^\dagger) X + X^\dagger \mathfrak{L}^\dagger(X) \,,
\end{equation}
to $k$-dissipativity, and eventually complete dissipativity.  Note, that (\ref{c2}) is stronger than (\ref{c1}) (unless we restrict (\ref{c2}) to hold only for Hermitian operators in which case they are equivalent). However, conditional complete positivity and complete dissipativity do coincide. Our results clearly show that already conditional 2-positivity and dissipativity give rise to different but universal constraints for the corresponding relaxation rates. This way the above---quite abstract mathematical---properties turned out to imply clear distinctions for physically measurable quantities.  Therefore, our approach can be seen as reminiscent of Bell’s theorem, which enabled direct experimental tests of assumptions about realism and locality.
}

Finally, let us address the elephant in the room: since the bound (\ref{CON}) holds for all 2-positive trace-preserving semigroups, and since it is also tight for completely positive dynamics (i.e., it cannot be improved further), how can one distinguish these two properties, spectrally?
In other words, are there structural differences between the spectrum of generators of 2-positive and generators of completely positive dynamics?
As explained just now, if such a simple spectral constraint (separating $2$-positivity and complete positivity) exists, it cannot be of the form (\ref{CON}), but it, by definition, still had to be expressible through the relaxation rates $\Gamma_\ell$ of $\mathfrak{L}$. {One way would be to look for nonlinear constraints. Another option might be to consider linear constraints but, e.g., for sums of say $m \geq 2$ maximal rates. }
Another important issue is how to generalize the constraints (\ref{A}) and (\ref{B}) for the infinite dimensional scenario. It would be interesting to investigate this problem for at least some classes of systems starting from, e.g., Gaussian semigroups.
We leave these challenging yet highly interesting questions as directions for future research.

\appendix

\section{The generators with faithful steady states are dense}\label{app_b}

The following result is not surprising but also not trivial, which is why we decided to prove it explicitly.
\begin{proposition}\label{prop_kpos_density}
 For any $k\in\mathbb N$ define ${\rm cP}_k$ (conditionally $k$-positive) as the set of all linear maps $\mathfrak{L}: \Md \to \Md$ such that $e^{t\mathfrak{L}}$ is $k$-positive and trace-preserving for all $t\geq 0$.
 Moreover, define ${\rm cP}_{k,+}$ as the set of all $\mathfrak{L}\in{\rm cP}_{k}$ such that $\mathfrak{L}(\omega)=0$ for some $\omega>0$. Then $\overline{{\rm cP}_{k,+}}= {\rm cP}_{k}$.
\end{proposition}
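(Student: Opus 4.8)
The plan is to establish the two inclusions $\overline{{\rm cP}_{k,+}}\subseteq{\rm cP}_k$ and ${\rm cP}_k\subseteq\overline{{\rm cP}_{k,+}}$ separately. The first is soft: since ${\rm cP}_{k,+}\subseteq{\rm cP}_k$ by definition, it suffices to show that ${\rm cP}_k$ is closed. But ${\rm cP}_k$ is precisely the set of conditionally $k$-positive, trace-preserving maps, by the equivalence recalled around \eqref{kCP}, and both defining conditions are closed. Trace-preservation ${\rm Tr}\,\mathfrak{L}(X)=0$ for all $X$ is a linear constraint, while conditional $k$-positivity \eqref{eq:kCP_alt_form} is an intersection over all pairs $\psi\perp\phi$ of the non-strict inequalities $\<\psi|[{\rm id}_k\otimes\mathfrak{L}](|\phi\>\<\phi|)|\psi\>\ge 0$, each continuous in $\mathfrak{L}$. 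As the space of superoperators on $\Md$ is finite-dimensional, ${\rm cP}_k$ is therefore closed, which gives the first inclusion.

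The substance lies in the density statement ${\rm cP}_k\subseteq\overline{{\rm cP}_{k,+}}$. Given $\mathfrak{L}\in{\rm cP}_k$, I would perturb it by the depolarizing generator $\mathfrak{D}(X):=\frac{\oper_d}{d}{\rm Tr}(X)-X$, setting $\mathfrak{L}_\epsilon:=\mathfrak{L}+\epsilon\,\mathfrak{D}$ for small $\epsilon>0$. Since $e^{t\mathfrak{D}}(X)=e^{-t}X+(1-e^{-t})\frac{\oper_d}{d}{\rm Tr}(X)$ is a convex mixture of the identity channel and the completely depolarizing channel, $\mathfrak{D}$ is conditionally completely positive, hence conditionally $k$-positive. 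Because conditional $k$-positivity is preserved under sums (the inequalities in \eqref{eq:kCP_alt_form} are linear in $\mathfrak{L}$) and both summands are trace-preserving, one gets $\mathfrak{L}_\epsilon\in{\rm cP}_k$; moreover $\mathfrak{L}_\epsilon\to\mathfrak{L}$ as $\epsilon\to 0^+$ since $\|\mathfrak{L}_\epsilon-\mathfrak{L}\|=\epsilon\|\mathfrak{D}\|$. It then remains only to produce a \emph{full-rank} stationary state of $\mathfrak{L}_\epsilon$, for which I would write down the explicit candidate
\begin{equation}
 \rho_\epsilon := \epsilon\int_0^\infty e^{-\epsilon t}\,e^{t\mathfrak{L}}\!\left(\tfrac{\oper_d}{d}\right) dt .
\end{equation}
The integral converges because $e^{t\mathfrak{L}}$ is positive and trace-preserving, hence bounded uniformly in $t$; and $\rho_\epsilon$ is a density operator, being an $(\epsilon e^{-\epsilon t})$-weighted average of the states $e^{t\mathfrak{L}}(\oper_d/d)$, with ${\rm Tr}\,\rho_\epsilon=\epsilon\int_0^\infty e^{-\epsilon t}\,dt=1$. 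An integration by parts using $\mathfrak{L}e^{t\mathfrak{L}}=\frac{d}{dt}e^{t\mathfrak{L}}$ and the vanishing of the boundary term at $t=\infty$ yields $\mathfrak{L}(\rho_\epsilon)=\epsilon\,\rho_\epsilon-\epsilon\frac{\oper_d}{d}$, which rearranges to $\mathfrak{L}_\epsilon(\rho_\epsilon)=0$, so $\rho_\epsilon$ is indeed stationary.

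The main — and essentially the only — obstacle is full-rankness of $\rho_\epsilon$, and here the integral representation does the work. At $t=0$ the integrand equals $\oper_d/d>0$, so by continuity of $t\mapsto e^{t\mathfrak{L}}(\oper_d/d)$ there exist $\delta>0$ and $c>0$ with $e^{t\mathfrak{L}}(\oper_d/d)\ge c\,\oper_d$ for all $t\in[0,\delta]$. Since the full integrand is positive semidefinite for every $t$, discarding the tail only decreases $\rho_\epsilon$ in the Loewner order, so that $\rho_\epsilon\ge\epsilon\int_0^\delta e^{-\epsilon t}\,c\,\oper_d\,dt=c(1-e^{-\epsilon\delta})\,\oper_d>0$. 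Hence $\rho_\epsilon>0$, giving $\mathfrak{L}_\epsilon\in{\rm cP}_{k,+}$, and letting $\epsilon\to 0^+$ establishes ${\rm cP}_k\subseteq\overline{{\rm cP}_{k,+}}$. Combined with the closedness of ${\rm cP}_k$, this yields the claimed equality $\overline{{\rm cP}_{k,+}}={\rm cP}_k$.
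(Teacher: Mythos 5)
Your proof is correct, and although it shares the paper's overall strategy---perturbing $\mathfrak{L}$ by the depolarizing generator and showing the perturbed generator $\mathfrak{L}_\epsilon=\mathfrak{L}+\epsilon\,\mathfrak{D}$ lies in ${\rm cP}_{k,+}$---the key existence step is genuinely different. The paper proceeds in two stages: Duhamel's formula gives $e^{t\mathfrak{L}_\epsilon}(\bm{\rho})=e^{-\epsilon t}e^{t\mathfrak{L}}(\bm{\rho})+{\rm Tr}\bm{\rho}\;G(t)$ with $G(t)>0$ for small $t>0$, so the perturbed semigroup maps every nonzero $\bm{\rho}\geq 0$ to a full-rank operator; then Brouwer's fixed-point theorem produces a fixed state $\bm{\sigma}$ of the time-$\varepsilon/2$ map, and the time average $\int_0^{\varepsilon/2}e^{t\mathfrak{L}_\epsilon}(\bm{\sigma})\,dt$ converts that fixed point into a full-rank stationary state of the generator. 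You bypass both the fixed-point theorem and the averaging trick by exhibiting the stationary state in closed form,
\begin{equation*}
 \rho_\epsilon=\epsilon\int_0^\infty e^{-\epsilon t}\,e^{t\mathfrak{L}}\!\left(\tfrac{\oper_d}{d}\right)dt
 =\epsilon\left(\epsilon\,{\rm id}-\mathfrak{L}\right)^{-1}\!\left(\tfrac{\oper_d}{d}\right),
\end{equation*}
verifying stationarity by integration by parts and full-rankness by truncating the integral to a neighbourhood of $t=0$, where the integrand stays uniformly bounded below. All steps check out: the integral converges since $e^{t\mathfrak{L}}(\oper_d/d)$ is a state for every $t$, the boundary term at infinity vanishes for the same reason, and the algebra $\mathfrak{L}(\rho_\epsilon)=\epsilon\rho_\epsilon-\epsilon\oper_d/d$ indeed yields $\mathfrak{L}_\epsilon(\rho_\epsilon)=0$. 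Notably, your $\rho_\epsilon$ is exactly $\lim_{t\to\infty}G(t)$ in the paper's notation (with $\epsilon=1/n$), so the two arguments identify the same state; yours does so constructively and more economically, while the paper's route yields the slightly stronger byproduct that $e^{t\mathfrak{L}_\epsilon}$ is positivity-improving on all states, which is more than the proposition needs. Two minor remarks: your claim that ${\rm cP}_k$ is a convex cone (to conclude $\mathfrak{L}_\epsilon\in{\rm cP}_k$) leans on the equivalence with conditional $k$-positivity, i.e., the paper's Proposition~5; alternatively the Lie--Trotter formula gives this without invoking that equivalence. Likewise, closedness of ${\rm cP}_k$ (which the paper only cites) also follows directly, since $k$-positivity and trace-preservation are closed conditions on each map $e^{t\mathfrak{L}}$ and $\mathfrak{L}\mapsto e^{t\mathfrak{L}}$ is continuous.
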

\begin{proof}
 $\subseteq$: Obvious, because ${\rm cP}_{k,+}\subseteq{\rm cP}_{k}$ by definition and ${\rm cP}_{k}$ is closed \cite[Prop.~1.14]{LNM1552}.
 $\supseteq$: Let $\mathfrak{L}\in{\rm cP}_k$. Define
 \begin{equation}
 \mathfrak{L}_0(\bm{\rho}) = \frac 1d \oper_d {\rm Tr}\bm{\rho} - \bm{\rho}\,.
 \end{equation}
 The idea will be to show that for any $n\in\mathbb N$ the generator $\mathfrak{L} + \frac 1n \mathfrak{L}_0$ is actually in ${\rm cP}_{k,+}$, which would conclude the proof because then
\begin{equation}
 \mathfrak{L} = \lim_{n\to\infty} \mathfrak{L} + \frac1n \mathfrak{L}_0 \in \overline{{\rm cP}_{k,+}}\,.
\end{equation}
 We proceed in 
 two steps.

 \begin{enumerate}
 \item[Step 1:] {First, recall Duhamel's integral formula \cite[Ch.~1, Thm.~5.1]{DF84}, \footnote{We thank one of the anonymous referees for sharing with us how Duhamel's formula can be used to substantially simplify the proof.}
\begin{equation}\label{eq:duhamel}
    e^{tA} - e^{tB} = \int_0^t ds\, e^{sB} (A-B) e^{(t-s)A} \,.
\end{equation}
Choosing $A = \mathfrak{L} + \frac 1n \mathfrak{L}_0 $ and $B= \mathfrak{L} - \frac 1n {\rm id}_d $ one finds $A-B=\frac1{nd}\oper_d {\rm Tr}\bm{\rho} $ and thus
$$ (A- B)e^{(t-s)A}(\bm{\rho}) =  \frac 1{nd}\oper_d {\rm Tr}\big( e^{(t-s)A} (\bm{\rho}) \big)=  \frac 1{nd}\oper_d {\rm Tr}\bm{\rho} , $$
for any $\bm{\rho}\in \Md$.
Inserting this into~\eqref{eq:duhamel} yields
\begin{equation}
    e^{t(\mathfrak{L} + \frac 1n \mathfrak{L}_0)}(\bm{\rho}) =  e^{-\frac tn}e^{t\mathfrak{L}} (\bm{\rho}) +{\rm Tr}\bm{\rho} \; G(t) ,
\end{equation}
with
\begin{equation}
    G(t) = \frac 1n \int_0^t ds\, e^{- \frac sn} e^{s \mathfrak{L}} \left(\frac 1d \oper_d\right) .
\end{equation}   }

Note that $G(t)$ is an analytic curve of Hermitian matrices which is independent of $\bm{\rho}$.
{Now, since the set of full-rank matrices is open, by
continuity there exists $\epsilon>0$ such that $e^{s \mathfrak{L}} \left(\frac 1d \oper_d\right)$ is full rank for all $s \in (0, \epsilon)$. Hence $G(\epsilon) > 0$ and, in particular,}
\begin{equation}
 e^{t( \mathfrak{L} + \frac 1n \mathfrak{L}_0 )}(\bm{\rho}) > 0
\end{equation}
 for all $t\in(0,\varepsilon)$ and all non-zero $\bm{\rho}\geq 0$.

 \item[Step 2:] Because $\mathfrak{L}':= \mathfrak{L} + \frac 1n \mathfrak{L}_0\in {\rm cP}_k$, the map $e^{\frac{\varepsilon}{2}\mathfrak{L}'}$ in particular is positive and trace-preserving. Hence by the Brouwer fixed point theorem \cite{Brouwer11} there exists a state $\bm{\sigma}$ such that
 $ e^{\frac{\varepsilon}{2}\mathfrak{L}'}(\bm{\sigma})= \bm{\sigma} $.
 Using this we define
 $$
 \bm{\sigma}':=\int_0^{\frac{\varepsilon}{2}}e^{t\mathfrak{L}'}(\bm{\sigma})\,dt \,,
 $$
 and we claim that $\bm{\sigma'}>0$ and, more importantly, $\mathfrak{L}'(\bm{\sigma}')=0$. This would show that $\mathfrak{L}'\in {\rm cP}_{k,+}$ for all $n$, as desired.
 For the proof of this we follow the argument of \cite[Prop.~5]{BNT08b}:
 \begin{eqnarray*}
 \mathfrak{L}'(\bm{\sigma}') =\int_0^{\frac{\varepsilon}{2}}\mathfrak{L}'e^{t\mathfrak{L}'}(\bm{\sigma})\,dt = \Big(\int_0^{\frac{\varepsilon}{2}}\mathfrak{L}'e^{t\mathfrak{L}'}\,dt\Big)(\bm{\sigma}) =\big(e^{\frac{\varepsilon}{2}\mathfrak{L}'}-{\rm id}\big)(\bm{\sigma})=0\,.
 \end{eqnarray*}
 Moreover, $\bm{\sigma}'>0$ because for all vectors $x\neq 0$, $t\mapsto\langle x|e^{t\mathfrak{L}'}(\bm{\sigma})|x\rangle$ is a continuous, non-negative curve which is strictly positive at $t=\frac{\varepsilon}{2}$ (because $e^{\frac{\varepsilon}{2}\mathfrak{L}'}(\bm{\sigma})>0$ by our previous argument). Hence
 $$
 \langle x|\bm{\sigma}'|x\rangle=\int_0^{\frac{\varepsilon}{2}}\langle x|e^{t\mathfrak{L}'}(\bm{\sigma})|x\rangle\,dt>0 \,,
 $$
 which concludes the proof.\qedhere
 \end{enumerate}
\end{proof}

\section{Proof of Proposition 6}\label{sec_app_proof_pro_II}

Let us start by fixing two orthonormal bases $\{|1\>,|2\>\}$ in $\mathbb{C}^2 $ and $\{|e_1\rangle,\ldots,|e_d\rangle\}$ in $\mathbb{C}^d$. With this we define
\begin{equation}\label{eq:def_phi_pm}
 |\phi^\pm_{ij}\> = |1\> \otimes |e_i\> \pm |2\> \otimes |e_j\> \in \mathbb{C}^2 \otimes \mathbb{C}^d\,.
\end{equation}
Clearly for all $i \neq j$ one has $|\phi^+_{ij}\> \perp |\phi^-_{ij}\>$ and hence
\begin{equation}
 \< \phi^-_{ij} |\, [{\rm id}_2 \otimes \mathfrak{L}](|\phi^+_{ij}\>\<\phi^+_{ij}|)\, |\phi^-_{ij}\> \geq 0 \,,
\end{equation}
for any pair $i \neq j$ due to the fact that $\mathfrak{L}$ is conditionally 2-positive. Expanding this using (\ref{eq:def_phi_pm}) yields
\begin{equation*}
 [{\rm id}_2 \otimes \mathfrak{L}](|\phi^+_{ij}\>\<\phi^+_{ij}|) = |1\>\<1| \otimes \mathfrak{L}(|e_i\>\<e_i|) +
 |1\>\<2| \otimes \mathfrak{L}(|e_i\>\<e_j|) + |2\>\<1|\otimes \mathfrak{L}(|e_j\>\<e_i|) + |2\>\<2| \otimes \mathfrak{L}(|e_j\>\<e_j|) \,,
\end{equation*}
and hence
\begin{equation*}
 \< \phi^-_{ij} |\, [{\rm id}_2 \otimes \mathfrak{L}](|\phi^+_{ij}\>\<\phi^+_{ij}|)\, |\phi^-_{ij}\> = \<e_i|\mathfrak{L}(|e_i\>\<e_i|)|e_i\> + \<e_j|\mathfrak{L}(|e_j\>\<e_j|)|e_j\> - \<e_i|\mathfrak{L}(|e_i\>\<e_j|)|e_j\> - \<e_j|\mathfrak{L}(|e_j\>\<e_i|)|e_i\>\,.
\end{equation*}
Finally, recalling that
\begin{equation} \label{B3}
 {\rm Tr}\, \mathfrak{L} = \sum_{i,j=1}^d \<e_i|\mathfrak{L}(|e_i\>\<e_j|)|e_j\> = {\rm Tr}\, \mathcal{K} + \sum_{i\neq j} \<e_i|\mathfrak{L}(|e_i\>\<e_j|)|e_j\> \,,
\end{equation}
one computes
\begin{eqnarray}
 \sum_{i\neq j} \< \phi^-_{ij} |\, [{\rm id}_2 \otimes \mathfrak{L}](|\phi^+_{ij}\>\<\phi^+_{ij}|)\, |\phi^-_{ij}\> &=& \sum_{i\neq j} \left( \<e_i|\mathfrak{L}(|e_i\>\<e_i|)|e_i\> + \<e_j|\mathfrak{L}(|e_j\>\<e_j|)|e_j\> \right) \nonumber \\ &-& \sum_{i\neq j} \left( \<e_i|\mathfrak{L}(|e_i\>\<e_j|)|e_j\> + \<e_j|\mathfrak{L}(|e_j\>\<e_i|)|e_i\> \right)\,.
\end{eqnarray}
Now,
\begin{equation}
 \sum_{i\neq j} \left( \<e_i|\mathfrak{L}(|e_i\>\<e_i|)|e_i\> + \<e_j|\mathfrak{L}(|e_j\>\<e_j|)|e_j\> \right) = 2(d-1)\, {\rm Tr}\, \mathcal{K} \,,
\end{equation}
and due to (\ref{B3})
\begin{equation} \label{BLK}
 \sum_{i\neq j} \left( \<e_i|\mathfrak{L}(|e_i\>\<e_j|)|e_j\> + \<e_j|\mathfrak{L}(|e_j\>\<e_i|)|e_i\> \right) = 2({\rm Tr}\, \mathfrak{L} - {\rm Tr}\, \mathcal{K}) \,,
\end{equation}
which implies
\begin{eqnarray}
 \sum_{i\neq j} \< \phi^-_{ij} |\, [{\rm id}_2 \otimes \mathfrak{L}](|\phi^+_{ij}\>\<\phi^+_{ij}|)\, |\phi^-_{ij}\> = 2(d\,{\rm Tr}\, \mathcal{K} - {\rm Tr}\, \mathfrak{L} ) \geq 0 \,,
\end{eqnarray}
and finally proves Proposition 6.

\section{Proof of Proposition 10} \label{sec_app_proof_pro_III}

The dissipativity condition

$$  \mathfrak{L}^\dagger(X^\dagger X) \geq \mathfrak{L}^\dagger(X^\dagger) X + X^\dagger \mathfrak{L}^\dagger(X) \,, $$
applied to $X = |e_i\>\<e_j|$ gives
\begin{equation}
 \mathfrak{L}^\dagger(|e_j\>\<e_j|) \geq \mathfrak{L}^\dagger(|e_j\>\<e_i|) |e_i\>\<e_j| + |e_j\>\<e_i| \mathfrak{L}^\dagger(|e_i\>\<e_j|) \,,
\end{equation}
and hence
\begin{equation}
 \mathcal{K}_{jj} \equiv \<e_j|\, \mathfrak{L}^\dagger(|e_j\>\<e_j|)\,|e_j\> \geq \<e_j|\,\mathfrak{L}^\dagger(|e_j\>\<e_i|)\, |e_i\> + \<e_i|\, \mathfrak{L}^\dagger(|e_i\>\<e_j|)\,| e_j\>\,.
\end{equation}
This implies
\begin{equation}
 \sum_{i\neq j} \mathcal{K}_{jj} \geq \sum_{i\neq j} \left( \<e_j|\,\mathfrak{L}^\dagger(|e_j\>\<e_i|)\, |e_i\> + \<e_i|\, \mathfrak{L}^\dagger(|e_i\>\<e_j|)\,| e_j\> \right) \equiv 2({\rm Tr}\, \mathfrak{L} - {\rm Tr}\, \mathcal{K}) \,,
\end{equation}
where we used (\ref{BLK}). Altogether, one computes
\begin{equation}
 \sum_{i\neq j} \mathcal{K}_{jj} = (d-1)\, {\rm Tr}\, \mathcal{K} \,,
\end{equation}
and hence
\begin{equation}
 (d-1) \, {\rm Tr}\, \mathcal{K} \geq 2({\rm Tr}\, \mathfrak{L} - {\rm Tr}\, \mathcal{K}) \,,
\end{equation}
which proves Proposition 10.

\section*{Acknowledgments}
	
DC was supported by the Polish National Science Center
under Project No. 2018/30/A/ST2/00837.
FvE is funded by the \textit{Deutsche Forschungsgemeinschaft} (DFG, German Research Foundation) -- project number 384846402, and supported by the Einstein Foundation (Einstein Research Unit on Quantum Devices) and the MATH+ Cluster of Excellence.
G.K. was supported by JSPS KAKENHI Grants Nos. 24K06873. PMG acknowledges support by the CoE in Randomness and Structures (FiRST) of the Research Council of Finland (funding decision number: 346305). {We thank the anonymous referees for several valuable comments and remarks which enabled us to significantly improve the paper's presentation. In particular, we are grateful to one of the referees for suggesting the use of Duhamel's integral formula to substantially simplify the proof
of the fact that the generators with faithful steady states are dense.}

\vspace{.2cm}

	\bibliography{LGKS_TM}{} 
	\bibliographystyle{apsrev4-2}
\end{document}